\newtheorem{proposition}{Proposition}
\newtheorem{theorem}{Theorem}
\newtheorem{corollary}{Corollary}
\theoremstyle{definition}
\newtheorem{remark}{Remark}
\newtheorem{example}{Example}
\newtheorem{definition}{Definition}
\newcommand{\R}{\mathbb{R}} 
\newcommand{\C}{\mathbb{C}} 
\newcommand{\F}{\mathbb{F}} 
\newcommand{\Z}{\mathbb Z} 
\newcommand{\T}{\mathbb T} 
\newcommand{\Tr}[1]{{\rm Tr}\, #1} 
\newcommand{\e}{{\rm e}}
\renewcommand{\aa}{\mathcal{A}}
\newcommand{\bb}{\mathcal{B}}
\newcommand{\oo}{\mathcal{O}}
\newcommand{\mm}{\mathcal{M}}
\newcommand{\qq}{\mathcal{Q}}
\newcommand{\yy}{\mathcal{Y}}
\newcommand{\lf}{\mathfrak{l}}
\newcommand{\mf}{\mathfrak{m}}
\newcommand{\SL}{{\rm SL}(V)}
\newcommand{\Sym}{{\rm Sym}(V)}
\newcommand{\hh}{\mathcal{H}} 
\newcommand{\lh}{\mathcal{L(H)}} 
\newcommand{\elle}[1]{\mathcal{L}(#1)} 
\newcommand{\uelle}[1]{\mathcal{U}(#1)} 
\newcommand{\ip}[2]{\left\langle\,#1\,|\,#2\,\right\rangle} 
\newcommand{\sym}[2]{S\left(#1 , #2\right)} 
\newcommand{\dual}[2]{b_S\left(#1 , #2\right)} 
\newcommand{\tr}[1]{{\rm tr}\left[#1\right]} 
\newcommand{\ran}{\textrm{ran}\,} 
\newcommand{\id}{\mathbbm{1}} 
\newcommand{\lam}{\lambda}
\newcommand{\eps}{\varepsilon}
\newcommand{\ff}{\mathcal{F}}
\newcommand{\vd}{\mathbf{d}} 
\newcommand{\ve}{\mathbf{e}} 
\newcommand{\vu}{\mathbf{u}} 
\newcommand{\vv}{\mathbf{v}} 
\newcommand{\vw}{\mathbf{w}} 
\newcommand{\vx}{\mathbf{x}} 
\newcommand{\vy}{\mathbf{y}} 
\newcommand{\vnull}{\mathbf{0}}
\newcommand{\Po}{\mathsf{Q}}
\newcommand{\Qo}{\mathsf{Q}}
\renewcommand{\ss}{\mathcal{D}}
\newcommand{\Aff}[1]{L(#1)} 
\newcommand{\Af}[2]{L_{#1}(#2)} 
\begin{document}\setlength{\arraycolsep}{2pt}

\title[Covariant mutually unbiased bases]{Covariant mutually unbiased bases}

\author[Carmeli]{Claudio Carmeli}
\thanks{Claudio Carmeli, D.I.M.E., Universit\`a di Genova, Via Magliotto 2, I-17100  Savona, Italy \\
email: claudio.carmeli@gmail.com}

\author[Schultz]{Jussi Schultz}
\thanks{Jussi Schultz, Dipartimento di Matematica, Politecnico di Milano, Piazza Leonardo da Vinci 32, I-20133 Milano, Italy, and Turku Centre for Quantum Physics, Department of Physics and Astronomy, University of Turku, FI-20014 Turku, Finland\\
email: jussi.schultz@gmail.com}

\author[Toigo]{Alessandro Toigo}
\thanks{Alessandro Toigo, Dipartimento di Matematica, Politecnico di Milano, Piazza Leonardo da Vinci 32, I-20133 Milano, Italy, and I.N.F.N., Sezione di Milano, Via Celoria 16, I-20133 Milano, Italy\\
email: alessandro.toigo@polimi.it}

\date{\today}

\begin{abstract}
The connection between maximal sets of mutually unbiased bases (MUBs) in a prime-power dimensional Hilbert space and finite phase-space geometries is well known. In this article we classify MUBs according to their degree of covariance with respect to the natural symmetries of a finite phase-space, which are the group of its affine symplectic transformations. We prove that there exist maximal sets of MUBs that are covariant with respect to the full group only in odd prime-power dimensional spaces, and in this case their equivalence class is actually unique. Despite this limitation, we show that in dimension $2^r$ covariance can still be achieved by restricting to proper subgroups of the symplectic group, that constitute the finite analogues of the oscillator group. For these subgroups, we explicitly construct the unitary operators yielding the covariance.
\end{abstract}

\maketitle

\section{Introduction}

As already outlined in the seminal work of Schwinger \cite{Schwinger60} and later clarified by Bandyopadhyay, Boykin and Roychowdhury \cite{BaBoRoVa02}, the construction of mutually unbiased bases (MUBs) is closely related to the representation theory of finite Heisenberg groups. This connection explains the considerable interest that MUBs have raised among the mathematical community in recent times, and which has been further strengthened by the wealth of symmetry structures involved in this topic \cite{Ho05}. It is well known that there exists a striking analogy between the construction of a maximal set of MUBs in a prime-power dimensional Hilbert space and the definition of the quadrature observables in quantum homodyne tomography. This analogy originates from the possibility to extend the concept of phase-space to finite dimensional systems \cite{Wootters87,GiHoWo04}, and introduce objects like the Schr\" odinger representation, the symplectic group and its metaplectic representation also in the finite dimensional setting \cite{Vourdas04,ApBeCh08}.

Here we recall that a {\em finite phase-space} is an affine space modeled on a $2$-dimensional symplectic vector space over a finite field. Associating a finite phase-space with a prime-power dimensional quantum system simply consists in establishing a correspondence between quantum states and functions on such a space. Like in quantum homodyne tomography, this is done by means of the (finite) Wigner transform; its definition relies on two choices: (a) the selection of a maximal set of $d+1$ MUBs in the $d$-dimensional Hilbert space of the system; (b) their labeling with the affine lines of the phase-space. In this way, each basis corresponds to a set of $d$ parallel affine lines, being the finite dimensional analogue of a quadrature observable along the common direction of the lines; moreover, different MUBs are associated with sets of parallel lines having different directions, in agreement with the fact that there are exactly $d+1$ such directions in the finite phase-space \cite{Wootters87}. It is worth stressing that in (b) different labelings of the same $d+1$ MUBs can result in inequivalent definitions of the Wigner map. Therefore, the ordering of the bases actually is as relevant as their choice.

When representing quantum states as functions on the phase-space, it is important that the affine and symplectic structures of the phase-space are somehow taken into account and preserved. This is exactly the point where covariance enters the game. Indeed, the group of phase-space translations acts on the set of quantum states by means of the Schr\" odinger representation \cite{AusTol79,BaIt86,Var95,Vourdas97}; moreover, when $p$ is odd, this representation can be extended to the whole group of affine symplectic maps by means of the Weil (or metaplectic) representation \cite{We64,Ho73,Ge77,Ne02}. It is then desirable that the finite Wigner transform intertwines the combined actions of the translation and symplectic groups on the phase-space with the corresponding actions on quantum states, or, equivalently, that its associated set of ordered MUBs is {\em covariant} with respect to such group actions.

The study of the maximal sets of MUBs that are covariant with respect to the phase-space translations goes back to \cite{GiHoWo04}. In this paper, the authors considered a {\em particular} Schr\" odinger representation and classified all the equivalence classes of translation covariant MUBs associated with it. Here, equivalence is understood in the sense of equivalence under unitary conjugation, and we again stress that the ordering of the MUBs, i.e., their labeling with the phase-space lines, actually matters. The classification of \cite{GiHoWo04} is then achieved by uniquely associating a function $\Gamma : {\rm Aff}\times{\rm Aff}\times{\rm Aff}\to\C$ to each equivalence class of ordered MUBs, where ${\rm Aff}$ is the set of affine lines in the phase-space. This approach allows to determine the exact number of inequivalent translation covariant MUBs associated with the given Schr\" odinger representation; moreover, it makes clear that not all these MUBs are on the same footing, since some of them are `more symmetric' than others. Indeed, when one extends the covariance group to also include the symplectic transformations, it turns out that only a restricted set of MUBs are still covariant with respect to the enlarged symmetries. Moreover, while in the odd prime-power dimensional case it is always possible to find an equivalence class of MUBs that are covariant with respect to {\em the whole} symplectic group, it is unclear whether an analogous fact still holds for $2^r$-dimensional systems.

These considerations motivate a deeper analysis of the symmetry properties of covariant MUBs, which actually is the aim of the present paper. Our investigation will proceed in steps, as we will progressively focus on covariance with respect to larger subgroups of the whole group of affine symplectic phase-space transformations. Contrary to \cite{GiHoWo04}, we do not a priori fix any representation of the subgroup $G$ at hand, but we rather let such a representation directly arise from the symmetry properties of the MUBs under consideration. More precisely, for us a maximal set of MUBs is {\em covariant with respect to $G$} when the action of $G$ on the set of phase-space lines permutes the MUBs into equivalent ones. However, we do not make any assumption on the unitary operators yelding the equivalence.

Following \cite{GiHoWo04}, the basic symmetry we consider at the beginning of our analysis is covariance with respect to the phase-space translations. We will show that our approach allows more equivalence classes of translation covariant MUBs than the ones found in \cite{GiHoWo04}, reflecting the fact that, if the Schr\"odinger representation is not a priori fixed, inequivalent MUBs can be associated with different symplectic structures on the phase-space. However, quite surprisingly the existence of inequivalent translation covariant MUBs only relies on the possibility to permute the phase-space lines labeling each basis. Indeed, we will prove in Theorem \ref{teo:permutation2} that all phase-space translation covariant MUBs are unitarily equivalent {\em as sets of unordered bases}. This fact makes it clear that the choice of the correspondence between lines and MUBs is at the heart of any description of maximal MUBs by means of finite-phase space geometries, and in particular of the classifications made in \cite{GiHoWo04} and in the present paper. In particular, it shows that the different degrees of symmetry of covariant MUBs are only an effect of their labelings. Covariant MUBs are thus pointed out as a very special subset of the whole collection of maximal MUBs in a prime-power dimensional Hilbert space. Indeed, unitary equivalence of unordered noncovariant maximal MUBs does not hold in general \cite{Kan12}.

A fundamental tool in our analysis is a characterization of the equivalence classes of phase-space translation covariant MUBs that is alternative to the description by the functions $\Gamma$ used in \cite{GiHoWo04}. Indeed, we will prove that such classes of maximal MUBs are in a bijective correspondence with a special family of multipliers of the group of phase-space translations, which we call {\em Weyl multipliers}. The additional covariance properties of translation covariant MUBs are then directly related to the invariance properties of their associated Weyl multipliers. Studying the latter, we will be able to completely describe the classes of translation covariant MUBs that are also covariant with respect to specific subgroups of the symplectic group.

In particular, it turns out that there exist MUBs that are covariant with respect to the whole group of affine symplectic phase-space transformations if and only if the Hilbert space of the system is odd prime-power dimensional, and in this case their equivalence class is actually unique. We thus recover the analogue for maximal MUBs of a similar fact holding for covariant Wigner functions \cite{Gr06,Zh15}. Nevertheless, restricting to smaller subgroups $G$ properly containing the phase-space translations, $G$-covariant MUBs still exist even in dimension $2^r$. A particularly important instance, when $G$ is the analogue of the Euclidean group of quantum homodyne tomography, is the argument of Section \ref{sec:nonsplit}. The results there should be compared with the similar ones contained in \cite{GiHoWo04,Su07}, where however the construction of the unitary operators yielding the full $G$-covariance was somehow unclear (see Remark \ref{rem:phase} in Section \ref{sec:sympl}).
 
Now we briefly sketch the plan of the paper. Section \ref{sec:phasespace+quadsyst} introduces the $2$-dimensional affine space over a finite field, and defines the correspondence between affine lines of the space and maximal MUBs. According to the usual approach \cite{Ho05,Wootters87,GiHoWo04,Ivanovic81}, there and in the rest of the article we will view MUBs as sets of $1$-dimensional projections, which we call {\em quadrature systems} in analogy with their counterparts in quantum homodyne tomography.
In Section \ref{sec:covquadsyst}, we describe how the affine group acts on the set of all lines of the $2$-dimensional finite affine space, and we restrict our attention to quadrature systems that are covariant with respect to such an action. Section \ref{sec:Weylsyst} specializes to maximal MUBs that are covariant with respect to the group of the affine translations and introduces their associated Schr\"odinger representations, or, more precisely, the {\em Weyl systems} they generate. There we show that every translation covariant quadrature system endows the affine space with a canonical symplectic form, i.e., induces a {\em phase-space structure} on it. Through Weyl systems, the correspondence between translation covariant MUBs and Weyl multipliers is explained and studied in Sections \ref{sec:equiv} and \ref{sec:exis}. In Section \ref{sec:sympl} we enlarge the translation symmetry to include also nontrivial subgroups of the symplectic group, and establish the equivalence between the extended covariance properties of MUBs and the corresponding invariances of their associated Weyl multipliers. In Section \ref{sec:nonsplit} we concentrate on extended covariance with respect to {\em maximal nonsplit toruses} in the symplectic group, which are the analogues of the oscillator group of quantum homodyne tomography. Finally, in Section \ref{sec:spin1/2} we illustrate our results in the simplest possible example, that is, the $2$-dimensional qubit system, and show that this application already contains all the special features of the even prime-power dimensional case. Two appendices are provided at the end of the paper: Appendix \ref{app:projrep} reviews the main facts on projective representations that are needed in the paper; Appendix \ref{app:mult} provides an explicit construction of a Weyl multiplier in even prime-power dimensions.

{\bf Notations.}
The cardinality of any finite set $X$ is denoted by $|X|$.\\
In this paper, $\F$ will always be a finite field with characteristic $p$. We denote by ${\rm Tr} : \F \to \Z_p$ the trace functional of $\F$ over the cyclic field $\Z_p$ (see \cite[Section VI.5]{LanAlg} for the definition of ${\rm Tr}$). Moreover, $\F_*$ is the cyclic group of nonzero elements in $\F$ \cite[Theorem V.5.3]{LanAlg}. As usual, $\C$ is the field of complex numbers, and $\T = \{z\in\C \mid |z|=1\}$ is the group of complex phase factors.\\
By {\em Hilbert space} we always mean a finite dimensional complex Hilbert space. If $\hh$ is a Hilbert space, $\lh$ denotes the $C^*$-algebra of all linear operators on $\hh$. $\id\in\lh$ is the identity operator, and $\uelle{\hh} := \{U\in\lh \mid U^*U = \id\}$ is the group of unitary operators on $\hh$. The linear space $\lh$ becomes a Hilbert space when it is endowed with the Hilbert-Schmidt inner product $\ip{A}{B}_{HS} = \tr{AB^*}$ for all $A,B\in\lh$.

\section{Quadrature systems for finite affine spaces}\label{sec:phasespace+quadsyst}

In this section, we introduce the two main geometrical objects treated in the paper: the $2$-dimensional affine space over a finite field and the set of all its affine lines. Furthermore, we establish a correspondence between affine lines and maximal sets of MUBs in full generality, and preliminarily study the elementary properties of this correspondence.

Let $V$ be a vector space over the finite field $\F$ with $\dim_\F V = 2$. We recall that a set $\Omega$ is a $2$-dimensional {\em affine space} if it carries an action of the additive abelian group $V$ which is free and transitive. In particular, $\Omega$ is a finite set with cardinality $|\Omega|=|V|=|\F|^2$. The vector space $V$ is the {\em translation group} of $\Omega$, and we write $(\Omega,V)$ to stress the affine structure of $\Omega$. If $x\in\Omega$ and $\vu\in V$, we use the standard notation $x+\vu$ for the action of $\vu$ on $x$. Similarly, for any $x,y\in\Omega$, we denote by $\vu_{x,y}$ the unique vector in $V$ such that $x+\vu_{x,y} = y$. 

Given a $2$-dimensional affine space $(\Omega,V)$, we let $\ss$ be the set of $1$-dimensional subspaces of $V$, i.e.,
\begin{equation*}
\ss = \{ D\subset V \mid \text{$D$ is a $\F$-linear subspace and $\dim_\F D = 1$} \} \, ,
\end{equation*}
and we call each $D\in\ss$ a {\em direction} of $\Omega$.
For any $\vv\in V$, we write $\F\vv = \{\alpha\vv\mid\alpha\in\F\}$. Note that, if $\vv$ is nonzero, then $\F\vv\in\ss$; otherwise, $\F\vv=\{\vnull\}$. There is only a finite set of directions in $\ss$. Indeed, $\bigcup_{D\in\ss} D = V$ and $D_1\cap D_2 = \{\vnull\}$ if $D_1\neq D_2$, which, toghether with the fact that $|D|=|\F|$ for all $D\in\ss$, imply $|\ss|=|\F|+1$.

An {\em affine line} (or simply {\em line}) in $(\Omega,V)$ passing through $x\in\Omega$ and parallel to the direction $D\in\ss$ is the subset $x+D = \{x+\vd\mid\vd\in D\}\subset\Omega$. We write $\Aff{\Omega}$ for the collection of all affine lines in $(\Omega,V)$, and we also use the alternative notation $\lf,\mf$ etc.~for the elements of $\Aff{\Omega}$. The set $\Af{D}{\Omega} = \{x+D \mid x\in\Omega\}$ is the subset of $\Aff{\Omega}$ consisting of the lines parallel to the direction $D$. Note that the collection of subsets of parallel lines $\{\Af{D}{\Omega} \mid D\in\ss\}$ forms a partition of the whole set of lines $\Aff{\Omega}$. On the other hand, for a fixed direction $D$, the set of parallel lines $\Af{D}{\Omega}$ constitutes a partition of $\Omega$. For this reason and the equality $|x + D| = |D| = |\F|$ for all $x+D\in\Af{D}{\Omega}$, we have $|\Af{D}{\Omega}| = |\F|$. It follows that $\Aff{\Omega}$ also is finite, with $|\Aff{\Omega}| = |\ss||\Af{D}{\Omega}| = |\F|(|\F| + 1)$.

The group $V$ translates lines in $\Aff{\Omega}$ preserving their directions: if $\lf = x+D$ is a line and $\vv\in V$, we denote by $\lf+\vv = x+\vv+D$ the translate of $\lf$ by the vector $\vv$. The action of $V$ on the set of parallel lines $\Af{D}{\Omega}$ is transitive, and $D$ is the stabilizer subgroup of any line $\lf\in\Af{D}{\Omega}$; hence, the action of $V$ factors to a free and transitive action of the quotient group $V/D$ on $\Af{D}{\Omega}$. As a consequence, if $D'\neq D$ and $\lf\in\Af{D}{\Omega}$, we have $\Af{D}{\Omega} = \{\lf+\vd'\mid\vd'\in D'\}$ by the isomorphism $D'\simeq V/D$.

\begin{remark}\label{rem:explicit1}
A concrete and standard realization of the affine space $(\Omega,V)$ is obtained by setting $\Omega = V = \F^2$, that is, the set of $2$-component column arrays with entries in $\F$. The $\F$-linear vector space structure of $V$ is clear, and the action of a vector $\vv=(\alpha_1,\alpha_2)^T\in V$ on a point $x=(\gamma_1,\gamma_2)^T\in\Omega$ is by componentwise summation: $x+\vv = (\gamma_1 + \alpha_1 , \gamma_2 + \alpha_2)^T$.
The directions of $\Omega$ are
$$
\ss = \{\F(1,\alpha)^T\mid\alpha\in\F\} \cup \{\F(0,1)\}
$$
and the corresponding sets of parallel lines are
\begin{align*}
\Af{\F(1,\alpha)^T}{\Omega} & = \{\{(\lam,\beta+\lam\alpha)^T\mid\lam\in\F\} \mid \beta\in\F\} \\
\Af{\F(0,1)^T}{\Omega} & = \{\{(\beta,\lam)^T\mid\lam\in\F\} \mid \beta\in\F\} \,.
\end{align*}
\end{remark}

Now we are ready to introduce maximal sets of MUBs and associate them to our affine space $(\Omega,V)$. A convenient way to do this is by means of the projection operators on each vector of the bases, as clarified in the next definition.

\begin{definition}\label{def:quad}
A {\em quadrature system} (sometimes simply {\em quadratures}) for the affine space $(\Omega,V)$ acting on the Hilbert space $\hh$ is a map $\Po:\Aff{\Omega}\to\lh$ such that
\begin{enumerate}[(i)]
\item $\Po(\lf)$ is a rank-$1$ orthogonal projection for all $\lf\in\Aff{\Omega}$;
\item for all $D\in\ss$,
\begin{equation*}
\sum_{\lf\in\Af{D}{\Omega}} \Po(\lf) = \id \,;
\end{equation*}
\item for all $D_1,D_2\in\ss$ with $D_1\neq D_2$,
\begin{equation*}
\tr{\Po(\lf_1)\Po(\lf_2)} = \frac{1}{|\F|} \qquad \text{if $\lf_1 \in \Af{D_1}{\Omega}$ and $\lf_2 \in \Af{D_2}{\Omega}$} \,.
\end{equation*}
\end{enumerate}
\end{definition}

Note that conditions (i) and (ii) imply that the ranges of the projections $\Po(\lf_1)$ and $\Po(\lf_2)$ are orthogonal if the lines $\lf_1$ and $\lf_2$ are parallel with $\lf_1\neq\lf_2$. Since there are $|\F|$ parallel lines for each direction, this then requires that $\hh$ is a $|\F|$-dimensional Hilbert space. Picking a unit vector $\phi_\lf\in\Po(\lf)\hh$ for each line $\lf\in\Aff{\Omega}$, we also see that the set $\bb_D = \{\phi_\lf\mid\lf\in\Af{D}{\Omega}\}$ is an orthonormal basis of $\hh$ for each $D\in\ss$, and the collection of bases $\{\bb_D\mid D\in\ss\}$ is a set of $\dim\hh + 1$ MUBs by (iii). Thus, quadrature systems and maximal sets of MUBs are equivalent notions.

It is much easier to work with quadrature systems rather than directly with MUBs. As an example, Wootters and Fields proved the following very important property which will be used repeatedly in the paper.

\begin{proposition}\label{prop:reconI}
The set $\{\Po(\lf) \mid \lf\in\Aff{\Omega}\}$ spans the linear space $\lh$.
\end{proposition}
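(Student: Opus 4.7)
The plan is to compare the number of projections, $|\Aff{\Omega}|=|\F|(|\F|+1)$, with $\dim_\C\lh=|\F|^2$, noting that the two are almost equal: the $|\F|+1$ relations given by condition (ii) of Definition \ref{def:quad} reduce the span to at most $|\F|(|\F|+1)-|\F|=|\F|^2$ dimensions (each condition expresses $\id$, and only one such expression is genuinely new). So the statement is equivalent to proving that no other linear dependencies occur, which I would do via a Hilbert--Schmidt orthogonality argument.

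First I would pass to the traceless parts. For every $\lf\in\Aff{\Omega}$ set
\begin{equation*}
Q_\lf \;=\; \Po(\lf) - \frac{1}{|\F|}\id.
\end{equation*}
By condition (ii), $\id$ itself lies in the span of the $\Po(\lf)$, hence so do all the $Q_\lf$; conversely the span of $\id$ together with the $Q_\lf$ equals the span of the $\Po(\lf)$. Thus it suffices to check that the $(|\F|+1)|\F|$ operators $Q_\lf$, all of which are traceless, span exactly the $(|\F|^2-1)$-dimensional subspace of traceless operators in $\lh$.

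Next I would compute the Hilbert--Schmidt Gram matrix of the family $\{Q_\lf\}$. Using that $\Po(\lf)$ is a rank-$1$ projection and expanding, one finds
\begin{equation*}
\ip{Q_{\lf_1}}{Q_{\lf_2}}_{HS} \;=\; \tr{\Po(\lf_1)\Po(\lf_2)} - \frac{1}{|\F|},
\end{equation*}
which by conditions (i)--(iii) equals $1-1/|\F|$ when $\lf_1=\lf_2$, equals $-1/|\F|$ when $\lf_1\neq\lf_2$ are parallel, and vanishes when $\lf_1$ and $\lf_2$ have different directions. Therefore the families $\{Q_\lf\mid\lf\in\Af{D}{\Omega}\}$ for distinct directions $D$ are mutually Hilbert--Schmidt orthogonal, and within a single direction $D$ the Gram matrix of the $|\F|$ operators $Q_\lf$ is $I-|\F|^{-1}J$, where $J$ is the all-ones $|\F|\times|\F|$ matrix. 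This matrix has rank $|\F|-1$, with one-dimensional kernel spanned by the constant vector, reflecting the single relation $\sum_{\lf\in\Af{D}{\Omega}}Q_\lf=0$ coming from (ii).

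Putting everything together, the $Q_\lf$'s in each of the $|\F|+1$ parallel classes span a subspace of dimension exactly $|\F|-1$, and the distinct classes are orthogonal, so altogether they span a subspace of dimension $(|\F|+1)(|\F|-1)=|\F|^2-1$ inside the traceless operators. Since this already equals the dimension of the full traceless subspace, the $Q_\lf$ span it entirely. Adding $\id$, which is in the span by (ii), yields all of $\lh$. The only step that requires a bit of care is the Gram matrix computation above; once its three cases are in place the dimension count closes the argument with no further obstacle.
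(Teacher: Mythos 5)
Your proof is correct and follows essentially the same route as the paper's (the Wootters--Fields argument): pass to the traceless parts $\Po(\lf)-\id/|\F|$, observe Hilbert--Schmidt orthogonality between different parallel classes, and count $|\F|-1$ dimensions per class to fill out $\id^\perp$. The only cosmetic difference is that you obtain the within-class dimension from the rank of the Gram matrix $I-|\F|^{-1}J$, whereas the paper deduces it from the fact that $\{\id\}\cup\yy_D$ spans the same $|\F|$-dimensional space as the mutually orthogonal projections $\{\Po(\lf)\mid\lf\in\Af{D}{\Omega}\}$.
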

\begin{proof}
For the reader's convenience, we report the proof of \cite{WoFi89}. For all $\lf\in\Aff{\Omega}$, define the operator $Y(\lf) = \Po(\lf) - \id/|\F|$, and, for any $D\in\ss$, the set $\yy_D = \{Y(\lf)\mid\lf\in\Af{D}{\Omega}\}$. By the mutual unbiasedness condition, if $D_1,D_2\in\ss$ with $D_1\neq D_2$ and $\lf_i\in\Af{D_i}{\Omega}$, then
$$
\ip{Y(\lf_1)}{Y(\lf_2)}_{HS} = \tr{(\Po(\lf_1) - \id/|\F|)(\Po(\lf_2) - \id/|\F|} = 0 \,.
$$
That is, the two sets $\yy_{D_1}$ and $\yy_{D_2}$ are orthogonal in $\lh$. Moreover, since the sets $\{\id\}\cup\yy_D$ and $\{\Po(\lf)\mid\lf\in\Af{D}{\Omega}\}$ span the same $|\F|$-dimensional linear space, there must be at least $|\F|-1$ linearly independent operators in $\yy_D$. Actually, as $\ip{\id}{Y(\lf)}_{HS} = \tr{Y(\lf)} = 0$, there needs to be exactly $|\F|-1$ linearly independent operators in $\yy_D$. Thus, the operators $\{Y(\lf)\mid\lf\in\Aff{\Omega}\} = \bigcup_{D\in\ss}\yy_D$ span a $(|\F|+1)(|\F|-1)=(|\F|^2-1)$-dimensional space, that is, $\id^\perp$. Hence, the set $\{\id,Y(\lf)\mid\lf\in\Aff{\Omega}\}$ generates $\lh$, which implies the claim.
\end{proof}

There is a natural notion of equivalence between quadrature systems (cf.~\cite[Section VI]{GiHoWo04}).

\begin{definition}\label{def:equivP}
Two quadrature systems $\Po_1$ and $\Po_2$ for the affine space $(\Omega,V)$ acting on the Hilbert spaces $\hh_1$ and $\hh_2$, respectively, are {\em equivalent} if there exists a unitary map $U:\hh_1 \to \hh_2$ such that $\Po_2(\lf) = U\Po_1(\lf)U^*$ for all $\lf\in\Aff{\Omega}$. In this case, we write $\Po_1\sim\Po_2$ and say that $U$ {\em intertwines} $\Po_1$ with $\Po_2$.
\end{definition}

In the rest of this paper, we will be  more concerned with equivalence classes of quadrature systems rather than with their explicit realizations on specific Hilbert spaces. In particular, our focus will be on the equivalence classes that are invariant under the action of subgroups of the affine group of $(\Omega,V)$. The next section is devoted to the precise statement of our problem.

\section{The finite affine group and covariant quadrature systems}\label{sec:covquadsyst}

We have already seen that by its very definition the affine space $(\Omega,V)$ carries an action of the translation group $V$. This action can be naturally extended to the group ${\rm GL}(V)$ of all the invertible $\F$-linear maps of $V$ into itself by using the following standard procedure. First of all, one needs to choose an {\em origin} point $o\in\Omega$; once $o$ is fixed, the action is then
$$
A\cdot x = o + A\vu_{o,x} \qquad \forall x\in\Omega,\,A\in {\rm GL}(V) \,.
$$
The actions of the two groups $V$ and ${\rm GL}(V)$ combine together to yield the following action of the semidirect product ${\rm GL}(V)\rtimes V$ on $\Omega$
\begin{equation}\label{eq:actiomega}
(A,\vv) \cdot x = o + A(\vu_{o,x}+\vv) \qquad \forall x\in\Omega,\, (A,\vv)\in {\rm GL}(V)\rtimes V \,.
\end{equation}
The group ${\rm GL}(V)\rtimes V$ is the {\em affine group} of $(\Omega,V)$. Contrary to the case of the translation group, its action depends on the choice of the origin $o$, that is, the unique point of $\Omega$ such that ${\rm GL}(V)\cdot o = \{o\}$.

\begin{remark}\label{rem:explicit2}
In the concrete realization of Remark \ref{rem:explicit1}, the group ${\rm GL(V)}$ is the group of invertible $2\times 2$-matrices with entries in $\F$, which acts on $V=\F^2$ by left multiplication. The same action by left multiplication can be defined also on $\Omega=\F^2$. It corresponds to choosing the origin $o = (0,0)^T$. The overall action of the affine group ${\rm GL(V)}\rtimes V$ on $\Omega$ given in \eqref{eq:actiomega} is thus
$$
\left(\left(\begin{array}{cc} \beta_{11} & \beta_{12} \\ \beta_{21} & \beta_{22} \end{array}\right) , \left(\begin{array}{c} \alpha_1 \\ \alpha_2\end{array}\right)\right) \cdot \left(\begin{array}{c} \gamma_1 \\ \gamma_2\end{array}\right) = \left(\begin{array}{c} \beta_{11}(\gamma_1+\alpha_1) + \beta_{12}(\gamma_2+\alpha_2) \\ \beta_{21}(\gamma_1+\alpha_1) + \beta_{22}(\gamma_2+\alpha_2) \end{array}\right)
$$
for all $\alpha_i,\beta_{ij},\gamma_i\in\F$ with $\beta_{11}\beta_{22} - \beta_{12}\beta_{21} \neq 0$.
\end{remark}

Formula \eqref{eq:actiomega} can be lifted to an action of the affine group ${\rm GL}(V)\rtimes V$ on the set of the affine lines of $(\Omega,V)$. This is done by setting
$$
(A,\vv) \cdot (x+D) = (A,\vv)\cdot x + AD \qquad \forall x+D\in\Aff{\Omega},\, (A,\vv)\in {\rm GL}(V)\rtimes V .
$$
The previous definition clearly carries over to quadratures: if $\Po$ is a quadrature system for $(\Omega,V)$ acting on $\hh$ and $g\in {\rm GL}(V)\rtimes V$ is any affine transformation, then the map $\Po_g : \Aff{\Omega} \to \lh$ with
$$
\Po_g(\lf) = \Po(g\cdot \lf) \qquad \forall \lf\in\Aff{\Omega}
$$
is again a quadrature system still acting on the same Hilbert space $\hh$ of $\Po$. The relation $\Po_1\sim\Po_2$ obviously implies $\Po_{1\,g}\sim\Po_{2\,g}$. The focus of the paper will be the following special type of quadrature systems.

\begin{definition}\label{def:VcovarP}
Let $G\subseteq {\rm GL}(V)\rtimes V$ be any subgroup. A quadrature system $\Po$ for the affine space $(\Omega,V)$ is {\em $G$-covariant} if $\Po\sim\Po_g$ for all $g\in G$.
\end{definition}

We denote by $\qq_G(\Omega,V)$ the set of all $G$-covariant quadrature systems for the affine space $(\Omega,V)$. By transitivity, if $\Po\in\qq_G(\Omega,V)$ and $\Po'\sim\Po$, then also $\Po'\in\qq_G(\Omega,V)$. Clearly, $\qq_{G_2}(\Omega,V) \subseteq \qq_{G_1}(\Omega,V)$ whenever $G_1\subseteq G_2$. Moreover, if $G = \{(I,\vnull)\}$ is the one-element subgroup, then $\qq_{\{(I,\vnull)\}}(\Omega,V)$ is the set of all quadratures for the affine space $(\Omega,V)$. Our main task then will be the following:

{\em For any subgroup $G\subseteq {\rm GL}(V)\rtimes V$, completely characterize the partition of the set $\qq_G(\Omega,V)$ into equivalence classes of quadratures.}

If $\Po\in\qq_G(\Omega,V)$ acts on the Hilbert space $\hh$ and $g\in G$ is any group element, Definitions \ref{def:equivP} and \ref{def:VcovarP} imply the existence of a unitary operator $U(g)\in\lh$ such that
\begin{equation}\label{eq:defU}
\Po(g\cdot\lf) = U(g)\Po(\lf)U(g)^* \qquad \forall\lf\in\Aff{\Omega}\,.
\end{equation}
The choice of $U(g)$ is unique up to a certain extent. Indeed,

\begin{proposition}\label{prop:propW}
If $U_1(g)$ and $U_2(g)$ are two unitary operators which satisfy \eqref{eq:defU}, there exists a phase factor $a(g)\in\T$ such that $U_2(g) = a(g) U_1(g)$. Moreover, if a map $U:G\to\uelle{\hh}$ is such that \eqref{eq:defU} holds for all $g\in G$, then $U$ is a projective representation of the group $G$ in the Hilbert space $\hh$ of the quadrature system $\Po$.
\end{proposition}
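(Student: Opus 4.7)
The plan is to handle the two assertions in order, with the first providing the tool for the second.

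For the first assertion, assume that $U_1(g)$ and $U_2(g)$ both satisfy \eqref{eq:defU}. Setting $W = U_2(g)^* U_1(g)$, the equality $U_1(g)\Po(\lf)U_1(g)^* = U_2(g)\Po(\lf)U_2(g)^*$ for every $\lf\in\Aff{\Omega}$ rearranges into $W\Po(\lf) = \Po(\lf) W$ for every line $\lf$. By Proposition \ref{prop:reconI}, the projections $\{\Po(\lf)\mid\lf\in\Aff{\Omega}\}$ span $\lh$ linearly, hence $W$ commutes with every operator in $\lh$. Therefore $W = \overline{a(g)}\,\id$ for some scalar $a(g)\in\C$, and because $W$ is unitary we must have $a(g)\in\T$. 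This yields $U_2(g) = a(g) U_1(g)$.

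For the second assertion, I would fix $g_1,g_2\in G$ and show that both $U(g_1 g_2)$ and the product $U(g_1) U(g_2)$ are unitaries that implement the equivalence $\Po \sim \Po_{g_1 g_2}$. Indeed, using \eqref{eq:defU} twice gives
\begin{equation*}
\Po((g_1 g_2)\cdot\lf) = \Po(g_1 \cdot (g_2\cdot\lf)) = U(g_1) U(g_2) \Po(\lf) U(g_2)^* U(g_1)^*,
\end{equation*}
while at the same time $\Po((g_1 g_2)\cdot\lf) = U(g_1 g_2)\Po(\lf) U(g_1 g_2)^*$. Applying the first assertion then yields a phase $m(g_1,g_2)\in\T$ with
\begin{equation*}
U(g_1) U(g_2) = m(g_1,g_2)\, U(g_1 g_2).
\end{equation*}
This is exactly the defining relation of a projective representation with multiplier $m$. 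The cocycle identity $m(g_1,g_2) m(g_1 g_2,g_3) = m(g_1,g_2 g_3) m(g_2,g_3)$ then follows automatically by computing $U(g_1) U(g_2) U(g_3)$ in the two possible ways and using the uniqueness established in the first assertion, and analogously one fixes the normalization at the identity element.

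There is no real obstacle here; the whole argument hinges on the spanning property of $\{\Po(\lf)\}$ given by Proposition \ref{prop:reconI}, which collapses the usual projective-representation bookkeeping to a direct commutation argument. The only minor care to take is to confirm that the choice of $U(g)$ is actually pinned down up to a single scalar (not a more general operator), which is precisely what the spanning property guarantees. Once this is in hand, both statements of the proposition follow without further computation.
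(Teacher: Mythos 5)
Your proof is correct and follows essentially the same route as the paper: the first claim via the spanning property of $\{\Po(\lf)\mid\lf\in\Aff{\Omega}\}$ from Proposition \ref{prop:reconI}, and the second by observing that $U(g_1g_2)$ and $U(g_1)U(g_2)$ both implement the same equivalence and hence differ by a phase. Your added remark that the cocycle identity for $m$ follows from associativity is a harmless elaboration the paper leaves implicit.
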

\begin{proof}
Suppose both $U_1(g)$ and $U_2(g)$ satisfy \eqref{eq:defU}. Then
$$
U_2(g)^* U_1(g) \Po(\lf) = U_2(g)^* \Po(g\cdot \lf) U_1(g) =  \Po(\lf) U_2(g)^* U_1(g)
$$
for all $\lf\in\Aff{\Omega}$. Since the operators $\{\Po(\lf) \mid \lf \in\Aff{\Omega}\}$ span the whole algebra $\lh$ by Proposition \ref{prop:reconI}, we must have $U_2(g)^* U_1(g) = a(g)\id$ for some complex number $a(g)\in\T$, which yields the first claim. For the second, given $g_1,g_2\in G$, note that the unitary operators $U_1 = U(g_1 g_2)$ and $U_2 = U(g_1)U(g_2)$ both satisfy the relation $\Po((g_1 g_2)\cdot\lf) = U_i\Po(\lf)U_i^*$ for all $\lf\in\Aff{\Omega}$. Hence $U(g_1 g_2) = m(g_1,g_2) U(g_1)U(g_2)$ for some phase factor $m(g_1,g_2)\in\T$, that is, $U$ is a projective representation of $G$.
\end{proof}

We refer to Appendix \ref{app:projrep} for a brief review on projective representations. Any projective representation $U$ of $G$ which satisfies \eqref{eq:defU} will be called {\em associated} with the $G$-covariant quadrature system $\Po$. By Proposition \ref{prop:propW}, such a projective representation $U$ is uniquely determined up to multiplication by an arbitrary phase function: if $a:G\to\T$ is any map, then the projective representation $U'=aU$ also works in \eqref{eq:defU}, and there is no a priori criterion for preferring $U$ to $U'$. It is thus reasonable to try to remove this ambiguity and seek for a choice of $U$ that is canonical in some sense. In the case in which $G$ coincides with the translation group $V$, this problem will be addressed and solved in the next section.

\section{$V$-covariant quadratures and their associated Weyl systems}\label{sec:Weylsyst}

Up to now, we assumed that $(\Omega,V)$ is merely an affine space, and no further structure was postulated on it. However, we will see in Proposition \ref{prop:WHquad} below that a phase-space structure naturally arises when we restrict our analysis to maximal sets of MUBs that are covariant with respect to the group $G\equiv V$ of translations of $\Omega$.

Here we recall that the affine space $(\Omega,V)$ is a $2$-dimensional {\em phase-space} if the vector space $V$ is a {\em symplectic space}, that is, it is endowed with a symplectic form. By {\em symplectic form} we mean a nonzero $\F$-bilinear map $S : V\times V\to\F$ such that the equality $\sym{\vu}{\vu} = 0$ holds for all $\vu\in V$. The polarization identity
$$
\sym{\vu}{\vv} + \sym{\vv}{\vu} = \sym{\vu+\vv}{\vu+\vv} - \sym{\vu}{\vu} - \sym{\vv}{\vv}
$$
then implies that $S$ is antisymmetric in characteristic $p\neq 2$ and symmetric when $p=2$. Since $V$ is $2$-dimensional, $S$ is automatically nondegenerate, that is, $\sym{\vw}{\vv} = 0$ for all $\vw\in V$ only if $\vv=0$. It follows that there exists a {\em symplectic basis} $\{\ve_1,\ve_2\}$ of $V$, i.e., a linear basis of $V$ over $\F$ such that $\sym{\ve_1}{\ve_2} = - \sym{\ve_2}{\ve_1} = 1$. Moreover, all symplectic forms on $V$ only differ by a scalar factor, that is, if $S_1$ and $S_2$ are two such forms, there is $\lam\in\F_*$ for which $S_2 = \lam S_1$. In order to point out the symplectic form $S$ we are fixing on $V$, we denote by $(V,S)$ and $(\Omega,V,S)$ our symplectic spaces and phase-spaces, respectively.

\begin{remark}\label{rem:explicit3}
Continuing with the explicit realization of the affine space $(\Omega,V)$ described in Remarks \ref{rem:explicit1} and \ref{rem:explicit2}, any symplectic form $S$ on $V$ is given by
$$
\sym{(\alpha_1,\alpha_2)^T}{(\beta_1,\beta_2)^T} = \lam(\alpha_1\beta_2-\alpha_2\beta_1) \qquad \forall (\alpha_1,\alpha_2)^T,(\beta_1,\beta_2)^T\in\F^2
$$
for some choice of the scalar $\lam\in\F_*$.
\end{remark}

We continue to assume that $(\Omega,V)$ is an affine space, still without fixing any phase-space structure on it. When $G\equiv V$ is the translation group, the covariance condition \eqref{eq:defU} for a quadrature system $\Po\in\qq_V(\Omega,V)$ becomes
\begin{equation}\label{eq:defW0}
\Po(\lf+\vv) = W(\vv)\Po(\lf)W(\vv)^* \qquad \forall\lf\in\Aff{\Omega},\vv\in V \,,
\end{equation}
where $W:V\to\uelle{\hh}$ is a projective representation of the abelian group $V$ in $\hh$, uniquely determined by the quadratures $\Po$ up to multiplication by an arbitrary phase function.

The next fundamental result provides insight into the properties of the representation $W$. In particular, it shows that, through $W$, the introdution of the $V$-covariant quadrature sytem $\Po$ endows the vector space $V$ with a canonical symplectic form, unambiguously defined by $\Po$, as anticipated at the beginning of the section. The antisymmetric bicharacter appearing in the following statement is defined in Appendix \ref{app:projrep} just before Proposition \ref{prop:bichar}.

\begin{proposition}\label{prop:WHquad}
Suppose the quadrature system $\Po\in\qq_V(\Omega,V)$ acts on the Hilbert space $\hh$. Then we have the following facts.
\begin{enumerate}[(a)]
\item There exists a projective representation $W$ of $V$ associated with $\Po$ such that for all $D\in\ss$ its restriction $\left.W\right|_D$ is an ordinary representation of the additive abelian group $D$.
\item There exists a unique symplectic form $S$ on $V$ such that, if $W$ is any projective representation of $V$ associated with $\Po$, the following commutation relation holds for $W$
\begin{equation}\label{eq:comm1}
W(\vu)W(\vv) = \dual{\vu}{\vv} W(\vv)W(\vu) \qquad \forall\vu,\vv\in V
\end{equation}
where $b_S:V\times V\to\T$ is the antisymmetric bicharacter of $V$ given by
\begin{equation}\label{eq:defbS}
\dual{\vu}{\vv} = \e^{\frac{2\pi i}{p}\,\Tr \sym{\vu}{\vv}} \,.
\end{equation}
\end{enumerate}
\end{proposition}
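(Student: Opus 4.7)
The strategy is to first produce in (a) a projective representation $W$ adapted to the direction structure, then extract in (b) its commutation bicharacter and show it must come from an $\F$-bilinear symplectic form via the trace functional.

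For (a), I would start from any projective representation $W_0$ of $V$ associated with $\Po$, which exists by Proposition \ref{prop:propW}. For $D\in\ss$ and $\vd\in D$, every line $\lf\in\Af{D}{\Omega}$ satisfies $\lf+\vd=\lf$ (since $D$ is the stabilizer of $\lf$ in $V$), so \eqref{eq:defW0} forces $W_0(\vd)$ to commute with each $\Po(\lf)$. As $\{\Po(\lf)\}_{\lf\in\Af{D}{\Omega}}$ is a complete orthogonal family of rank-$1$ projections, $W_0(\vd)$ is diagonal in this basis: $W_0(\vd)=\sum_{\lf\in\Af{D}{\Omega}}c_\vd(\lf)\Po(\lf)$ with phases $c_\vd(\lf)\in\T$, and the multiplier $m_0$ of $W_0$ must equal $c_{\vd_1}(\lf)c_{\vd_2}(\lf)/c_{\vd_1+\vd_2}(\lf)$ independently of $\lf$, hence is a coboundary when restricted to each $D$. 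Fixing an origin $o\in\Omega$ and setting $a(\vv):=c_\vv(o+\F\vv)^{-1}$ for $\vv\neq\vnull$, $a(\vnull):=1$, the phase-rescaled $W:=aW_0$ is still associated with $\Po$, and its restriction to each $D\in\ss$ is now an ordinary representation, proving (a).

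For (b), Proposition \ref{prop:bichar} supplies a canonical antisymmetric bicharacter $b:V\times V\to\T$ of $W$, invariant under phase rescaling, and satisfying $W(\vu)W(\vv)=b(\vu,\vv)W(\vv)W(\vu)$; it therefore suffices to produce a unique symplectic $S$ with $b=b_S$. Using $W$ as in (a), the fact that $\left.W\right|_D$ is an ordinary representation forces $b|_{D\times D}=1$ for every $D\in\ss$: every direction is $b$-isotropic. Since $V$ has exponent $p$, write $b=\e^{2\pi iB/p}$ for a $\Z_p$-biadditive antisymmetric $B:V\times V\to\Z_p$ vanishing on each $D\times D$. Fix an $\F$-basis $\{\ve_1,\ve_2\}$ of $V$ and set $f(\alpha,\beta):=B(\alpha\ve_1,\beta\ve_2)$; isotropy on the coordinate axes and antisymmetry yield $B(\alpha_1\ve_1+\beta_1\ve_2,\alpha_2\ve_1+\beta_2\ve_2)=f(\alpha_1,\beta_2)-f(\alpha_2,\beta_1)$. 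Imposing isotropy on each oblique direction $\F(\ve_1+\mu\ve_2)$, $\mu\in\F$, yields the identity $f(\alpha,\beta\mu)=f(\beta,\alpha\mu)$; specializing $\mu=1$ makes $f$ symmetric, and specializing $\alpha=1$ gives $f(\beta,\mu)=f(1,\beta\mu)$. Thus $f(\alpha,\beta)=h(\alpha\beta)$ for the $\Z_p$-linear map $h(\gamma):=f(1,\gamma)$, and the nondegeneracy of $\Tr:\F\to\Z_p$ provides a unique $\lam\in\F$ with $h(\gamma)=\Tr{\lam\gamma}$. Setting $\sym{\vu}{\vv}:=\lam(\alpha_1\beta_2-\alpha_2\beta_1)$ then gives an $\F$-bilinear form satisfying $\sym{\vu}{\vu}=0$ and $b=b_S$.

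Nondegeneracy of $S$ reduces to $\lam\neq 0$: if $\lam=0$ then $b\equiv 1$, so every $W(\vv)$ commutes with every $W(\vu)$, hence by Proposition \ref{prop:reconI} with every $\Po(\lf)$, making each $W(\vv)$ scalar; but then $\Po(\lf+\vv)=\Po(\lf)$ for all $\lf,\vv$, contradicting $|\Af{D}{\Omega}|=|\F|>1$. Uniqueness of $S$ follows from nondegeneracy of the trace: if $b_{S_1}=b_{S_2}$, then $\Tr{(S_1-S_2)(\vu,\vv)}\equiv 0\pmod p$, and combining with $\F$-linearity in the first slot gives $\Tr{\alpha(S_1-S_2)(\vu,\vv)}=0$ for every $\alpha\in\F$, whence $S_1=S_2$. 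The main obstacle is precisely the passage from a $\Z_p$-biadditive $B$ to an $\F$-bilinear $S$: antisymmetry and isotropy on the coordinate axes alone leave an $r^2$-parameter family of admissible $f$'s (with $|\F|=p^r$), and only exploiting $b$-isotropy on \emph{all} $\F$-lines in $\ss$ cuts this down to the $1$-parameter family indexed by $\lam\in\F$.
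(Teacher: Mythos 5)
Your part (a) and the construction of $S$ in part (b) are essentially sound. The reduction of the $\Z_p$-biadditive form $B$ to $f(\alpha,\beta)=\Tr{(\lam\alpha\beta)}$ by imposing $b$-isotropy on all the oblique directions $\F(\ve_1+\mu\ve_2)$ is a valid, more coordinate-explicit alternative to the paper's route, which instead polarizes the identity $s(\alpha\vw,\vw)=0$ to get $s(\alpha\vu,\vv)=s(\vu,\alpha\vv)$ and then defines $S$ entrywise on a basis; your version has the small bonus of exhibiting $S$ directly as $\lam$ times the standard determinant form, and your closing remark correctly identifies where the $\F$-structure (as opposed to mere $\Z_p$-bilinearity) comes from. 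The uniqueness argument via nondegeneracy of the trace pairing is also correct.

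The gap is in the nondegeneracy step. You claim that if $b\equiv 1$ then each $W(\vv)$, commuting with every $W(\vu)$, ``hence by Proposition \ref{prop:reconI}'' commutes with every $\Po(\lf)$. Proposition \ref{prop:reconI} says the projections $\Po(\lf)$ span $\lh$, so an operator commuting with all of \emph{them} is scalar; it provides no passage from ``commutes with all $W(\vu)$'' to ``commutes with all $\Po(\lf)$''. That passage would need the operators $W(\vu)$ to generate $\lh$, which at this point is not available (it is Corollary \ref{cor:irred}, proved later and downstream of the present proposition) and is precisely what fails in the degenerate scenario you are refuting; moreover, for a non-scalar $W(\vv)$ the intermediate claim is simply false, since $W(\vv)\Po(\mf)W(\vv)^*=\Po(\mf+\vv)\neq\Po(\mf)$ whenever $\mf$ is not parallel to $\vv$, and commutativity of the $W$'s only yields that $W(\vv)$ commutes with the projections onto lines in the single direction $\F\vv$, which span a proper commutative subalgebra. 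The conclusion is correct and the repair is short (it is the paper's argument): for $\lf\in\Af{\F\vv}{\Omega}$ one has $W(\vv)\Po(\lf)=k\Po(\lf)$ for some $k\in\T$; commutativity of $W(\vv)$ with $W(\vu)$ propagates the \emph{same} scalar to every translate, $W(\vv)\Po(\lf+\vu)=k\Po(\lf+\vu)$; summing over the lines $\lf+\vd'$, $\vd'\in D'$, for any direction $D'\neq\F\vv$ gives $W(\vv)=k\id$, whence $\Po(\mf+\vv)=\Po(\mf)$ for all $\mf$, contradicting the orthogonality of distinct parallel projections.
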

\begin{proof}
(a) Let $W_0$ be any projective representation of $V$ associated with $\Po$. For a fixed direction $D\in\ss$ and line $\lf\in\Af{D}{\Omega}$, we have $\Po(\lf+\vd) = \Po(\lf)$ for all $\vd\in D$. The covariance relation \eqref{eq:defW0} then implies that the operators $\{W_0(\vd) \mid \vd\in D\}$ commute with the rank-$1$ projection $\Po(\lf)$. Therefore, the restriction $\left.W_0\right|_D$ is a projective representation of $D$ which leaves the $1$-dimensional subspace $\hh_0 = \Po(\lf)\hh$ invariant. By Proposition \ref{prop:trivmult} in Appendix \ref{app:projrep}, the representation $\left.W_0\right|_D$ has exact multiplier, hence there exists a function $a_D : D\to\T$ such that $a_D \left.W_0\right|_D$ is an ordinary representation of $D$. In particular, this implies $W_0(\vnull) = \overline{a_D(\vnull)} \id$, hence $a_{D_1}(\vnull) = a_{D_2}(\vnull) \equiv c$ for all $D_1,D_2\in\ss$. Setting $a(\vv) = a_{\F\vv}(\vv)$ for all $\vv\in V\setminus\{\vnull\}$ and $a(\vnull) = c$, item (a) is then satisfied by the projective representation $W=aW_0$.

(b) If $W$ is any projective representation of $V$, by Proposition \ref{prop:bichar} in Appendix \ref{app:projrep} there exists a unique antisymmetric bicharacter $b$ of $V$ such that the equality $W(\vu)W(\vv) = b(\vu,\vv) W(\vv)W(\vu)$ holds for all $\vu,\vv\in V$. Since $b(\vu,\vv)^p = b(p\vu,\vv) = b(\vnull,\vv) = 1$, the bicharacter $b$ takes its values in the set of $p$-roots of unity in $\C$. Hence, there exists a unique function $s:V\times V \to \Z_p$ with $b(\vu,\vv) = \exp(2\pi i \,s(\vu,\vv)/p)$ for all $\vu,\vv\in V$. Since $b$ is antisymmetric, we have $s(\vu,\vv) = - s(\vv,\vu)$. Moreover, the bicharacter property of $b$ and the uniqueness of $s$ easily imply that $s$ is $\Z_p$-bilinear. In particular, fixing a linear basis $\{\ve_1,\ve_2\}$ of $V$ over $\F$, by \cite[Theorem VI.5.2]{LanAlg} for all $i,j=1,2$ there exists a unique element $\sigma_{ij}\in\F$ such that $s(\alpha\ve_i,\ve_j) = \Tr{(\alpha\sigma_{ij})}$ for all $\alpha\in\F$.\\
Now, suppose $W$ is associated with the $V$-covariant quadrature system $\Po$. Then, $W$ is uniquely determined up to a phase function, and the commutation relation \eqref{eq:comm1} does not depend on such a function. Hence by item (a) we can assume that the restrictions $\left.W\right|_D$ are ordinary representations of $D$ for all $D\in\ss$. If $\vv\in V\setminus\{\vnull\}$ and $\alpha\in\F$, taking $D=\F\vv$ this implies that $W(\alpha\vv)W(\vv) = W(\vv)W(\alpha\vv)$, hence $b(\alpha\vv,\vv) = 1$, or, equivalently, $s(\alpha\vv,\vv) = 0$. As a consequence, $s(\alpha\vu,\vv) = s(\vu,\alpha\vv)$ for all $\vu,\vv\in V$ and $\alpha\in\F$, since
\begin{align*}
0 & = s(\alpha(\vu+\vv),\vu+\vv) = s(\alpha\vu,\vu) + s(\alpha\vu,\vv) + s(\alpha\vv,\vu) + s(\alpha\vv,\vv) \\
& = s(\alpha\vu,\vv) - s(\vu,\alpha\vv)
\end{align*}
by $\Z_p$-bilinearity and antisymmetry of $s$.
Introducing the $\F$-bilinear map $S:V\times V\to\F$ defined by $S(\ve_i,\ve_j) = \sigma_{ij}$, we thus see that $s(\vu,\vv) = \Tr{S(\vu,\vv)}$ for all $\vu,\vv\in V$. Indeed,
\begin{align*}
s\Big(\sum_i \alpha_i\ve_i,\sum_j \beta_j\ve_j\Big) & = \sum_{i,j}s(\alpha_i\ve_i,\beta_j\ve_j) = \sum_{i,j}s(\alpha_i\beta_j\ve_i,\ve_j) \\
& = \sum_{i,j}\Tr{(\alpha_i\beta_j\sigma_{ij})} = \Tr{S\Big(\sum_i \alpha_i\ve_i,\sum_j \beta_j\ve_j\Big)}
\end{align*}
for all $\alpha_i,\beta_j\in\F$. The map $S$ is unique by uniqueness of the $\sigma_{ij}$'s and its bilinearity. It remains to show that $S$ is a symplectic form. Since $\Tr{(\alpha S(\vv,\vv))} = s(\alpha\vv,\vv) = 0$ for all $\alpha\in\F$, \cite[Theorem VI.5.2]{LanAlg} yields $S(\vv,\vv) = 0$. To show that $S$ is nonzero, assume by contradiction that $S=0$. Then $W$ is an ordinary representation of the abelian group $V$. If $\vv\neq\vnull$ and $\lf\in\Af{\F\vv}{\Omega}$, we know that the rank-$1$ projection $\Po(\lf)$ commutes with $W(\vv)$, hence $W(\vv)\Po(\lf) = k\Po(\lf)$ for some phase $k\in\T$. It follows that for all $\vu\in V$
\begin{align*}
W(\vv)\Po(\lf+\vu) & = W(\vv)W(\vu)\Po(\lf)W(\vu)^* = W(\vu)W(\vv)\Po(\lf)W(\vu)^* \\
& = k W(\vu)\Po(\lf)W(\vu)^*= k\Po(\lf+\vu) \,.
\end{align*}
If $D\in\ss$ is such that $D\neq \F\vv$, this implies
$$
W(\vv) = W(\vv)\sum_{\vd\in D} \Po(\lf+\vd) = k\id\,.
$$
But this is a contradiction, because if $\mf\in\Af{D}{\Omega}$ the projections $\Po(\mf)$ and $\Po(\mf+\vv) = W(\vv)\Po(\mf)W(\vv)^*$ have orthogonal ranges.
\end{proof}

The symplectic form $S$ uniquely determined by the $V$-covariant quadrature system $\Po$ as in equations \eqref{eq:comm1} and \eqref{eq:defbS} is the symplectic form {\em induced} by $\Po$ on the affine space $(\Omega,V)$. On the other hand, if $(\Omega,V)$ is already a phase-space and its symplectic form $S$ coincides with the one induced by $\Po$, we say that $\Po$ is a $V$-covariant quadrature system {\em for} the phase-space $(\Omega,V,S)$. In both cases, we write $\Po\in\qq_V(\Omega,V,S)$ to highlight the phase-space structure we are dealing with.

Note that, if $\Po\sim\Po'$ and $U$ is any unitary operator intertwining $\Po$ with $\Po'$, then the unitary operators $W'(\vv) = UW(\vv)U^*$ form a projective representation $W'$ of $V$ which is associated with the $V$-covariant quadratures $\Po'$. Since $W'$ has the same commutation relation of $W$, we see that $\Po$ and $\Po'$ induce the same symplectic form on $(\Omega,V)$. However, the converse of this fact remarkably does not hold: indeed, we will see in Section \ref{sec:exis} that there are many inequivalent $V$-covariant quadratures for any fixed phase-space $(\Omega,V,S)$.

\begin{remark}
In \cite{GiHoWo04}, when $W$ is the particular representation defined by \cite[Equation (29)]{GiHoWo04}, a quadrature system for $(\Omega,V)$ satisfying \eqref{eq:defW0} is called a {\em quantum net}. However, we stress that in the present approach no a priori choice is made for $W$, but we rather let it arise from the $V$-covariant quadrature system itself. Actually, fixing the representation $W$ as in \cite{GiHoWo04} is restrictive to some extent, as it does not take into account the possibility that two $V$-covariant quadratures can induce different symplectic forms on $(\Omega,V)$. This affects the partition of the set $\qq_V(\Omega,V)$ into equivalence classes, as it will become clear at the end of Section \ref{sec:exis}.
\end{remark}

By \eqref{eq:comm1} and \eqref{eq:defbS}, the representation $W$ is a particular instance of a {\em Weyl system} \cite{Var95,Zel91,BSZ92,DiHuVa99}, first introduced by Schwinger \cite{Schwinger60} and also known with a wide variety of names in the physics and signal analysis literature: {\em finite Heisenberg group} \cite{BaIt86}, {\em generalized Pauli group} \cite{BaBoRoVa02,ApBeCh08,PaZa88,DuEnBe10}, {\em nice error bases} \cite{Ho05,Kn96,KlRo04,AsChWo07}, {\em translation operators} \cite{GiHoWo04} or {\em displacement operators} \cite{Vourdas04,Vourdas97,Vo07,Vourdas08JFAA}, to cite only the most common ones. It is the finite dimensional analogue of the Schr\" odinger representation of the real Heisenberg group \cite{HAPS89}.

In the present case, Proposition \ref{prop:WHquad} motivates the following refinement of the usual definition of Weyl systems.

\begin{definition}\label{def:WHrep}
Let $(V,S)$ be a $2$-dimensional $\F$-linear symplectic space. A {\em Weyl system} for $(V,S)$ is a projective representation $W$ of $V$ such that
\begin{enumerate}[(i)]
\item for any $D\in\ss$, the restriction $\left. W\right|_D$ of $W$ to $D$ is an ordinary representation of the additive abelian group $D$;
\item the following commutation relation holds:
\begin{equation}\label{eq:commrel}
W(\vu)W(\vv) = \dual{\vu}{\vv} W(\vv)W(\vu) \qquad \forall \vu,\vv\in V
\end{equation}
where $b_S:V\times V\to\T$ is the antisymmetric bicharacter of $V$ given by \eqref{eq:defbS}.
\end{enumerate}
\end{definition}
Note that, if $W$ is any Weyl system, then $W(\vnull) = \id$, and, for all $\vv\in V$, $W(\vv)^\ast =W(-\vv)$.

By Proposition \ref{prop:WHquad}, we can always assume that the projective representation $W$ associated with a $V$-covariant quadrature system $\Po$ is a Weyl system. We call it a Weyl system {\em associated} with $\Po$. However, even restricting to Weyl systems does not remove all the arbitrariness in the choice of the projective representation of $V$ associated with $\Po$. Indeed, suppose $W$ is a Weyl system satisfying \eqref{eq:defW0}, and for all $D\in\ss$ let $\chi_D$ be some character of $D$. Define $W'(\vu) = \chi_{\F\vu}(\vu) W(\vu)$ for all $\vu\neq\vnull$ and $W'(\vnull) = \id$. Then $W$ and $W'$ are two different Weyl systems that are both associated with $\Po$.

In order to remove any ambiguity and to make the choice of $W$ canonical, we need to introduce the next definition.

\begin{definition}
Suppose $W$ is a Weyl system associated with the $V$-covariant quadrature system $\Po$, and let $o\in\Omega$ be any point. Then $W$ is {\em centered} at $o$ if $W(\vd) \Po(o+D) = \Po(o+D)$ for all $D\in\ss$ and $\vd\in D$.
\end{definition}

The following is the uniqueness result we were looking for.

\begin{proposition}\label{prop:uniqW}
If $\Po\in\qq_V(\Omega,V)$ and $o\in\Omega$ is any point, there exists a unique Weyl system $W_o$ associated with $\Po$ and centered at $o$. If $o'\in\Omega$ is another point, then $W_{o'}(\vv) = W_o(\vu_{o,o'}) W_o(\vv) W_o(\vu_{o,o'})^*$ for all $\vv\in V$.
\end{proposition}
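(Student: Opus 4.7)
The plan is to build $W_o$ by starting from an arbitrary Weyl system associated with $\Po$ (which exists by Proposition \ref{prop:WHquad}(a)) and modifying it by a carefully chosen phase function so that the centering condition holds; then to show that this phase function is forced to be trivial by any other choice of centered Weyl system; and finally to verify the change-of-center formula by direct conjugation.

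\textbf{Existence.} Let $W$ be any Weyl system associated with $\Po$. Fix $D\in\ss$ and set $\lf_0 = o+D\in\Af{D}{\Omega}$. Since $\lf_0 + \vd = \lf_0$ for every $\vd\in D$, the covariance relation \eqref{eq:defW0} gives that $W(\vd)$ commutes with the rank-one projection $\Po(\lf_0)$; hence there is a unique $\chi_D(\vd)\in\T$ with
\[
W(\vd)\Po(o+D) = \chi_D(\vd)\Po(o+D).
\]
Because $\left.W\right|_D$ is an ordinary representation of $D$ (Weyl system property (i)), the map $\chi_D:D\to\T$ is a character. Define $W_o(\vnull)=\id$ and $W_o(\vv) = \overline{\chi_{\F\vv}(\vv)}\,W(\vv)$ for $\vv\neq\vnull$. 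I would then check directly that $W_o$ is still a Weyl system associated with $\Po$: the commutation relation \eqref{eq:commrel} is insensitive to scalar phases, and the restriction $\left.W_o\right|_D = \overline{\chi_D}\cdot\left.W\right|_D$ is the product of two ordinary representations of $D$. The centering property $W_o(\vd)\Po(o+D)=\Po(o+D)$ then follows by construction.

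\textbf{Uniqueness.} Suppose $W_o'$ is another Weyl system associated with $\Po$ and centered at $o$. By Proposition \ref{prop:propW} there is a phase function $a:V\to\T$ with $W_o'(\vv)=a(\vv)W_o(\vv)$, and since both agree at $\vnull$ we have $a(\vnull)=1$. The requirement that $\left.W_o'\right|_D$ be an ordinary representation of $D$ for every $D\in\ss$ forces $a|_D$ to be a character of $D$. Now the centering hypothesis applied to both systems gives
\[
\Po(o+D) = W_o'(\vd)\Po(o+D) = a(\vd)W_o(\vd)\Po(o+D) = a(\vd)\Po(o+D),
\]
so $a|_D\equiv 1$ for every $D$. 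Since $V=\bigcup_{D\in\ss} D$, we conclude $a\equiv 1$ and $W_o'=W_o$.

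\textbf{Change of center.} Given a second point $o'\in\Omega$, define $W'(\vv) = W_o(\vu_{o,o'})W_o(\vv)W_o(\vu_{o,o'})^*$. Conjugation by a unitary preserves the projective-representation property and the commutation relation \eqref{eq:commrel}, and it also preserves the fact that $\left.W'\right|_D$ is an ordinary representation for each $D$, so $W'$ is a Weyl system. That $W'$ is associated with $\Po$ follows by two applications of \eqref{eq:defW0}, the inner and outer conjugations shifting lines by $-\vu_{o,o'}$ and $+\vu_{o,o'}$ respectively while the middle factor shifts by $\vv$. Finally, using $o'+D = (o+D)+\vu_{o,o'}$ and \eqref{eq:defW0} once more,
\[
W_o(\vu_{o,o'})^*\Po(o'+D) = \Po(o+D)W_o(\vu_{o,o'})^*,
\]
whence
\[
W'(\vd)\Po(o'+D) = W_o(\vu_{o,o'})W_o(\vd)\Po(o+D)W_o(\vu_{o,o'})^* = \Po(o'+D)
\]
for every $\vd\in D$, so $W'$ is centered at $o'$. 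The uniqueness statement just proved then gives $W'=W_{o'}$.

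I do not expect any step to be a genuine obstacle; the only mildly delicate point is keeping track of which covariance identity and which Weyl-system axiom is being used when checking that the conjugated system remains a Weyl system with the right centering, and this is essentially bookkeeping.
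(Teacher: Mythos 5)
Your proposal is correct and follows essentially the same route as the paper: twist an arbitrary associated Weyl system by the characters $\chi_D$ to get existence, use the centering condition to force the intertwining phase to be $1$ on every direction for uniqueness, and verify the change-of-center formula by direct conjugation (the paper shortcuts the last step slightly by noting $W'(\vv)=\dual{\vu_{o,o'}}{\vv}W_o(\vv)$ via the commutation relation, but your bookkeeping argument is equivalent).
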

\begin{proof}
Existence: Suppose $W$ is a Weyl system associated with $\Po$. For any $D\in\ss$, the restriction $\left.W\right|_D$ is an ordinary representation of $D$ that commutes with the $1$-dimensional projection $\Po(o+D)$, and therefore we have $W(\vd)\Po(o+D) = \chi_D(\vd)\Po(o+D)$ for some character $\chi_D$ of $D$. Setting $W_o(\vv) = \overline{\chi_{\F\vv}(\vv)} W(\vv)$ for all $\vv\neq\vnull$ and $W_o(\vnull) = \id$, the Weyl system $W_o$ is still associated with $\Po$, and it is centered at $o$.

Uniqueness: If the Weyl systems $W_1$ and $W_2$ are both associated with $\Po$ and centered at $o$, then $W_2 = aW_1$ for some phase function $a:V\to\T$ with $a(\vnull)=1$ by Proposition \ref{prop:propW}. Moreover,
$$
a(\vv)\Po(o+\F\vv) = a(\vv)W_1(\vv)\Po(o+\F\vv) = W_2(\vv)\Po(o+\F\vv) = \Po(o+\F\vv)
$$
for all $\vv\neq\vnull$. Hence, $a=1$, and so $W_1 = W_2$.\\
If $o'\neq o$, then setting $W'(\vv) = W_o(\vu_{o,o'}) W_o(\vv) W_o(\vu_{o,o'})^*$ we have $W'(\vv) = \dual{\vu_{o,o'}}{\vv} W_o(\vv)$, therefore $W'$ is another Weyl system associated with $\Po$. Since, for all $D\in\ss$ and $\vd\in D$,
\begin{align*}
W'(\vd)\Po(o'+D) & = W_o(\vu_{o,o'}) W_o(\vd) W_o(\vu_{o,o'})^*\Po(o+\vu_{o,o'}+D) \\
& = W_o(\vu_{o,o'}) W_o(\vd) \Po(o+D) W_o(\vu_{o,o'})^* \\
& = W_o(\vu_{o,o'}) \Po(o+D) W_o(\vu_{o,o'})^*\\
& = \Po(o'+D) \,,
\end{align*}
$W'$ is centered at $o'$, hence $W'=W_{o'}$ by the uniqueness statement.
\end{proof}

The relation between quadratures and Weyl systems is very well known, both in the case $\F=\R$ \cite{AlDVTo09,LaPe10,KiSc13,CaHeScTo14} and when $\F$ is a finite field as in the present paper \cite{GiHoWo04,DuEnBe10,SuTo07,ShVo11bis,ApDaFu14}. In the latter case, the use of Weyl systems to construct quadrature systems essentially goes back to \cite{BaBoRoVa02}. In the next two sections, we will refine this construction and use it to determine all the equivalence classes of $V$-covariant quadrature systems.

\section{Equivalence of $V$-covariant quadrature systems: from $V$-covariant quadratures to Weyl multipliers}\label{sec:equiv}

For any choice of the symplectic form $S$ on $V$, $V$-covariant quadrature systems for the phase-space $(\Omega,V,S)$ actually exist and they are grouped into a finite collection of equivalence classes. This is the main content of the present and the next sections, and, as we will shortly see, the claim is a consequence of a detailed analysis of the associated Weyl systems and their multipliers. (A quick review on multipliers and their main properties used in the paper can be found in Appendix \ref{app:projrep}).

In this section, we will concentrate on the equivalence problem, while the proof of the existence will be deferred to the next one. More precisely, here we will prove the following two main facts:
\begin{enumerate}[(a)]
\item Weyl systems associated with $V$-covariant quadrature systems are irreducible;
\item two $V$-covariant quadratures are equivalent if and only if their associated centered Weyl systems are such.
\end{enumerate}
(Irreducibility and equivalence of Weyl systems is understood in the usual sense of projective representations, see again Appendix \ref{app:projrep}).
Combining these two facts, the problem of classifying all the equivalence classes of $V$-covariant quadratures descends to the same but easier task for irreducible Weyl systems. Indeed, Stone-von Neumann theorem then applies, which states that two irreducible Weyl systems are equivalent if and only if their multipliers are equal. So, we will end up with a very simple characterization: two $V$-covariant quadrature systems are equivalent if and only if their associated centered Weyl systems have the same multiplier. This turns the classification problem for $V$-covariant quadratures into the analogous problem for a special class of multipliers, that is, the class of the Weyl multipliers which we define at the end of the section.

It will be shown in a moment that the relation between $V$-covariant quadratures and associated Weyl systems is established by Fourier transform along the directions of $\Omega$. But before doing this, we need the following precise analysis of the group $\hat{V}$ of characters of $V$.

\begin{proposition}
For any symplectic form $S$ on $V$, the map $\vv\mapsto\dual{\cdot}{\vv}$ is a group isomorphism of $V$ onto its character group $\hat{V}$. It maps each subgroup $D\in\ss$ onto its annihilator subgroup $D^\perp := \{\chi\in\hat{V}\mid \left.\chi\right|_D = 1\}$, and thus establishes a group isomorphism of the quotient group $V/D$ with the character group $\hat{D}$ of $D$.
\end{proposition}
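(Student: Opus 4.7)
The plan is to verify the three assertions in order, relying on two nondegeneracy inputs, namely that of $S$ and that of the trace form $\Tr:\F\to\Z_p$, together with the standard counting $|\hat{V}|=|V|$ for a finite abelian group.

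First I would check that $\Phi:V\to\hat{V}$, $\vv\mapsto \dual{\cdot}{\vv}$, is a well-defined homomorphism. For each fixed $\vv$ the map $\vu\mapsto \sym{\vu}{\vv}$ is $\F$-linear, hence $\Z_p$-linear, so $\dual{\cdot}{\vv}$ is a character of the additive group $V$; and $\F$-bilinearity of $S$ in the second slot immediately gives $\Phi(\vv+\vw)=\Phi(\vv)\Phi(\vw)$. For injectivity, suppose $\Phi(\vv)=1$. Fix $\vw\in V$ and apply the relation $\Tr\sym{\vu}{\vv}=0$ to $\vu=\alpha\vw$ with $\alpha\in\F$ arbitrary, yielding $\Tr(\alpha\sym{\vw}{\vv})=0$ for every $\alpha$. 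Nondegeneracy of the trace form (\cite[Theorem VI.5.2]{LanAlg}, already used in the proof of Proposition~\ref{prop:WHquad}) forces $\sym{\vw}{\vv}=0$, and since $\vw$ was arbitrary, nondegeneracy of $S$ gives $\vv=\vnull$. Combined with $|\hat{V}|=|V|$, injectivity upgrades to bijectivity.

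Next I would identify the image of a direction $D\in\ss$. Writing $D=\F\vd_0$, for any $\vd=\alpha\vd_0,\vd'=\beta\vd_0\in D$ we have $\sym{\vd'}{\vd}=\alpha\beta\sym{\vd_0}{\vd_0}=0$; consequently $\Phi(\vd)|_D=1$ and $\Phi(D)\subseteq D^\perp$. The restriction map $\rho:\hat{V}\to\hat{D}$, $\chi\mapsto\chi|_D$, has kernel $D^\perp$ and is surjective (characters of subgroups of finite abelian groups always extend), so $|D^\perp|=|\hat{V}|/|\hat{D}|=|V|/|D|=|\F|=|D|$. Combining this with injectivity of $\Phi$ forces $\Phi(D)=D^\perp$. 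Finally, $\Phi$ descends to an isomorphism $V/D\xrightarrow{\sim}\hat{V}/D^\perp$, and the exact sequence $1\to D^\perp\to \hat{V}\xrightarrow{\rho}\hat{D}\to 1$ identifies the target with $\hat{D}$, yielding the desired $V/D\simeq\hat{D}$.

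The only nonroutine point is the injectivity of $\Phi$, where one must carefully separate the two nondegeneracies at play: the inner one (trace) is needed to pass from $\Tr\sym{\vu}{\vv}\equiv 0$ to $\sym{\vu}{\vv}\equiv 0$, and the outer one (symplectic form) is then needed to conclude $\vv=\vnull$. Everything after that is a counting argument together with the Pontryagin duality restriction sequence for finite abelian groups.
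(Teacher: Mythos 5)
Your proof is correct and follows essentially the same route as the paper's: injectivity via the two nondegeneracies (trace form and symplectic form) upgraded to bijectivity by $|V|=|\hat V|$, then the counting $|D^\perp|=|\hat V|/|\hat D|=|D|$ via the canonical identification $\hat D\simeq\hat V/D^\perp$ to conclude $\Phi(D)=D^\perp$, and finally passage to the quotient. The only cosmetic difference is that you phrase the identification $\hat D\simeq\hat V/D^\perp$ through the restriction exact sequence rather than citing it directly, which changes nothing of substance.
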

\begin{proof}
The map $\vv\mapsto\dual{\cdot}{\vv}$ clearly is a group homomorphism of $V$ into $\hat{V}$. Since $|V| = |\hat{V}|$, in order to prove that it is an isomorphism it suffices to show its injectivity. If $\dual{\cdot}{\vv} = 1$, then $\Tr{\sym{\alpha\vw}{\vv}} = \Tr{(\alpha\sym{\vw}{\vv})} = 0$ for all $\vw\in V$ and $\alpha\in\F$, which implies $\vv = \vnull$ by nondegeneracy of the symplectic form $\sym{\cdot}{\cdot}$ and of the $\Z_p$-bilinear map $\F\times\F \ni (\alpha,\beta) \mapsto \Tr{(\alpha\beta)} \in \Z_p$ \cite[Theorem VI.5.2]{LanAlg}.\\
To prove the second claim, note that the character $\dual{\cdot}{\vd}\in D^\perp$ for all $\vd\in D$. On the other hand, by the canonical isomorphism $\hat{D}\simeq \hat{V}/D^\perp$ \cite[Corollary I.9.3]{LanAlg}, we have $|\hat{D}| = |\hat{V}|/|D^\perp|$, and then, since $|\hat{D}| = |D| = |\F|$ and $|\hat{V}| = |\F|^2$, it follows that $|D| = |D^\perp|$. Hence, the map $\vd\mapsto\dual{\cdot}{\vd}$ from $D$ to $D^\perp$ is onto. Finally, the isomorphism statement is a consequence of the just proved identifications $V\simeq \hat{V}$, $D\simeq D^\perp$ and the isomorphism $\hat{D}\simeq \hat{V}/D^\perp$.
\end{proof}

With the identification $\hat{D} = V/D$, the {\em orthogonality relations} for characters of $D$ \cite[Theorem XVIII.5.2]{LanAlg} give the formula
\begin{equation}\label{eq:ortorel}
\sum_{\vd\in D} \dual{\vv-\vu}{\vd} = |\F| \delta_{\vu+D,\vv+D} \qquad \forall \vu,\vv\in V \,.
\end{equation}
Using it, we obtain a direct link between $V$-covariant quadratures and associated Weyl systems.

\begin{proposition}\label{prop:Fourier_quad}
Suppose $\Po$ is a $V$-covariant quadrature system for the phase-space $(\Omega,V,S)$ acting on the Hilbert space $\hh$, and let $W_o$ be its associated Weyl system centered at $o$. Then, for all $\vu\neq\vnull$,
\begin{align}\label{eq:defW}
W_o(\vu) = \sum_{\vv+\F\vu\in V/\F\vu} \dual{\vu}{\vv} \Po(o+\vv+\F\vu)
\end{align}
and, for all $D\in\ss$ and $\vv\in V$,
\begin{align}\label{eq:defP}
\Po(o+\vv+D) = \frac{1}{|\F|} \sum_{\vd\in D} \dual{\vv}{\vd} W_o(\vd) \,.
\end{align}
\end{proposition}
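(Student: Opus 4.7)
The plan is to prove equation \eqref{eq:defW} first, by a direct eigenvalue analysis that combines centering, $V$-covariance, and the commutation relation, and then to deduce \eqref{eq:defP} from it by Fourier inversion on the abelian group $D$ via the orthogonality relations \eqref{eq:ortorel}.

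For \eqref{eq:defW} I would fix $\vu\neq\vnull$, set $D=\F\vu$, and use $V$-covariance \eqref{eq:defW0} to write $\Po(o+\vv+D) = W_o(\vv)\Po(o+D)W_o(\vv)^*$ for every $\vv\in V$. Multiplying on the left by $W_o(\vu)$, then using \eqref{eq:commrel} to move $W_o(\vu)$ past $W_o(\vv)$ at the cost of a factor $\dual{\vu}{\vv}$, and finally invoking the centering condition $W_o(\vu)\Po(o+D) = \Po(o+D)$ (which holds since $\vu\in D$), one obtains the eigenvalue relation
\[
W_o(\vu)\Po(o+\vv+D) = \dual{\vu}{\vv}\,\Po(o+\vv+D).
\]
A quick check shows that the phase $\dual{\vu}{\vv}$ depends only on the coset $\vv+D$, because $S(\vu,\alpha\vu)=\alpha S(\vu,\vu)=0$ for every $\alpha\in\F$, so the right-hand side is a well-defined function of $\vv+D\in V/D$. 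Summing the identity over $\vv+D$ and collapsing the left-hand side via property (ii) of Definition \ref{def:quad} to $W_o(\vu)\,\id = W_o(\vu)$ yields exactly \eqref{eq:defW}.

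For \eqref{eq:defP} I would observe that \eqref{eq:defW} extends trivially to $\vu=\vnull$ (both sides equal $\id$), so for any $D\in\ss$ and any $\vd\in D$ one has $W_o(\vd) = \sum_{\vw+D\in V/D} \dual{\vd}{\vw}\,\Po(o+\vw+D)$. Substituting this into the right-hand side of \eqref{eq:defP} and interchanging the two sums gives
\[
\frac{1}{|\F|}\sum_{\vd\in D}\dual{\vv}{\vd}\,W_o(\vd) \;=\; \sum_{\vw+D\in V/D}\Po(o+\vw+D)\,\cdot\,\frac{1}{|\F|}\sum_{\vd\in D}\dual{\vv}{\vd}\dual{\vd}{\vw}.
\]
Using antisymmetry of $S$ to combine $\dual{\vv}{\vd}\dual{\vd}{\vw} = \dual{\vv-\vw}{\vd}$, the inner sum collapses via \eqref{eq:ortorel} to $|\F|\,\delta_{\vv+D,\vw+D}$; only the term $\vw+D=\vv+D$ survives, recovering $\Po(o+\vv+D)$.

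The computation is essentially a routine Fourier inversion once the eigenvalue identity is available; the only point requiring care is tracking the antisymmetry of $S$ and verifying coset-invariance of the phase $\dual{\vu}{\vv}$ in \eqref{eq:defW}, which is exactly what ensures that the formula is consistent with the quotient $V/\F\vu$ over which the sum runs.
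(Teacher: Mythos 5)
Your proof is correct and follows essentially the same route as the paper's: both establish \eqref{eq:defW} by expanding against the resolution of the identity $\sum_{\vv+\F\vu}\Po(o+\vv+\F\vu)=\id$, writing each projection as $W_o(\vv)\Po(o+\F\vu)W_o(\vv)^*$ via covariance, commuting $W_o(\vu)$ through at the cost of $\dual{\vu}{\vv}$, and absorbing it by the centering condition; and both obtain \eqref{eq:defP} by Fourier inversion on $D$ via the orthogonality relations \eqref{eq:ortorel}. Your explicit checks of coset-invariance of the phase and of the $\vd=\vnull$ term are details the paper leaves implicit, but they do not change the argument.
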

\begin{proof}
Recalling that the quotient group $V/\F\vu$ acts freely and transitively on the set of parallel lines $\Af{\F\vu}{\Omega}$, we have
\begin{align*}
W_o(\vu) & = W_o(\vu) \sum_{\lf\in\Af{\F\vu}{\Omega}} \Po(\lf) = W_o(\vu) \sum_{\vv+\F\vu\in V/\F\vu} \Po(o+\vv+\F\vu) \\
& = W_o(\vu) \sum_{\vv+\F\vu\in V/\F\vu} W_o(\vv)\Po(o+\F\vu)W_o(\vv)^* \\
& = \sum_{\vv+\F\vu\in V/\F\vu} \dual{\vu}{\vv} W_o(\vv)W_o(\vu)\Po(o+\F\vu)W_o(\vv)^* \\
& = \sum_{\vv+\F\vu\in V/\F\vu} \dual{\vu}{\vv} W_o(\vv)\Po(o+\F\vu)W_o(\vv)^* \\
& = \sum_{\vv+\F\vu\in V/\F\vu} \dual{\vu}{\vv} \Po(o+\vv+\F\vu) \,.
\end{align*}
Using the orthogonality relations \eqref{eq:ortorel}, for all $\vw\in V$ we then have
\begin{align*}
\sum_{\vd\in D} \dual{\vw}{\vd} W_o(\vd) & = \sum_{\vd\in D} \dual{\vw}{\vd} \sum_{\vv+D\in V/D} \dual{\vd}{\vv} \Po(o+\vv+D) \\
& = \sum_{\vv+D\in V/D} \sum_{\vd\in D} \dual{\vw-\vv}{\vd} \Po(o+\vv+D) \\
& = |\F|\Po(o+\vw+D) \,,
\end{align*}
which is \eqref{eq:defP}.
\end{proof}

\begin{corollary}\label{cor:irred}
Any Weyl system associated with a $V$-covariant quadrature system is irreducible.
\end{corollary}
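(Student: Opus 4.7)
The plan is to deduce irreducibility from the fact that the Weyl operators $\{W_o(\vv)\mid\vv\in V\}$ linearly span the whole algebra $\lh$, which will follow from the inversion formula \eqref{eq:defP} combined with the Wootters--Fields spanning result (Proposition \ref{prop:reconI}).

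Concretely, I would first fix a $V$-covariant quadrature system $\Po$ and any associated Weyl system $W$. Proposition \ref{prop:uniqW} tells us that $W$ differs from the centered Weyl system $W_o$ (for some chosen $o\in\Omega$) only by a phase function $V\to\T$, so $W$ and $W_o$ have the same linear span in $\lh$; hence it is enough to prove irreducibility of $W_o$. Next, using formula \eqref{eq:defP} of Proposition \ref{prop:Fourier_quad}, every projection $\Po(o+\vv+D)$ is expressed as a linear combination of the operators $\{W_o(\vd)\mid\vd\in D\}$. Varying $D\in\ss$ and $\vv\in V$, every projection $\Po(\lf)$ for $\lf\in\Aff{\Omega}$ thus belongs to the linear span of $\{W_o(\vv)\mid\vv\in V\}$.

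Proposition \ref{prop:reconI} guarantees that the family $\{\Po(\lf)\mid\lf\in\Aff{\Omega}\}$ already spans $\lh$, so a fortiori the larger set $\{W_o(\vv)\mid\vv\in V\}$ spans $\lh$. If $A\in\lh$ commutes with every $W_o(\vv)$, then $A$ commutes with a spanning family of $\lh$, so $A$ lies in the center of $\lh$, which is $\C\id$. This is exactly the statement that the commutant of $W_o$ is trivial, i.e.\ $W_o$ is an irreducible projective representation of $V$. Since $W$ was arbitrary among Weyl systems associated with $\Po$, the corollary follows.

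I do not expect any real obstacle here: the entire argument is an immediate consequence of the Fourier-inversion formula \eqref{eq:defP} and the Wootters--Fields spanning property, both already proved. The only mild point worth flagging is the reduction from a generic associated Weyl system to the centered one, which is handled by Proposition \ref{prop:uniqW} and the observation that multiplying each $W(\vv)$ by a phase does not change its linear span.
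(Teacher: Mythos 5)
Your argument is correct and is essentially the paper's own proof: both use formula \eqref{eq:defP} together with Proposition \ref{prop:reconI} to show that $\{W_o(\vv)\mid\vv\in V\}$ spans $\lh$, conclude irreducibility from the triviality of the commutant, and dispose of a general associated Weyl system by noting it differs from $W_o$ only by a phase function (for which the precise reference is Proposition \ref{prop:propW} rather than Proposition \ref{prop:uniqW}, a purely cosmetic point).
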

\begin{proof}
The operators $\{\Po(o+\vv+D)\mid \vv\in V,\ D\in\ss\}$ span the linear space $\lh$ by Proposition \ref{prop:reconI}, hence so do the operators $\{W_o(\vv)\mid \vv\in V\}$ by \eqref{eq:defP}. In particular, the subalgebra $\aa$ of $\lh$  generated by the latter operators coincides with $\lh$, hence $W_o$ is irreducible. Since all the Weyl systems associated with $\Po$ only differ by phase functions, the same holds for any of them.
\end{proof}

\begin{corollary}\label{cor:equiv}
Suppose $\Po_1,\Po_2\in\qq_V(\Omega,V)$, and let $W_1$ and $W_2$ be Weyl systems associated with $\Po_1$ and $\Po_2$, respectively. Assume that $W_i$ is centered at $o_i$, possibly with $o_1\neq o_2$.  Then a unitary operator $U$ intertwines $\Po_1$ with $\Po_2$ if and only if $W_2(\vv) = UW_1(\vu_{o_1,o_2})W_1(\vv)W_1(\vu_{o_1,o_2})^*U^*$ for all $\vv\in V$. In particular, $\Po_1\sim\Po_2$ if and only if their centered Weyl systems $W_1$ and $W_2$ are equivalent.
\end{corollary}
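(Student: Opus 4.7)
The plan is to leverage the bijective Fourier correspondence between centered Weyl systems and $V$-covariant quadratures provided by formulas \eqref{eq:defW}--\eqref{eq:defP} of Proposition \ref{prop:Fourier_quad}, together with the uniqueness statement of Proposition \ref{prop:uniqW}. The key observation is that conjugation by a unitary preserves every algebraic ingredient in sight (the commutation relation \eqref{eq:commrel}, the covariance identity \eqref{eq:defW0}, and the centering condition), so if $U$ intertwines the two quadratures, then the conjugated Weyl system must coincide with the unique Weyl system of $\Po_2$ centered at $o_1$; the only remaining task is to shift the center from $o_1$ to $o_2$ via Proposition \ref{prop:uniqW}.

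For the forward direction, I would set $W'_1(\vv):=UW_1(\vv)U^*$. A one-line check shows that $W'_1$ satisfies \eqref{eq:commrel} and that its restrictions to each $D\in\ss$ remain ordinary representations, so $W'_1$ is itself a Weyl system for $(V,S)$. Conjugating \eqref{eq:defW0} for $(\Po_1,W_1)$ by $U$ yields the analogous identity for $(\Po_2,W'_1)$, while conjugating the centering condition at $o_1$ gives $W'_1(\vd)\Po_2(o_1+D)=\Po_2(o_1+D)$ for every $D\in\ss$ and $\vd\in D$. Thus $W'_1$ is the Weyl system associated with $\Po_2$ and centered at $o_1$, so the center-change formula of Proposition \ref{prop:uniqW} applied to $\Po_2$ yields
\[
W_2(\vv)=W'_1(\vu_{o_1,o_2})W'_1(\vv)W'_1(\vu_{o_1,o_2})^*=UW_1(\vu_{o_1,o_2})W_1(\vv)W_1(\vu_{o_1,o_2})^*U^*.
\]

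For the converse, I would introduce the auxiliary Weyl system $\tilde W_2(\vv):=W_1(\vu_{o_1,o_2})W_1(\vv)W_1(\vu_{o_1,o_2})^*$, which by Proposition \ref{prop:uniqW} is the Weyl system associated with $\Po_1$ and centered at $o_2$; the hypothesis then reads $W_2(\vv)=U\tilde W_2(\vv)U^*$. Applying the reconstruction formula \eqref{eq:defP} at centering point $o_2$ to $\Po_1$ (via $\tilde W_2$) and to $\Po_2$ (via $W_2$), linearity of conjugation gives
\[
U\Po_1(o_2+\vw+D)U^*=\frac{1}{|\F|}\sum_{\vd\in D}\dual{\vw}{\vd}U\tilde W_2(\vd)U^*=\frac{1}{|\F|}\sum_{\vd\in D}\dual{\vw}{\vd}W_2(\vd)=\Po_2(o_2+\vw+D),
\]
so $U$ intertwines $\Po_1$ with $\Po_2$ on every affine line. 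For the final assertion, the substitution $V:=UW_1(\vu_{o_1,o_2})$ bijectively converts the biconditional into $W_2(\vv)=VW_1(\vv)V^*$, which is exactly equivalence of the centered Weyl systems as projective representations.

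I do not anticipate any serious technical obstacle: Propositions \ref{prop:Fourier_quad} and \ref{prop:uniqW} do essentially all of the work. The only delicate point is the bookkeeping of the two centers $o_1$ and $o_2$, which is why the twist by $W_1(\vu_{o_1,o_2})$ appears and why, in the forward direction, it is essential to verify that $UW_1(\cdot)U^*$ is centered at $o_1$ (not $o_2$) with respect to $\Po_2$ before invoking uniqueness.
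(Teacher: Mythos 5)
Your proof is correct and follows essentially the same route as the paper's: both arguments hinge on Proposition \ref{prop:uniqW} to transport the center from $o_1$ to $o_2$ (producing the twist by $W_1(\vu_{o_1,o_2})$) and on the Fourier formulas \eqref{eq:defW}--\eqref{eq:defP} of Proposition \ref{prop:Fourier_quad} to translate between intertwining of quadratures and intertwining of their centered Weyl systems. You merely spell out in more detail the step the paper compresses into ``by formulas \eqref{eq:defW} and \eqref{eq:defP}''; the substance is identical.
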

\begin{proof}
By Proposition \ref{prop:uniqW}, the Weyl system $W_1'$ associated with $\Po_1$ and centered at $o_2$ is $W_1'(\vv) = W_1(\vu_{o_1,o_2})W_1(\vv)W_1(\vu_{o_1,o_2})^*$. Since $W_1'$ and $W_2$ are both centered at $o_2$, by formulas \eqref{eq:defW} and \eqref{eq:defP} a unitary operator $U$ intertwines $\Po_1$ with $\Po_2$ if and only if $W_2(\vv) = UW_1'(\vv)U^*$. The claim then follows.
\end{proof}

Corollaries \ref{cor:irred} and \ref{cor:equiv} turn the classification problem for $V$-covariant quadrature systems into the analogous task for their associated irreducible Weyl systems, as anticipated at the beginning of this section. On the other hand, the classification of {\em all} the irreducible Weyl systems is provided by the following variant of Stone-von Neumann theorem (cf.~\cite[Theorem 1]{Mackey49SvN}).

\begin{proposition}\label{prop:SvNrep}
Suppose $W$ is an irreducible Weyl system which acts on the Hilbert space $\hh$. Then $\dim\hh = |\F|$, and the set $\{|\F|^{-1/2} W(\vv)\mid \vv\in V\}$ is an orthonormal basis of the linear space $\lh$ endowed with the Hilbert-Schmidt inner product. If $W'$ is another irreducible Weyl system, then $W$ and $W'$ are equivalent if and only if they have the same multiplier.
\end{proposition}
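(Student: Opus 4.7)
The plan is to combine two standard ingredients: a pairwise Hilbert--Schmidt orthogonality of the operators $\{W(\vv)\mid\vv\in V\}$, forced by nondegeneracy of the bicharacter $b_S$, and irreducibility, which through Burnside's theorem makes the linear span of these operators exhaust $\lh$. Together they pin down $\dim\hh$ and give the orthonormal basis statement. The uniqueness assertion then follows from a standard averaging argument producing a nonzero intertwiner, combined with Schur's lemma; this is essentially the finite analogue of the classical proof of the Stone--von Neumann theorem.

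First I compute $\tr{W(\vu)}$ for $\vu\neq\vnull$. By nondegeneracy of $S$ (and of the trace pairing $(\alpha,\beta)\mapsto\Tr(\alpha\beta)$), the map $\vw\mapsto\dual{\vu}{\vw}$ is a nontrivial character of $V$, so one can choose $\vw\in V$ with $\dual{\vu}{\vw}\neq 1$. The commutation relation \eqref{eq:commrel} rearranges to $W(\vw)W(\vu)W(\vw)^*=\overline{\dual{\vu}{\vw}}\,W(\vu)$, and taking the trace yields $\tr{W(\vu)}=0$. Writing $W(\vu)W(\vv)^*=W(\vu)W(-\vv)=m(\vu,-\vv)W(\vu-\vv)$ and using $W(\vu)W(\vu)^*=\id$ (which forces $m(\vu,-\vu)=1$), the Hilbert--Schmidt products reduce to $\ip{W(\vu)}{W(\vv)}_{HS}=|\F|\,\delta_{\vu,\vv}$. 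Hence $\{|\F|^{-1/2}W(\vv)\}$ is an orthonormal family of $|V|=|\F|^2$ operators in $\lh$, which already gives $\dim\hh\geq|\F|$. Conversely, the inclusion $W(\vu)W(\vv)\in\T\cdot W(\vu+\vv)$ makes the linear span $\aa$ of the $W(\vv)$'s a unital $*$-subalgebra of $\lh$ leaving no proper subspace of $\hh$ invariant, so Burnside's theorem gives $\aa=\lh$. Therefore $(\dim\hh)^2=\dim\lh\leq|\F|^2$, and $\dim\hh=|\F|$ with the $|\F|^2$ orthonormal elements $\{|\F|^{-1/2}W(\vv)\}$ exhausting $\lh$.

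For the uniqueness part, let $W,W'$ be irreducible Weyl systems on Hilbert spaces $\hh,\hh'$ sharing the multiplier $m$. The prescription $\vu\cdot S=W'(\vu)SW(\vu)^*$ defines an ordinary linear representation of the abelian group $V$ on $L(\hh,\hh')$, because the projective factors from $W$ and $W'$ cancel. Its character $\chi(\vu)=\overline{\tr{W(\vu)}}\,\tr{W'(\vu)}$ vanishes for $\vu\neq\vnull$ by the computation above and equals $|\F|^2$ at $\vu=\vnull$, so the character orthogonality relations give
\begin{equation*}
\dim\{S\in L(\hh,\hh')\mid W'(\vu)S=SW(\vu)\ \forall\vu\in V\}=\frac{1}{|V|}\sum_{\vu\in V}\chi(\vu)=1\,.
\end{equation*}
In particular there exists a nonzero intertwiner $T:\hh\to\hh'$. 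Schur's lemma makes $\ker T$ and $\ran T$ invariant subspaces, forcing $T$ to be bijective; since $T^*T$ commutes with every $W(\vu)$, a second application of Schur makes it a positive scalar, and rescaling yields the desired unitary equivalence. The converse---that equivalent Weyl systems share a multiplier---is immediate, since unitary conjugation preserves operator products.

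The main delicate point is the bookkeeping of projective phases: verifying that $m(\vu,-\vu)=1$ so that the Hilbert--Schmidt orthogonality is exact rather than merely up to a phase, and checking that the multiplier factors of $W$ and $W'$ cancel precisely in the averaged action on $L(\hh,\hh')$, so that this action really is an ordinary representation and the character orthogonality relations apply verbatim.
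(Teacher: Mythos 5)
Your proof is correct. The first half is essentially the paper's argument in different clothing: the paper gets pairwise Hilbert--Schmidt orthogonality of the $W(\vv)$'s by viewing each as an eigenvector of the adjoint representation $A\mapsto W(\vu)AW(\vu)^*$ for a distinct character $\dual{\cdot}{\vv}$, while you get it by first showing $\tr{W(\vu)}=0$ for $\vu\neq\vnull$ and reducing $\ip{W(\vu)}{W(\vv)}_{HS}$ to a unimodular multiple of $\tr{W(\vu-\vv)}$; both hinge on nondegeneracy of $b_S$, and both then invoke irreducibility (Burnside, respectively \cite[Corollary 1.17]{Isaacs06}) to conclude that the span is all of $\lh$ and hence $\dim\hh=|\F|$. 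Where you genuinely diverge is the equivalence statement: the paper defines the linear map $\Psi(W(\vu))=W'(\vu)$ on the basis $\{W(\vu)\}$, checks it is a $*$-isomorphism of the full matrix algebras, and cites the fact that such isomorphisms are unitarily implemented; you instead run the classical Stone--von Neumann averaging argument, showing the intertwiner space in $L(\hh,\hh')$ is exactly one-dimensional via character orthogonality for the ordinary representation $\vu\cdot S=W'(\vu)SW(\vu)^*$, and then upgrading a nonzero intertwiner to a unitary via Schur's lemma. Your route is more self-contained (it avoids the structure theorem for $*$-isomorphisms of matrix algebras) and gives the extra information that the intertwiner is unique up to scalar; the paper's route is shorter once the basis property is in hand. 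Two cosmetic remarks: with the paper's convention $R(g_1+g_2)=m(g_1,g_2)R(g_1)R(g_2)$ one has $W(\vu)W(-\vv)=\overline{m(\vu,-\vv)}\,W(\vu-\vv)$ rather than $m(\vu,-\vv)W(\vu-\vv)$ (harmless, since only unimodularity is used), and the value $\ip{W(\vu)}{W(\vu)}_{HS}=|\F|$ is only available after $\dim\hh=|\F|$ is established, which your ordering does eventually deliver.
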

\begin{proof}
Suppose $W$ is a Weyl system for the symplectic space $(V,S)$. For all $\vu\in V$, let $\Phi(\vu):\lh\to\lh$ be the linear map $[\Phi(\vu)](A) = W(\vu)AW(\vu)^*$. Then $\Phi(\vu)$ is a unitary operator on $\lh$ endowed with the Hilbert-Schmidt inner product, and the map $\Phi : V\to\uelle{\lh}$ is an ordinary representation of $V$ in $\lh$. Note that $[\Phi(\vu)](W(\vv)) = \dual{\vu}{\vv} W(\vv)$, that is, $\Phi$ acts as the character $\dual{\cdot}{\vv}$ on the $1$-dimensional subspace spanned by the operator $W(\vv)$. Since the characters $\dual{\cdot}{\vv_1}$ and $\dual{\cdot}{\vv_2}$ are different if $\vv_1\neq \vv_2$, the set $\{W(\vv)\mid \vv\in V\}$ constitutes an orthogonal sequence in $\lh$ by a standard argument.\\
Let $\aa$ be the linear subalgebra of $\lh$ generated by the operators $\{W(\vv)\mid \vv\in V\}$. As $W(\vv_1)W(\vv_2) = \overline{m(\vv_1,\vv_2)} W(\vv_1+\vv_2)$, where $m$ is the multiplier of $W$, the algebra $\aa$ actually coincides with the linear span of $\{W(\vv)\mid \vv\in V\}$. If $W$ is irreducible, we have $\aa = \lh$ \cite[Corollary 1.17]{Isaacs06}, hence the set $\{W(\vv)\mid \vv\in V\}$ is an orthogonal basis of $\lh$. In particular, $\dim\lh = |V| = |\F|^2$, which implies that $\dim\hh = |\F|$. The normalization constant $|\F|^{-1/2}$ then comes from the fact that $\ip{W(\vu)}{W(\vu)}_{HS} = \tr{\id} = |\F|$.\\
Finally, suppose $W$ and $W'$ are two irreducible Weyl systems acting on the Hilbert spaces $\hh$ and $\hh'$. If they are equivalent, then they clearly have the same associated multiplier. Conversely, if the multipliers of $W$ and $W'$ coincide, the map $\Psi:\lh\to\elle{\hh'}$ defined by $\Psi(W(\vu)) = W'(\vu)$ for all $\vu\in V$ is an isomorphism of $C^*$-algebras. Hence there is a unitary operator $U:\hh\to\hh'$ such that $\Psi(A) = UAU^*$ for all $A\in\lh$ \cite[Proposition 1.5]{Gu67}, that is, $W$ and $W'$ are equivalent.
\end{proof}

Corollaries \ref{cor:irred}, \ref{cor:equiv} and Proposition \ref{prop:SvNrep} suggest to characterize the equivalence of $V$-covariant quadratures through the multipliers of their associated Weyl systems. Indeed, let us define the {\em associated multiplier} of a $V$-covariant quadrature system $\Po$ to be the multiplier of the Weyl system associated with $\Po$ and centered at an arbitrary point $o\in\Omega$. This definition is consistent, since by Corollary \ref{cor:equiv} such a multiplier is unaffected by the choice of $o$, and only depends on the equivalence class of $\Po$. We then obtain the following characterization.

\begin{proposition}\label{prop:quad-->mult}
Two $V$-covariant quadrature systems are equivalent if and only if they have the same associated multiplier.
\end{proposition}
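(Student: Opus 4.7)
My plan is to assemble three previously established tools: the irreducibility of any Weyl system associated with a $V$-covariant quadrature (Corollary \ref{cor:irred}), the characterization of quadrature equivalence via centered Weyl systems (Corollary \ref{cor:equiv}), and the Stone--von Neumann-type result that irreducible Weyl systems with equal multipliers are unitarily conjugate (Proposition \ref{prop:SvNrep}).

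Before attacking the equivalence, I would first check that the ``associated multiplier'' is genuinely well defined, i.e.\ independent of the choice of center $o$. Proposition \ref{prop:uniqW} shows that $W_{o'}$ is the conjugate of $W_o$ by the fixed unitary $W_o(\vu_{o,o'})$, and such a conjugation manifestly preserves the multiplier. Hence moving the center does not alter it, and the assignment $\Po \mapsto$ multiplier is intrinsic.

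For the forward implication, I would assume $\Po_1 \sim \Po_2$ via a unitary $U$ and pick $W_1$, $W_2$ centered at arbitrary (possibly distinct) points $o_1$, $o_2$. Corollary \ref{cor:equiv} exhibits $W_2(\vv)$ as the conjugate of $W_1(\vv)$ by the unitary $U W_1(\vu_{o_1,o_2})$, and since conjugating a projective representation by a fixed unitary never changes its multiplier, $W_1$ and $W_2$ must have the same multiplier.

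For the converse, I would reduce to Stone--von Neumann. Assume the centered Weyl systems $W_1$ and $W_2$ (based at $o_1$ and $o_2$) share a multiplier. Using Proposition \ref{prop:uniqW}, replace $W_1$ by its conjugate $W_1'(\vv) = W_1(\vu_{o_1,o_2}) W_1(\vv) W_1(\vu_{o_1,o_2})^*$ centered at $o_2$; this conjugation preserves the multiplier, so $W_1'$ and $W_2$ are irreducible Weyl systems (Corollary \ref{cor:irred}) with identical multipliers. Proposition \ref{prop:SvNrep} then produces a unitary $U$ with $W_2(\vv) = U W_1'(\vv) U^*$, which is precisely the condition appearing in Corollary \ref{cor:equiv}; reading that corollary backwards yields $\Po_1 \sim \Po_2$. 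The argument is thus pure assembly; the only mild piece of bookkeeping is handling possibly distinct centers, which is absorbed transparently by Proposition \ref{prop:uniqW}.
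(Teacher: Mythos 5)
Your proof is correct and follows exactly the route the paper takes: the paper's own proof is a one-line assembly of Corollaries \ref{cor:irred}, \ref{cor:equiv} and Proposition \ref{prop:SvNrep}, with the well-definedness of the associated multiplier handled just before the statement exactly as you do. Your write-up simply makes the bookkeeping about distinct centers explicit.
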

\begin{proof}
The proof is immediate by combining Corollaries \ref{cor:irred}, \ref{cor:equiv} and Proposition \ref{prop:SvNrep}
\end{proof}

Therefore, the equivalence classes of $V$-covariant quadrature systems are unambiguously labeled by the respective associated multipliers. This suggests to single out the essential properties of such multipliers in the next definition.

\begin{definition}\label{def:WHmul}
A {\em Weyl multiplier} for the symplectic space $(V,S)$ is any multiplier of the additive group $V$ satisfying the following two conditions:
\begin{enumerate}[(i)]
\item for any $D\in\ss$, $m(\vd_1,\vd_2) = 1$ for all $\vd_1,\vd_2\in D$;
\item $\overline{m(\vu,\vv)} m(\vv,\vu) = \dual{\vu}{\vv}$ for all $\vu,\vv\in V$.
\end{enumerate}
\end{definition}

We will denote by $\mm(V,S)$ the set of Weyl multipliers for $(V,S)$. Observe that any $m\in\mm(V,S)$ satisfies $m(\vu,\vnull) = m(\vnull,\vu) = m(\vu,-\vu) = 1$ for all $\vu\in V$. However, a Weyl multiplier {\em is not} exact.

It is easily checked that the Weyl systems for the symplectic space $(V,S)$ are exactly the projective representations of $V$ whose multipliers are Weyl multipliers for $(V,S)$. In particular, the multiplier assciated with any quadrature system in the set $\qq_V(\Omega,V,S)$ is a Weyl multiplier in $\mm(V,S)$. This fact motivates a deeper analysis of Weyl multipliers, which will be the topic of the next section.

\section{Existence of $V$-covariant quadrature systems: from Weyl multipliers to $V$-covariant quadratures}\label{sec:exis}

By Proposition \ref{prop:quad-->mult} two equivalence classes of $V$-covariant quadratures are equivalent if and only if they have the same associated Weyl multiplier. Now, Theorem \ref{teo:quad=mult} below will prove that for any multiplier $m\in\mm(V,S)$ there exists a quadrature system in $\qq_V(\Omega,V,S)$ having $m$ as its associated multiplier. Therefore, the existence problem for $V$-covariant quadratures actually turns into the corresponding problem for Weyl multipliers. This explains the relevance of Weyl multipliers in the description of $V$-covariant quadratures, and leads us to completely characterize the set $\mm(V,S)$ in the next proposition.

\begin{proposition}\label{prop:SvNmult}
The set $\mm(V,S)$ is nonempty and finite, with cardinality $|\mm(V,S)| = |\F|^{|\F|-1}$. Moreover, any two multipliers $m_1,m_2\in\mm(V,S)$ are equivalent, and, if $a:V\to\T$ is a function intertwining $m_1$ with $m_2$, then for any $D\in\ss$ the restriction $\left. a\right|_D$ is a character of $D$.
\end{proposition}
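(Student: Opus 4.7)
My plan is to fix one Weyl multiplier $m_0\in\mm(V,S)$ as a base point—whose existence I import from the explicit construction in Appendix \ref{app:mult}—and then to describe every element of $\mm(V,S)$ as a coboundary perturbation of $m_0$. Once this is set up, the equivalence statement, the intertwiner characterization, and the count all reduce to bookkeeping.

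For the equivalence claim, given $m_1,m_2\in\mm(V,S)$, the pointwise quotient $m = m_2\overline{m_1}$ is a multiplier of $V$, and property (ii), which depends only on $(V,S)$ and is shared by $m_1$ and $m_2$, forces $\overline{m(\vu,\vv)}m(\vv,\vu)=1$: thus $m$ is a symmetric multiplier. I would then invoke the standard fact—citable from Appendix \ref{app:projrep}—that symmetric $\T$-valued multipliers of a finite abelian group are exact; equivalently, $H^2(V,\T)$ is classified by the antisymmetric commutator bicharacter of Proposition \ref{prop:bichar}. This produces $a:V\to\T$ with $m = \partial a$, i.e.\ $m_2 = (\partial a)\,m_1$, proving $m_1\sim m_2$. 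Moreover, for any intertwiner $a$, property (i) applied to both $m_1$ and $m_2$ yields $(\partial a)(\vd_1,\vd_2) = m_2(\vd_1,\vd_2)\,\overline{m_1(\vd_1,\vd_2)} = 1$ for all $\vd_1,\vd_2\in D$, which rearranges to $a(\vd_1+\vd_2) = a(\vd_1)a(\vd_2)$; this is precisely the statement that $\left.a\right|_D$ is a character of $D$.

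For the cardinality, let $A$ be the group of all functions $a:V\to\T$ whose restriction to each $D\in\ss$ is a character of $D$ (so in particular $a(\vnull)=1$). Given $a\in A$, the product $(\partial a)\,m_0$ is again a Weyl multiplier: property (ii) is inherited from $m_0$ because $\partial a$ is symmetric, and property (i) holds because $(\partial a)(\vd_1,\vd_2)=1$ whenever $\left.a\right|_D$ is a homomorphism. By the previous paragraph, the resulting map $A \to \mm(V,S)$, $a\mapsto (\partial a)\,m_0$, is surjective, and its kernel is exactly $\hat V$ (since characters of $V$ restrict to characters on each $D$, so $\hat V\subseteq A$). Therefore $|\mm(V,S)| = |A|/|\hat V|$. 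Because $V\setminus\{\vnull\}$ is the disjoint union of the $|\F|+1$ punctured directions, a member of $A$ amounts to an independent choice of one character in each $\hat D$; hence $|A| = |\hat D|^{|\F|+1} = |\F|^{|\F|+1}$. Dividing by $|\hat V| = |\F|^2$ yields $|\mm(V,S)| = |\F|^{|\F|-1}$.

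The one genuinely nontrivial input is the assertion that symmetric $\T$-valued multipliers of a finite abelian group are coboundaries; I would treat this as a black-box fact from the general theory developed in Appendix \ref{app:projrep}, rather than reproducing the standard computation of $H^2(V,\T)$. Everything else is combinatorial and follows directly from Definition \ref{def:WHmul}.
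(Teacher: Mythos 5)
Your argument for the equivalence claim, the intertwiner characterization, and the count is essentially the paper's own: both reduce equivalence to the fact that a symmetric multiplier of a finite abelian group is exact (Proposition \ref{prop:abelrep}), both read off the character property of $\left.a\right|_D$ from condition (i) of Definition \ref{def:WHmul}, and both compute the cardinality as $|\F|^{|\F|+1}/|\hat V|$, the numerator counting independent choices of one character on each of the $|\F|+1$ directions and the denominator accounting for the ambiguity of the intertwiner up to a character of $V$. You are in fact slightly more explicit than the paper in verifying that $(\partial a)\,m_0$ lands in $\mm(V,S)$ for \emph{every} $a$ in your group $A$, which is the surjectivity half of the count and is only implicit in the paper's proof.

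The one place where your proposal diverges, and where it has a genuine loose end, is existence. Appendix \ref{app:mult} constructs a Weyl multiplier only in characteristic $p=2$ (that is its entire purpose), so citing it alone does not produce a base point $m_0\in\mm(V,S)$ when $p$ is odd. For odd $p$ you must supply something else: either the explicit multiplier $m(\vu,\vv)=\dual{2^{-1}\vv}{\vu}$ of Example \ref{ex:multodd}, or the paper's uniform argument, which starts from $m_0(\alpha_1\ve_1+\alpha_2\ve_2\,,\,\beta_1\ve_1+\beta_2\ve_2)=\e^{\frac{2\pi i}{p}\,\Tr{(\beta_1\alpha_2)}}$ (this satisfies condition (ii) but not (i)), observes that its restriction to each $D\times D$ is symmetric, and corrects it direction by direction via Proposition \ref{prop:abelrep} to enforce (i). Note that this correction step is exactly the same coboundary mechanism you use later in the counting argument, so incorporating it costs nothing. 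With either fix the proof is complete; as written, the nonemptiness of $\mm(V,S)$ is unproved for $p\neq 2$.
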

\begin{proof}
Existence: Choose a symplectic basis $\{\ve_1,\ve_2\}$ of $V$, and define the following map $m_0 : V\times V \to \T$
$$
m_0(\alpha_1 \ve_1 + \alpha_2 \ve_2 \, , \, \beta_1 \ve_1 + \beta_2 \ve_2) = \e^{\frac{2\pi i}{p} \,\Tr{(\beta_1\alpha_2)}} \, .
$$
It is easy to check that $m_0$ is a multiplier of $V$ which satisfies the condition $\overline{m_0(\vu,\vv)} m_0(\vv,\vu) = \dual{\vu}{\vv}$ for all $\vu,\vv\in V$. Moreover, for $\alpha_1,\alpha_2,\gamma\in\F$, we have that $m_0(\gamma(\alpha_1 \ve_1 + \alpha_2 \ve_2) \, , \, \alpha_1 \ve_1 + \alpha_2 \ve_2) = m_0(\alpha_1 \ve_1 + \alpha_2 \ve_2 \, , \, \gamma(\alpha_1 \ve_1 + \alpha_2 \ve_2))$. This means that, for any $D\in\ss$, $m_0(\vd_1,\vd_2) = m_0(\vd_2,\vd_1)$ for all $\vd_1,\vd_2\in D$, hence there exists a function $a_D : D \to \T$ such that $m_0(\vd_1,\vd_2) = \overline{a_D(\vd_1)} \overline{a_D(\vd_2)} a_D(\vd_1 + \vd_2)$ by Proposition \ref{prop:abelrep} in Appendix \ref{app:projrep}. Note that $\overline{a_D(\vnull)} = m_0(\vnull,\vnull) = 1$, hence, setting $a_{\{\vnull\}}(\vnull) = 1$, the map $m : V\times V \to \T$ with
$$
m(\vu , \vv) = a_{\F\vu}(\vu) a_{\F\vv}(\vv) \overline{a_{\F(\vu+\vv)}(\vu+\vv)} m_0(\vu , \vv)
$$
is a Weyl multiplier for $(V,S)$. 

Uniqueness: If $m_1,m_2\in\mm(V,S)$, then the multiplier $\overline{m_1} m_2$ satisfies $(\overline{m_1} m_2)(\vu,\vv) = (\overline{m_1} m_2)(\vv,\vu)$ for all $\vu,\vv\in V$, hence $(\overline{m_1} m_2)(\vu,\vv) = \overline{a(\vu)a(\vv)} a(\vu+\vv)$ for some function $a:V\to\T$ by Proposition \ref{prop:abelrep} in Appendix \ref{app:projrep}. That is, $m_1$ and $m_2$ are equivalent and intertwined by $a$. For $D\in\ss$ and $\vd_1,\vd_2\in D$, we have $m_1(\vd_1,\vd_2) = m_2(\vd_1,\vd_2) = 1$, hence $a(\vd_1+\vd_2) = a(\vd_1)a(\vd_2)$, i.e., $\left. a\right|_D$ is a character of $D$.

Finiteness: Fix an element $m\in\mm(V,S)$. Then, any other $m'\in\mm(V,S)$ is obtained from $m$ by picking a phase function $a:V\to\T$ such that $\left. a\right|_D \in \hat{D}$ for all $D\in\ss$, and letting
$$
m'(\vu , \vv) = \overline{a(\vu)a(\vv)} a(\vu+\vv) m(\vu , \vv) \qquad \forall \vu,\vv\in V\,.
$$
The set of functions $\ff = \{a:V\to\T \mid \left. a\right|_D \in \hat{D}\ \forall D\in\ss\}$ has cardinality $|\ff| = |\hat{D}|^{|\ss|} = |\F|^{|\F|+1}$. Moreover, two multipliers $m'_1,m'_2\in\mm(V,S)$ coincide if and only if the phase functions $a_1$ and $a_2$, intertwining $m$ with $m'_1$ and $m'_2$, respectively, as in the previous formula, only differ up to a character of $V$. That is, $m'_1 = m'_2$ if and only if there exists a character $\chi\in\hat{V}$ such that $a_2(\vv) = \chi(\vv)a_1(\vv)$ for all $\vv\in V$. Therefore, we have $|\mm(V,S)| = |\ff|/|\hat{V}| = |\F|^{|\F|-1}$.
\end{proof}

We now give an explicit example of a Weyl multiplier. It is the finite field analogue of the well known multiplier $m((q_1,p_1),(q_2,p_2)) = \e^{i(q_1 p_2 - q_2 p_1)/2}$ of a Weyl system on $\R^2$ \cite[Theorem 7.38]{GQT85}.

\begin{example}\label{ex:multodd}
If $\F$ has characteristic $p\neq 2$, then $m(\vu,\vv) = \dual{2^{-1}\vv}{\vu}$ is a Weyl multiplier for $(V,S)$. As we will see in the next section, such a multiplier is special, since it is the unique element of $\mm(V,S)$ having the remarkable property of being invariant under the action of the symplectic group of $(V,S)$. In characteristic $p=2$, however, the explicit construction of a Weyl multiplier is more involved (see \ref{app:mult}), and, contrary to the case $p\neq 2$, there exists no distinguished element in $\mm(V,S)$.
\end{example}

The following theorem is the main result in our characterization of $V$-covariant quadrature systems. Indeed, we have established a correspondence $\qq_V(\Omega,V,S)\mapsto\mm(V,S)$ which sends each quadrature system of $\qq_V(\Omega,V,S)$ into its associated multiplier in $\mm(V,S)$. By Proposition \ref{prop:quad-->mult}, this correspondence factors to an injective mapping on the set of equivalence classes of quadratures. The next theorem  proves that such a mapping is onto, and thus establishes the fundamental equivalence between $V$-covariant quadrature systems and Weyl multipliers.

\begin{theorem}\label{teo:quad=mult}
Suppose $(\Omega,V,S)$ is a phase-space. For any element $m\in\mm(V,S)$, there exists a unique equivalence class $\qq^m_V(\Omega,V,S)$ of $V$-covariant quadrature systems for $(\Omega,V,S)$ whose associated multiplier is $m$. If $W$ is an irreducible Weyl system acting on $\hh$ and having multiplier $m$ and $o\in\Omega$ is any point, the map $\Po:\Aff{\Omega}\to\lh$ given by
\begin{equation}\label{eq:defPbis}
\Po(o+\vv+D) = \frac{1}{|\F|} \sum_{\vd\in D} \dual{\vv}{\vd} W(\vd) \qquad \forall\vv\in V,\ D\in\ss
\end{equation}
is a $V$-covariant quadrature system in $\qq^m_V(\Omega,V,S)$ and $W$ is its associated Weyl system centered at $o$.
\end{theorem}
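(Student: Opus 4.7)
The plan is as follows. Starting from the formula \eqref{eq:defPbis}, one constructs a candidate quadrature system $\Po$ from a given irreducible Weyl system $W$ with multiplier $m$, verifies that $\Po$ is indeed a $V$-covariant quadrature system whose associated centered Weyl system is $W$, and then invokes Proposition \ref{prop:quad-->mult} for uniqueness of the equivalence class.

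First I would address the preliminary existence question: given $m\in\mm(V,S)$, does an irreducible projective representation of $V$ with multiplier $m$ exist on a Hilbert space of dimension $|\F|$? I would answer this by building an explicit Schr\"odinger-type representation for a convenient reference multiplier $m_0\in\mm(V,S)$ (acting on $\ell^2(\F)$ via cyclic shifts along one generator and character-modulations along a transverse one), and then using Proposition \ref{prop:SvNmult} to write $m$ as the product of $m_0$ with a coboundary $\overline{a(\vu)a(\vv)}\,a(\vu+\vv)$, so that $W(\vu)=a(\vu)W_0(\vu)$ is the desired irreducible Weyl system on the same Hilbert space.

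Next I would check that the operators \eqref{eq:defPbis} do give a quadrature system. The axioms split into essentially independent verifications: (i) $\Po(\lf)$ depends only on $\lf$ and not on the chosen representative of $o+\vv+D$, because $b_S$ is trivial on $D\times D$; (ii) $\Po(\lf)^*=\Po(\lf)$, by the change of variables $\vd\mapsto-\vd$ and $W(\vd)^*=W(-\vd)$; (iii) $\Po(\lf)^2=\Po(\lf)$, using that $\left.W\right|_D$ is an ordinary representation together with the orthogonality relations \eqref{eq:ortorel}; (iv) $\tr{\Po(\lf)}=1$, from the Hilbert-Schmidt orthogonality of the $W(\vv)$'s in Proposition \ref{prop:SvNrep}, which yields $\tr{W(\vd)}=|\F|\delta_{\vd,\vnull}$ and hence rank one after combining with (ii)-(iii); (v) $\sum_{\vv+D}\Po(o+\vv+D)=\id$, by character orthogonality on $V/D$; (vi) for $D_1\neq D_2$, expanding $\tr{\Po(o+\vv_1+D_1)\Po(o+\vv_2+D_2)}$ and using $W(\vd_1)W(\vd_2)=\overline{m(\vd_1,\vd_2)}\,W(\vd_1+\vd_2)$ reduces the double sum to the single term $\vd_1=\vd_2=\vnull$ (since $D_1\cap D_2=\{\vnull\}$) and produces the value $1/|\F|$.

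Finally, the commutation relation \eqref{eq:commrel} yields
\[
W(\vu)\Po(o+\vv+D)W(\vu)^* = \frac{1}{|\F|}\sum_{\vd\in D}\dual{\vv+\vu}{\vd}\,W(\vd) = \Po(o+\vv+\vu+D),
\]
so $\Po\in\qq_V(\Omega,V,S)$ and $W$ is an associated Weyl system; because $\left.W\right|_D$ is an ordinary representation, one computes directly that $W(\vd)\Po(o+D)=\Po(o+D)$ for all $\vd\in D$, so $W$ is the Weyl system associated with $\Po$ centered at $o$, and the associated multiplier of $\Po$ is precisely $m$. Uniqueness of the equivalence class is then Proposition \ref{prop:quad-->mult}. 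I expect the main obstacle to be securing the first step, namely the existence of an irreducible $|\F|$-dimensional Weyl system with a prescribed Weyl multiplier; the subsequent verifications, while numerous, are all routine applications of the commutation relation, the triviality of $b_S$ along directions, and character orthogonality.
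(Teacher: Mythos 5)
Your proposal is correct and follows essentially the same route as the paper: construct $\Po$ from \eqref{eq:defPbis}, verify the quadrature axioms via the ordinary-representation property of $\left.W\right|_D$ and character orthogonality, check covariance and centering, and reduce uniqueness to Proposition \ref{prop:quad-->mult}. The only (harmless) divergences are cosmetic: the paper obtains the existence of an irreducible Weyl system with multiplier $m$ from Proposition \ref{prop:exrep} rather than an explicit Schr\"odinger-type model, and it proves mutual unbiasedness by averaging $\tr{\Po(\lf_1)\Po(\lf_2+\vd)}$ over $\vd\in D_1$ instead of your direct expansion of the double sum, while the rank-one claim is settled by a dimension count rather than your trace computation.
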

\begin{proof}
The uniqueness of the equivalence class $\qq_V^m(\Omega,V,S)$ follows from Proposition \ref{prop:quad-->mult}.\\
By Proposition \ref{prop:exrep} in Appendix \ref{app:projrep}, there exists an irreducible Weyl system $W$ for the symplectic space $(V,S)$ whose multiplier is $m$. Hence it is enough to show that for such a $W$ formula \eqref{eq:defPbis} defines an element $\Po\in\qq_V(\Omega,V)$, and that $W$ is the Weyl system associated with $\Po$ and centered at $o$.\\
It is easy to check that $\Po(o+\vv+D)^\ast = \Po(o+\vv+D)$. Moreover, since $\left.W\right|_D$ is an ordinary representation,
\begin{align*}
& \Po(o+\vv_1+D)\Po(o+\vv_2+D) = \\
& \qquad \qquad = \frac{1}{|\F|^2} \sum_{\vd_1,\vd_2\in D} \dual{\vv_1}{\vd_1} \dual{\vv_2}{\vd_2} W(\vd_1+\vd_2)\\
& \qquad \qquad = \frac{1}{|\F|^2} \sum_{\vd_2\in D} \dual{\vv_2-\vv_1}{\vd_2} \sum_{\vd'_1\in D} \dual{\vv_1}{\vd'_1} W(\vd'_1)\\
& \qquad \qquad = \delta_{\vv_1+D,\vv_2+D} \Po(o+\vv_1+D) \,,
\end{align*}
in which we made the substitution $\vd'_1 = \vd_1+\vd_2$ and we used the orthogonality relations \eqref{eq:ortorel}.
Therefore, the operators $\{\Po(o+\vv+D)\mid \vv+D\in V/D\}$ are orthogonal projections, and the ranges of $\Po(o+\vv_1+D)$ and $\Po(o+\vv_2+D)$ are orthogonal if $\vv_1+D \neq \vv_2+D$. Since $|V/D| = |\F| = \dim\hh$ by Proposition \ref{prop:SvNrep}, each projection $\Po(o+\vv+D)$ then must have rank $1$, and $\sum_{\vv+D\in V/D} \Po(o+\vv+D) = \id$.\\
For any $\vu\in V$, by the commutation relation \eqref{eq:commrel} we have
\begin{align}\label{eq:covarPo}
\begin{aligned}
W(\vu)\Po(o+\vv+D)W(\vu)^* & = \frac{1}{|\F|} \sum_{\vd\in D} \dual{\vv}{\vd} \dual{\vu}{\vd} W(\vd) \\
& = \Po(o+\vv+\vu+D) \,, 
\end{aligned}
\end{align}
hence $\Po(\lf+\vu) = W(\vu)\Po(\lf)W(\vu)^*$ for all $\lf\in\Aff{\Omega}$ and $\vu\in V$.\\
In order to show that $\Po$ is a $V$-covariant quadrature system, we still need to prove the mutual unbiasedness relation in Definition \ref{def:quad}. If $D_1\neq D_2$ and $\lf_i \in \Af{D_i}{\Omega}$, then, for all $\vd\in D_1$,
\begin{align*}
\tr{\Po(\lf_1)\Po(\lf_2)} & = \tr{\Po(\lf_1-\vd)\Po(\lf_2)} = \tr{W(\vd)^*\Po(\lf_1)W(\vd)\Po(\lf_2)}\\
& = \tr{\Po(\lf_1)W(\vd)\Po(\lf_2)W(\vd)^*} \\
& = \tr{\Po(\lf_1)\Po(\lf_2+\vd)} \,.
\end{align*}
On the other hand,
$$
\sum_{\vd\in D_1} \tr{\Po(\lf_1)\Po(\lf_2+\vd)} = \tr{\Po(\lf_1)\sum_{\mf\in\Af{D_2}{\Omega}} \Po(\mf)} = \tr{\Po(\lf_1)} = 1 \,.
$$
Combining these two facts, we see that $\tr{\Po(\lf_1)\Po(\lf_2)} = 1/|\F|$, which completes our proof that $\Po$ is a $V$-covariant quadrature system.\\
We now show that $W$ is the Weyl system associated with $\Po$ and centered at $o$. We have already seen in \eqref{eq:covarPo} that $W$ is a Weyl system associated with $\Po$. Moreover, for all $D\in\ss$ and $\vd\in D$,
$$
W(\vd)\Po(o+D) = \frac{1}{|\F|} \sum_{\vd'\in D} W(\vd+\vd') = \Po(o+D) \,,
$$
that is, $W$ is centered at $o$.
\end{proof}

By the existence of Weyl multipliers proved in Proposition \ref{prop:SvNmult}, Theorem \ref{teo:quad=mult} thus implies that the set $\qq_V(\Omega,V,S)$ is nonempty for any symplectic form $S$ on $V$, that is, $V$-covariant quadrature systems exist for any phase-space $(\Omega,V,S)$. Moreover, it shows that the set $\qq_V(\Omega,V)$ is partitioned into the disjoint union of the equivalence classes
$$
\qq_V(\Omega,V) = \bigcup_{S\in\Sym} \bigcup_{m\in\mm(V,S)} \qq_V^m(\Omega,V,S)
$$
where $\Sym$ is the collection of all symplectic forms on $V$. Since $|\Sym| = |\F_*| = |\F|-1$ and for all $S\in\Sym$ we have $|\mm(V,S)| = |\F|^{|\F|-1}$ by Proposition \ref{prop:SvNmult}, the previous union involves $(|\F|-1)|\F|^{|\F|-1}$ equivalence classes. In particular, for all $S\in\Sym$ there are at least two distinct equivalence classes in $\qq_V(\Omega,V,S)$, that is, inequivalent $V$-covariant quadratures for the same phase-space $(\Omega,V,S)$ actually exist, as we anticipated in Section \ref{sec:Weylsyst}.

The number $(|\F|-1)|\F|^{|\F|-1}$ of equivalence classes in the set $\qq_V(\Omega,V)$ should be compared with the analogous number $|\F|^{|\F|-1}$ of inequivalent quantum nets found in \cite[Section VI]{GiHoWo04}, where, however, the authors did not consider the possibility that the symplectic form $S$ associated with different $V$-covariant quadratures may vary within the set $\Sym$.

As an important consequence of the uniqueness statement for Weyl multipliers contained in Proposition \ref{prop:SvNmult}, the set $\qq_V(\Omega,V,S)$ can be characterized actually using {\em a single} quadrature system $\Po\in\qq_V(\Omega,V,S)$. Indeed, by the next proposition one can pass from $\Po$ to any other quadratures $\Po'\in\qq_V(\Omega,V,S)$ simply by relabeling the lines of $\Po$.

\begin{proposition}\label{prop:permutation}
Let $\Po_1,\Po_2\in\qq_V(\Omega,V,S)$, where each $\Po_i$ acts on the Hilbert space $\hh_i$. Then there exist a unitary operator $U:\hh_1\to\hh_2$ and, for all directions $D\in\ss$, a vector $\vv_D\in V$ such that
\begin{equation}\label{eq:permutation}
\Po_2(\lf) = U\Po_1(\lf+\vv_D)U^* \qquad \forall \lf\in\Af{D}{\Omega},\,D\in\ss\,.
\end{equation}
\end{proposition}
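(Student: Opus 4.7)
The plan is to reduce the statement to the classification of Weyl multipliers in Proposition \ref{prop:SvNmult} by passing from quadratures to their associated centered Weyl systems, exploiting the reconstruction formula \eqref{eq:defP}. The key point is that a phase function intertwining two Weyl multipliers restricts on each direction $D$ to a character of $D$, and every such character corresponds, via the pairing $\vv \mapsto b_S(\cdot,\vv)$, to a translation vector, which will play the role of $\vv_D$.

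First, fix any point $o\in\Omega$ and let $W_1,W_2$ be the Weyl systems associated with $\Po_1,\Po_2$ and centered at $o$, with multipliers $m_1,m_2\in\mm(V,S)$. By Proposition \ref{prop:SvNmult}, there exists a phase function $a:V\to\T$ intertwining $m_1$ with $m_2$ such that $\left.a\right|_D\in\hat{D}$ for every $D\in\ss$. Defining $W_1' = aW_1$, the projective representation $W_1'$ is then a Weyl system (condition (i) of Definition \ref{def:WHrep} is preserved because $\left.a\right|_D$ is a character of $D$) with the same multiplier $m_2$ as $W_2$. By Corollary \ref{cor:irred} both $W_1'$ and $W_2$ are irreducible, so Proposition \ref{prop:SvNrep} furnishes a unitary $U:\hh_1\to\hh_2$ with $W_2(\vv)=U W_1'(\vv)U^* = a(\vv)\,UW_1(\vv)U^*$ for all $\vv\in V$.

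Second, for each $D\in\ss$ the restriction $\left.a\right|_D$ is a character of $D$, and since the map $\vv\mapsto b_S(\vv,\cdot)$ identifies $V/D$ with $\hat{D}$, there exists $\vv_D\in V$ (unique modulo $D$) with $a(\vd) = b_S(\vv_D,\vd)$ for every $\vd\in D$. Now apply the reconstruction formula \eqref{eq:defP} of Proposition \ref{prop:Fourier_quad} to $\Po_2$ and compute, for any $\vw\in V$,
\begin{align*}
\Po_2(o+\vw+D) & = \frac{1}{|\F|}\sum_{\vd\in D} b_S(\vw,\vd)\,W_2(\vd) \\
& = U\lft\frac{1}{|\F|}\sum_{\vd\in D} b_S(\vw,\vd)\,a(\vd)\,W_1(\vd)\rgt U^* \\
& = U\lft\frac{1}{|\F|}\sum_{\vd\in D} b_S(\vw+\vv_D,\vd)\,W_1(\vd)\rgt U^* \\
& = U\,\Po_1\lft o+\vw+\vv_D+D\rgt U^* \,,
\end{align*}
where in the last line we used \eqref{eq:defP} for $\Po_1$. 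Since every line $\lf\in\Af{D}{\Omega}$ can be written as $\lf = o+\vw+D$ for some $\vw\in V$, and since $\lf+\vv_D = o+\vw+\vv_D+D$, this is exactly the identity \eqref{eq:permutation}.

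The only step requiring care is the verification that $W_1' = aW_1$ is genuinely a Weyl system with multiplier $m_2$, which hinges precisely on the fact that the intertwiner $a$ given by Proposition \ref{prop:SvNmult} restricts to characters on every direction; this is what allows the translation-ambiguity to be absorbed line-by-line into the single vector $\vv_D$. Once this is in place, the remainder is a direct substitution into the Fourier reconstruction formula.
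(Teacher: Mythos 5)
Your proof is correct and follows essentially the same route as the paper's: both rest on Proposition \ref{prop:SvNmult} to produce an intertwiner $a$ of the two Weyl multipliers that restricts to a character of each direction $D$, identify that character with a coset $\vv_D+D$ via $\hat{D}\simeq V/D$, and absorb it as a translation of the lines in $\Af{D}{\Omega}$. The only (harmless) difference is at the end: the paper relabels $\Po_1$ into $\Po_1'$ and invokes Proposition \ref{prop:quad-->mult} after checking that $aW_1$ is the Weyl system of $\Po_1'$ centered at $o$, whereas you obtain the unitary directly from Proposition \ref{prop:SvNrep} and verify the identity by substitution into the reconstruction formula \eqref{eq:defP}.
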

\begin{proof}
Suppose $o\in\Omega$ is a fixed point. Let $W_i$ be the Weyl system associated with the $V$-covariant quadrature system $\Po_i$ and centered at $o$, and let $m_i$ be its Weyl multiplier. By Proposition \ref{prop:SvNmult}, for all $D\in\ss$ there is a character $\chi_D \in \hat{D}$ such that
$$
m_2(\vu,\vv) = \overline{\chi_{\F\vu}(\vu) \chi_{\F\vv}(\vv)} \chi_{\F(\vu+\vv)}(\vu+\vv) m_1(\vu,\vv) \qquad\forall\vu,\vv\in V\setminus\{\vnull\} \,.
$$
Therefore, if we define the projective representation $W'_1$ of $V$ in $\hh_1$, with
$$
W'_1 (\vu) = \chi_{\F\vu}(\vu) W_1 (\vu) \qquad \forall \vu\in V\setminus\{\vnull\} \qquad \text{and} \qquad W'(\vnull) = \id \,,
$$
the multiplier of $W'_1$ is $m_2$. By the identification $\hat{D} = V/D$, we have $\chi_D = \overline{\dual{\cdot}{\vv_D}}$ for some vector $\vv_D + D\in V/D$. Let $\Po'_1$ be the $V$-covariant quadrature system with
$$
\Po'_1(x+D) = \Po_1(x+\vv_D+D) \qquad \forall x+D\in\Aff{\Omega}\,.
$$
It is easy to check that $W'_1$ is a Weyl system associated with $\Po'_1$. Moreover, for all $D\in\ss$ and $\vd\in D$,
\begin{align*}
& W'_1 (\vd) \Po'_1 (o+D) = \overline{\dual{\vd}{\vv_D}} W_1(\vd) W_1(\vv_D) \Po_1 (o+D) W_1(\vv_D)^* \\
& \quad = W_1(\vv_D) W_1(\vd) \Po_1 (o+D) W_1(\vv_D)^* = W_1(\vv_D) \Po_1 (o+D) W_1(\vv_D)^* \\
& \quad = \Po'_1 (o+D) \,,
\end{align*}
hence $W'_1$ is the Weyl system associated with $\Po'_1$ and centered at $o$. By Proposition \ref{prop:quad-->mult}, $\Po_1'$ and $\Po_2$ are equivalent, hence \eqref{eq:permutation} follows.
\end{proof}

Proposition \ref{prop:permutation} states that any two quadrature systems $\Po_1$ and $\Po_2\in\qq_V(\Omega,V,S)$ only differ by cyclic permutations of the parallel lines in the sets $\Af{D}{\Omega}$, each permutation depending on the common direction $D$ of the lines. In particular, it implies that the ranges of all $V$-covariant quadrature systems for the phase-space $(\Omega,V,S)$ are unitarily conjugated: that is, if $\Po_1,\Po_2\in\qq_V(\Omega,V,S)$, there exists a unitary operator $U$ such that $\ran\Po_2 = U(\ran\Po_1) U^*$, where $\ran\Po = \{\Po(\lf)\mid\lf\in\Aff{\Omega}\} \subset \lh$. Actually, we will see in Theorem \ref{teo:permutation2} of the next section that the conjugacy of the ranges is a general property of $V$-covariant quadrature systems, and it is not only restricted to systems inducing the same symplectic form on $(\Omega,V)$.

\section{The action of the symplectic group on $V$-covariant quadratures}\label{sec:sympl}

In this section, we enlarge our covariance group and study quadrature systems that are covariant with respect to subgroups $G\subseteq {\rm GL}(V)\rtimes V$ properly containing the translation group $V$. By Proposition \ref{prop:propW}, this will lead us to consider projective representations of the semidirect product $G_0\rtimes V$, where $G_0 = G\cap {\rm GL}(V)$, which are extensions of Weyl systems on $V$. However, it will soon become clear that not all covariance subgroups are allowed. Indeed, this is a consequence of the next easy but very useful observation.

\begin{proposition}\label{prop:azGL}
Let $m\in\mm(V,S)$ and $\Po\in\qq^m_V(\Omega,V,S)$. Moreover, let $W$ be the Weyl system associated with $\Po$ and centered at the unique point $o\in\Omega$ such that ${\rm GL}(V)\cdot o = \{o\}$. Given $A\in{\rm GL}(V)$, define the projective representation $W_A$ of $V$ with
\begin{equation}\label{eq:defWA}
W_A(\vv) = W(A\vv) \qquad \forall \vv\in V \,.
\end{equation}
Then $W_A$ is the Weyl system associated with the $V$-covariant quadratures $\Po_A$ and centered at the point $o$. Furthermore, we have $\Po_A\in\qq^{m_A}_V(\Omega,V,S_A)$, where $S_A$ is the symplectic form $S_A(\cdot,\cdot) = S(A\cdot,A\cdot)$ and $m_A\in\mm(V,S_A)$ is the multiplier $m_A(\cdot,\cdot) = m(A\cdot,A\cdot)$.
\end{proposition}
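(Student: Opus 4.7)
The plan is to verify each of the three claims in Proposition \ref{prop:azGL} by direct unpacking of definitions, viewing everything in sight as being pulled back along $A$. The key observation is that because $o$ is the \emph{unique} point of $\Omega$ fixed by ${\rm GL}(V)$, the action of $A$ on an arbitrary line reads $A\cdot(o+\vu+D) = o+A\vu+AD$, and $A$ permutes the set $\ss$ of directions (since $AD$ is again a $1$-dimensional subspace of $V$).

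First I would check that $W_A$ is a Weyl system for $(V,S_A)$ in the sense of Definition \ref{def:WHrep}, and read off its multiplier. For condition (i), fix $D\in\ss$; since $AD\in\ss$, the restriction $\left.W\right|_{AD}$ is an ordinary representation, so $\left.W_A\right|_D$ is too. For the commutation relation,
\begin{equation*}
W_A(\vu)W_A(\vv) = W(A\vu)W(A\vv) = \dual{A\vu}{A\vv}\,W(A\vv)W(A\vu) = b_{S_A}(\vu,\vv)\,W_A(\vv)W_A(\vu),
\end{equation*}
where I use that $S(A\cdot,A\cdot) = S_A(\cdot,\cdot)$. The multiplier of $W_A$ is then $m_{W_A}(\vu,\vv) = m(A\vu,A\vv) = m_A(\vu,\vv)$ by a parallel unwinding of $W_A(\vu)W_A(\vv) = \overline{m_A(\vu,\vv)}\,W_A(\vu+\vv)$. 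To check $m_A\in\mm(V,S_A)$: condition (i) of Definition \ref{def:WHmul} follows because $AD\in\ss$ and $m$ satisfies (i); condition (ii) follows from $\overline{m(A\vu,A\vv)}m(A\vv,A\vu) = \dual{A\vu}{A\vv} = b_{S_A}(\vu,\vv)$.

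Next I would verify that $W_A$ implements the $V$-covariance of $\Po_A$ and is centered at $o$. For covariance, using that $A\cdot(\lf+\vv) = A\cdot\lf + A\vv$,
\begin{equation*}
\Po_A(\lf+\vv) = \Po(A\cdot\lf + A\vv) = W(A\vv)\Po(A\cdot\lf)W(A\vv)^* = W_A(\vv)\Po_A(\lf)W_A(\vv)^*.
\end{equation*}
For centering at $o$: for any $D\in\ss$ and $\vd\in D$, note $A\vd\in AD$, so centeredness of $W$ at $o$ gives $W(A\vd)\Po(o+AD) = \Po(o+AD)$, i.e.\ $W_A(\vd)\Po_A(o+D) = \Po_A(o+D)$. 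By the uniqueness statement of Proposition \ref{prop:uniqW}, $W_A$ is therefore \emph{the} Weyl system associated with $\Po_A$ and centered at $o$. Equivalently, one may compute $\Po_A$ from formula \eqref{eq:defP} using $W_A$ and the bicharacter $b_{S_A}$: the substitution $\vd' = A\vd$ converts that sum into \eqref{eq:defP} for $\Po$ at the line $o+A\vv+AD$, recovering $\Po_A(o+\vv+D)$. This also confirms via Theorem \ref{teo:quad=mult} that $\Po_A\in\qq_V^{m_A}(\Omega,V,S_A)$.

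I do not foresee any real obstacle; the proposition is essentially bookkeeping asserting that the data $(\Po,W,S,m)$ transforms equivariantly under the pullback by $A\in{\rm GL}(V)$. The only subtle point is that the ${\rm GL}(V)$-action on $\Omega$ depends on the choice of origin, and the statement correctly picks the unique ${\rm GL}(V)$-fixed point $o$, which is exactly what makes $A\cdot(o+\vu+D) = o+A\vu+AD$ hold with no translation error and hence makes the centering at $o$ carry over from $W$ to $W_A$.
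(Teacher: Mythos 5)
Your proposal is correct and follows essentially the same route as the paper's proof: verify the covariance relation $\Po_A(\lf+\vv)=W_A(\vv)\Po_A(\lf)W_A(\vv)^*$, verify centering at $o$ using $A\vd\in AD$, and read off the symplectic form and multiplier from the commutation and composition laws of $W_A$. The only difference is that you spell out a few checks the paper leaves implicit (condition (i) of Definition \ref{def:WHrep} for $W_A$ and the verification that $m_A\in\mm(V,S_A)$), which is harmless added detail rather than a different argument.
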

\begin{proof}
By definitions,
\begin{align*}
W_A(\vv)\Po_A(\lf)W_A(\vv)^* & = W(A\vv)\Po(A\cdot \lf)W(A\vv)^* = \Po(A\cdot \lf+A\vv) \\
& = \Po_A(\lf+\vv)
\end{align*}
for all $\lf\in\Aff{\Omega}$ and $\vv\in V$, and
$$
W_A(\vd)\Po_A(o+D) = W(A\vd)\Po(o+AD) = \Po(o+AD)
$$
for all $D\in\ss$ and $\vd\in D$ since $A\vd\in AD$. This proves the first claim. For the second, we have
\begin{align*}
W_A(\vu)W_A(\vv) &= \e^{\frac{2\pi i}{p}\,\Tr{S(A\vu,A\vv)}}W_A(\vv)W_A(\vu) \\
& = \overline{m(A\vu,A\vv)} W_A(\vu+\vv)\qquad \forall \vu,\vv\in V \,,
\end{align*}
that is, the quadrature system $\Po_A$ induces the symplectic form $S_A$, and $m_A$ is its associated Weyl multiplier.
\end{proof}

Since all symplectic forms only differ by a nonzero scalar, we have $S_A = \lam(A) S$ for some  $\lam(A)\in\F_*$. To determine $\lam(A)$, write $A\ve_i = \alpha_{1i}\ve_1 + \alpha_{2i}\ve_2$ with respect to some symplectic basis $\{\ve_1,\ve_2\}$ of $(V,S)$. Then
$$
\lam(A) = \lam(A)S(\ve_1,\ve_2) = S(A\ve_1,A\ve_2) = \alpha_{11}\alpha_{22} - \alpha_{12}\alpha_{21} = \det(A)\,,
$$
where $\det : {\rm GL}(V) \to \F_*$ is the determinant map.

In Proposition \ref{prop:azGL}, the two $V$-covariant quadrature systems $\Qo$ and $\Qo_A$ can thus be equivalent only if $\det A =1$. Introducing the {\em symplectic group} $\SL = \{A\in {\rm GL}(V)\mid \det(A) = 1\}$, the main consequence is that the set $\qq_{G_0\rtimes V}(\Omega,V)$ is empty whenever $G_0\nsubseteq \SL$. This important fact was already noticed in \cite[Section VI]{GiHoWo04}, where two $V$-covariant quadrature systems $\Po$ and $\Po'$ such that $\Po'=\Po_A$ for some element $A\in\SL$ are called {\em similar}. However, in general  similarity does not imply equivalence of quadratures, and it may happen that the set $\qq_{G_0\rtimes V}(\Omega,V)$ is empty also when $G_0\subseteq\SL$. Indeed, we have the following more precise statement.

\begin{proposition}\label{prop:Gquad-->Gmult}
Let $G_0\subseteq\SL$ be any subgroup. A quadrature system $\Po\in\qq_V(\Omega,V)$ is $(G_0\rtimes V)$-covariant if and only if its associated multiplier $m$ satisfies the equality
\begin{equation}\label{eq:def_inv_mult}
m_A = m \qquad \forall A\in G_0 \,.
\end{equation}
In this case, let $W$ be the Weyl system associated with $\Po$ and centered at the point $o\in\Omega$ such that ${\rm GL}(V)\cdot o = \{o\}$. Then, for any projective representation $U$ of $G_0$ associated with $\Po$, we have
\begin{equation}\label{eq:def_metap}
W(A\vv) = U(A)W(\vv)U(A)^* \qquad \forall \vv\in V,\ A\in G_0 \,.
\end{equation}
\end{proposition}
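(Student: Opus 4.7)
The plan is to reduce $(G_0 \rtimes V)$-covariance to the combination of $V$-covariance (already assumed) and $G_0$-covariance, and then to invoke the bijective correspondence between equivalence classes of $V$-covariant quadratures and Weyl multipliers provided by Proposition \ref{prop:quad-->mult} and Theorem \ref{teo:quad=mult}. Since every element of $G_0 \rtimes V$ factorizes as $(A,\vnull)(I,\vv)$ and $V$-covariance is already in force, $\Po$ is $(G_0 \rtimes V)$-covariant if and only if $\Po \sim \Po_A$ for every $A \in G_0$.

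For the forward implication I would apply Proposition \ref{prop:azGL}: for each $A \in G_0$ the transported quadrature $\Po_A$ lies in $\qq^{m_A}_V(\Omega, V, S_A)$, and since $A \in \SL$ has $\det A = 1$ we get $S_A = S$, so that $\Po_A$ is a $V$-covariant quadrature for the same phase-space $(\Omega, V, S)$ with associated Weyl multiplier $m_A$. Proposition \ref{prop:quad-->mult} then forces $m = m_A$. The reverse implication is symmetric: if $m_A = m$ for all $A \in G_0$, then $\Po$ and $\Po_A$ share an associated multiplier and hence a common equivalence class by the uniqueness part of Theorem \ref{teo:quad=mult}, so $\Po \sim \Po_A$.

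For the intertwining identity \eqref{eq:def_metap}, note that a projective representation $U$ of $G_0$ associated with $\Po$ consists, by definition, of unitaries $U(A)$ intertwining $\Po$ with $\Po_A$. The crucial point is that Proposition \ref{prop:azGL} identifies $W_A(\vv) = W(A\vv)$ as the Weyl system associated with $\Po_A$ and centered \emph{at the same origin} $o$ as $W$. Applying Corollary \ref{cor:equiv} with $o_1 = o_2 = o$ — so that $\vu_{o,o} = \vnull$ and $W(\vu_{o,o}) = \id$ — the general intertwining formula collapses to $W_A(\vv) = U(A) W(\vv) U(A)^*$, which upon unfolding the definition of $W_A$ is precisely \eqref{eq:def_metap}.

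The main subtlety, and really the only nontrivial point in the argument, is insisting that $W$ be centered at the specific $\GL(V)$-fixed origin $o$. This is what guarantees, via Proposition \ref{prop:azGL}, that $W_A$ is also centered at $o$; centering at any other point would introduce extra conjugation factors $W(\vu_{o_1, o_2})$ via Corollary \ref{cor:equiv}, and the clean form of \eqref{eq:def_metap} would be obscured. Everything else reduces to a bookkeeping exercise on top of the structural results already established for Weyl systems and their multipliers.
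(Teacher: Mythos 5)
Your proof is correct and follows essentially the same route as the paper's: Proposition \ref{prop:azGL} together with Proposition \ref{prop:quad-->mult} (equivalently, the uniqueness part of Theorem \ref{teo:quad=mult}) gives the multiplier criterion, and Corollary \ref{cor:equiv} applied with both Weyl systems centered at the ${\rm GL}(V)$-fixed origin $o$ yields \eqref{eq:def_metap} without extra conjugation factors. The only addition is your explicit reduction of $(G_0\rtimes V)$-covariance to the conditions $\Po\sim\Po_A$ for $A\in G_0$, which the paper leaves implicit.
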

\begin{proof}
If $\Po\in\qq^m_V(\Omega,V,S)$ and $A\in G_0$, then $\Po_A\in\qq^{m_A}_V(\Omega,V,S_A) = \qq^{m_A}_V(\Omega,V,S)$ by Proposition \ref{prop:azGL}. Therefore, by Proposition \ref{prop:quad-->mult} the quadrature systems $\Po$ and $\Po_A$ are equivalent for all $A\in G_0$ if and only if \eqref{eq:def_inv_mult} holds. The second claim follows since the Weyl system associated with $\Po_A$ and centered at $o$ is the projective representation $W_A$ defined in \eqref{eq:defWA}. Hence, if $U(A)$ is a unitary operator intertwining $\Po$ with $\Po_A$ as in \eqref{eq:defU}, by Corollary \ref{cor:equiv} we have $W_A(\vv) = U(A)W(\vv)U(A)^*$ for all $\vv\in V$, which is \eqref{eq:def_metap}.
\end{proof}

Equation \eqref{eq:def_inv_mult} suggests to introduce and study the action of the group $\SL$ on the set of the Weyl multipliers, which transforms any multiplier $m$ into $m_A$ for all $A\in\SL$. In particular, it justifies the following definition.

\begin{definition}
If $G_0\subseteq\SL$ is a subgroup and $m$ is a Weyl multiplier, we say that $m$ is {\em $G_0$-invariant} if $m_A = m$ for all $A\in G_0$.
\end{definition}

Note that, for any symplectic form $S$, if $m\in\mm(V,S)$, then also $m_A\in\mm(V,S)$ for all $A\in\SL$. By Proposition \ref{prop:Gquad-->Gmult}, we are interested in the set of fixed points of $\mm(V,S)$ under the action of the group $\SL$ or some subgroup $G_0\subset \SL$. However, the next proposition shows that, when $G_0$ is too large, it may happpen that it actually has no fixed points in $\mm(V,S)$.

\begin{proposition}\label{prop:inv_mult}
Let $S$ be any symplectic form on $V$. There exists a $\SL$-invariant multiplier $m_{\rm inv}\in\mm(V,S)$ if and only if $p\neq 2$. In this case, $m_{\rm inv}$ is unique, and given by
$$
m_{\rm inv} (\vu,\vv) = \dual{2^{-1}\vv}{\vu} \qquad \forall \vu,\vv\in V \, .
$$
\end{proposition}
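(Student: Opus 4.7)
The argument splits into two parts: a direct verification for existence when $p\neq 2$, and a transitivity-plus-cocycle computation that simultaneously gives uniqueness and the non-existence for $p=2$.

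\emph{Existence.} When $p\neq 2$, I would check that $m_{\rm inv}(\vu,\vv) := \dual{2^{-1}\vv}{\vu}$ is an $\SL$-invariant Weyl multiplier by direct calculation. The $2$-cocycle identity $m(\vu,\vv)m(\vu+\vv,\vw) = m(\vu,\vv+\vw)m(\vv,\vw)$ and condition (ii) of Definition \ref{def:WHmul} both follow at once from the $\Z_p$-bilinearity of $\Tr\circ S$ (for (ii) one just checks $\dual{\vu}{2^{-1}\vv}\dual{2^{-1}\vu}{\vv} = \dual{\vu}{\vv}$); condition (i) is immediate because $\sym{\vd_1}{\vd_2} = 0$ for $\vd_1,\vd_2$ in the same direction; and $\SL$-invariance is automatic since $\sym{A\vu}{A\vv} = \det(A)\,\sym{\vu}{\vv} = \sym{\vu}{\vv}$ for every $A\in\SL$.

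\emph{Reduction to one variable.} For the remaining assertions, the key observation is that $\SL$ acts transitively on $\{(\vu,\vv)\in V\times V \mid \sym{\vu}{\vv} = s\}$ for each $s\in\F_*$: any two such pairs are bases of $V$, and the unique linear map sending one to the other preserves $S$ on a basis, hence has determinant $1$. Any $\SL$-invariant Weyl multiplier $m$ therefore factors as $m(\vu,\vv) = \phi(\sym{\vu}{\vv})$ for some $\phi:\F\to\T$ with $\phi(0) = 1$ (on linearly dependent pairs $m = 1$ by condition (i), consistently with $S = 0$). Plugging this into the $2$-cocycle identity with $\{\vu,\vv\}$ linearly independent and $\vw = \alpha\vu + \beta\vv$ arbitrary yields $\phi(s_1)\phi(s_2+s_3) = \phi(s_1+s_2)\phi(s_3)$ for $s_1 = \sym{\vu}{\vv}\neq 0$ and $(s_2,s_3) = (\beta s_1,-\alpha s_1)$, which ranges over all of $\F\times\F$; specializing $s_3 = 0$ shows that $\phi$ is an additive character of $\F$.

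\emph{Case split.} Writing $\phi(s) = \e^{2\pi i \Tr{(as)}/p}$ for the unique $a\in\F$ that represents this character, condition (ii)—which after using that $\phi$ is a character becomes $\phi(s)^2 = \e^{-2\pi i \Tr{s}/p}$—translates into $\Tr{((2a+1)s)} = 0$ for all $s\in\F$, and hence into $2a+1 = 0$ in $\F$ by nondegeneracy of the $\Z_p$-bilinear pairing $(\alpha,\beta)\mapsto \Tr{(\alpha\beta)}$. When $p\neq 2$ this forces the unique solution $a = -2^{-1}$, which unfolds to $m(\vu,\vv) = \dual{2^{-1}\vv}{\vu} = m_{\rm inv}(\vu,\vv)$ and proves uniqueness; when $p=2$ we have $2a = 0$ for every $a\in\F$, so $2a+1 = 1 \ne 0$ has no solution and no $\SL$-invariant Weyl multiplier exists. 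The main subtlety is the transitivity claim for $\SL$ on pairs of fixed symplectic product; once that symplectic linear algebra fact is in hand, the entire problem collapses to a single equation on the function $\phi$ of one variable.
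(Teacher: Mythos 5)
Your proof is correct, but it follows a genuinely different route from the one in the paper. The paper fixes a linearly independent pair $\vu,\vv$ and, by testing invariance against two explicitly chosen elements $A,B\in\SL$, shows that $m(\vu,\cdot)$ and $m(\cdot,\vv)$ are characters of $V$; this produces an $\F$-linear map $T:V\to V$ with $m(\vu,\vw)=\dual{\vw}{T\vu}$, which commutes with all of $\SL$ and is therefore scalar, $T=\alpha I$ (the paper cites Artin for this commutant fact), and condition (ii) then forces $2\alpha=1$. You instead observe that $\SL$ acts transitively on the level sets $\{(\vu,\vv)\mid\sym{\vu}{\vv}=s\}$ for $s\neq 0$ — which is correct, since such pairs are bases and the map carrying one basis to the other has $\det = 1$ because $S_A=\det(A)S$ — so that any invariant $m$ factors as $\phi\circ S$ for a single function $\phi:\F\to\T$ with $\phi(0)=1$ (the value $1$ on linearly dependent pairs is forced by condition (i) together with the fact that $\sym{\vu}{\vv}=0$ exactly when $\vu,\vv$ are dependent). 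The cocycle identity then collapses to the one-variable equation $\phi(s_1+s_2)=\phi(s_1)\phi(s_2)$, and condition (ii) pins down the additive character, giving both uniqueness and the obstruction at $p=2$ in one stroke. Your reduction is arguably cleaner: it replaces the paper's chain of matrix-specific identities and the appeal to the triviality of the commutant of $\SL$ by an elementary transitivity lemma plus a functional equation on $\F$. What the paper's version buys in exchange is that it never needs to discuss the orbit structure of $\SL$ on $V\times V$, working instead entirely with characters of $V$ itself. Your existence verification for $p\neq 2$ (which the paper delegates to Example \ref{ex:multodd}) is also complete and correct.
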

\begin{proof}
Suppose $m\in\mm(V,S)$ is $\SL$-invariant, and fix linearly independent vectors $\vu,\vv\in V$. If $\alpha,\beta,\gamma\in\F$, we then have
\begin{align*}
m(\vu,\gamma(\alpha\vu+\beta\vv)) & = \begin{cases}
m_A (\gamma\vu,\beta\vv) & \mbox{if }\gamma\beta\neq 0\\
1 & \mbox{if }\gamma\beta=0
\end{cases} \\
& = m (\gamma\vu,\beta\vv)\,,
\end{align*}
where $A\in\SL$ is given by
$$
A\vu = \gamma^{-1}\vu \qquad \qquad A\vv = \gamma\vv + \beta^{-1}\gamma\alpha \vu\,.
$$
For $\beta_1,\beta_2\in\F$, we also have
\begin{align*}
& m(\vu,(\beta_1+\beta_2)\vv) = m(\vu,(\beta_1+\beta_2)\vv)m(\beta_1\vv,\beta_2\vv) \\
& \qquad \qquad = m(\vu + \beta_1\vv,\beta_2\vv)m(\vu,\beta_1\vv) = m_B (\vu,\beta_2\vv)m(\vu,\beta_1\vv) \\
& \qquad \qquad = m(\vu,\beta_2\vv)m(\vu,\beta_1\vv) \,,
\end{align*}
with $B\in\SL$ as follows
$$
B\vu = \vu + \beta_1 \vv \qquad \qquad B\vv = \vv\,.
$$
Therefore, if $\vu_1,\vu_2\in\F\vu$, $\vv_1,\vv_2\in\F\vv$ and $\delta\in\F$,
\begin{align*}
& m(\vu,\delta(\vu_1+\vv_1 + \vu_2+\vv_2)) = m(\vu,\delta(\vv_1 + \vv_2)) \\
& \qquad\qquad = m(\vu,\delta\vv_1) m(\vu,\delta\vv_2) = m(\vu,\delta(\vu_1+\vv_1)) m(\vu,\delta(\vu_2+\vv_2))\\
& \qquad\qquad = m(\delta\vu,(\vu_1+\vv_1)) m(\delta\vu,(\vu_2+\vv_2)) \,.
\end{align*}
This relation with $\delta=1$ implies that $m(\vu,\cdot)\in\hat{V}$ for all $\vu$. Hence there is a unique vector $T(\vu)\in V$ such that $m(\vu,\vw) = \dual{\vw}{T(\vu)}$ for all $\vw\in V$. On the other hand, the same relation with $\vu_2+\vv_2=\vnull$ yields $m(\vu,\delta\vw) = m(\delta\vu,\vw)$, hence the map $T:V\to V$ satisfies $T\delta = \delta T$ for all $\delta\in\F$. Analogously, also $m(\cdot,\vv)\in\hat{V}$ for all $\vv$, which implies that $T(\vu_1+\vu_2) = T(\vu_1) + T(\vu_2)$ for all $\vu_1,\vu_2\in V$. Thus, $T$ is $\F$-linear. By $\SL$-invariance of $m$ it follows that $A^{-1}TA = T$ for all $A\in\SL$, implying that $T=\alpha I$ for some $\alpha\in\F$ \cite[Theorem 4.8]{Artin}. Finally,
$$
\dual{\vu}{2\alpha\vv} = \overline{\dual{\vv}{\alpha\vu}} \dual{\vu}{\alpha\vv} = \overline{m(\vu,\vv)} m(\vv,\vu) = \dual{\vu}{\vv} \,,
$$
so that $2\alpha = 1$. Therefore, $p$ must be odd and $\alpha = 2^{-1}$, so that $m(\vu,\vv) = \dual{\vv}{2^{-1}\vu} = \dual{2^{-1}\vv}{\vu}$.
\end{proof}

The next theorem is the main result of the section.

\begin{theorem}\label{teo:nonex_p2}
The set $\qq_{\SL\rtimes V}(\Omega,V)$ is nonempty if and only if $p\neq 2$. In this case, for any symplectic form $S\in\Sym$, the set of quadratures $\qq_V^{m_{\rm inv}} (\Omega,V,S) \equiv \qq_V (\Omega,V,S) \cap \qq_{\SL\rtimes V} (\Omega,V)$ is the unique $(\SL\rtimes V)$-invariant equivalence class in $\qq_V (\Omega,V,S)$
\end{theorem}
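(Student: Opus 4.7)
The plan is that Theorem \ref{teo:nonex_p2} follows by assembling three facts already established in the excerpt: the parametrization of equivalence classes of $V$-covariant quadratures by Weyl multipliers (Theorem \ref{teo:quad=mult}), the translation of $(\SL \rtimes V)$-covariance into $\SL$-invariance of the associated multiplier (Proposition \ref{prop:Gquad-->Gmult}), and the dichotomy that an $\SL$-invariant Weyl multiplier exists in $\mm(V,S)$ if and only if $p \neq 2$, in which case it is unique (Proposition \ref{prop:inv_mult}). No new computation is needed.

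First, since $V$ is a subgroup of $\SL \rtimes V$, every $(\SL \rtimes V)$-covariant quadrature is in particular $V$-covariant, so $\qq_{\SL \rtimes V}(\Omega,V) \subseteq \qq_V(\Omega,V)$. Each element of $\qq_V(\Omega,V)$ induces a unique symplectic form on $V$ by Proposition \ref{prop:WHquad}, and the set $\qq_V(\Omega,V)$ decomposes as the disjoint union of the subsets $\qq_V(\Omega,V,S)$ as $S$ runs over $\Sym$. This yields
$$
\qq_{\SL \rtimes V}(\Omega,V) = \bigsqcup_{S \in \Sym} \bigl(\qq_V(\Omega,V,S) \cap \qq_{\SL \rtimes V}(\Omega,V)\bigr),
$$
so it is enough to analyze each intersection separately.

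Next, fix a symplectic form $S$. Proposition \ref{prop:Gquad-->Gmult} applied with $G_0 = \SL$ says that a quadrature in $\qq_V(\Omega,V,S)$ is $(\SL \rtimes V)$-covariant if and only if its associated Weyl multiplier is $\SL$-invariant. Proposition \ref{prop:inv_mult} then tells us that when $p = 2$ no such multiplier exists in $\mm(V,S)$, while when $p \neq 2$ there is exactly one, namely $m_{\rm inv}$. Combining with Theorem \ref{teo:quad=mult}, which associates a unique equivalence class of $\qq_V(\Omega,V,S)$ with each element of $\mm(V,S)$, we conclude that $\qq_V(\Omega,V,S) \cap \qq_{\SL \rtimes V}(\Omega,V)$ is empty in characteristic $2$, and coincides with the single equivalence class $\qq_V^{m_{\rm inv}}(\Omega,V,S)$ otherwise. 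Both assertions of the theorem follow at once.

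There is no genuine obstacle, since Proposition \ref{prop:inv_mult} has already absorbed all the hard linear-algebraic content. The only mildly subtle bookkeeping point worth flagging in the write-up is conceptual: the $(\SL \rtimes V)$-covariance of a quadrature $\Po$ is tested against whatever symplectic form $\Po$ itself induces on $V$, not some externally chosen one; Proposition \ref{prop:azGL} guarantees that the $\SL$-action preserves each $\mm(V,S)$, so the disjoint decomposition over $S \in \Sym$ is compatible with the covariance condition and the argument goes through uniformly in $S$.
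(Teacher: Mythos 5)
Your proposal is correct and follows exactly the paper's route: the paper's own proof consists of the single line ``Immediate from Theorem \ref{teo:quad=mult} and Propositions \ref{prop:Gquad-->Gmult} and \ref{prop:inv_mult},'' which are precisely the three ingredients you assemble. Your additional remarks about the disjoint decomposition over $S\in\Sym$ and the compatibility of the $\SL$-action with each $\mm(V,S)$ are sound bookkeeping that the paper leaves implicit.
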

\begin{proof}
Immediate from Theorem \ref{teo:quad=mult} and Propositions \ref{prop:Gquad-->Gmult} and \ref{prop:inv_mult}.
\end{proof}

When $p\neq 2$, the distinguished role played by the $(\SL\rtimes V)$-invariant equivalence class $\qq_V^{m_{\rm inv}} (\Omega,V,S)$ inside $\qq_V (\Omega,V,S)$ was already observed in \cite[Section VI, after Equation (72)]{GiHoWo04} in the special case $\F=\Z_p$. Moreover, the nonexistence of $(\SL\rtimes V)$-covariant quadrature systems when $\F=\Z_2$ was also noticed in \cite[Section VIII]{GiHoWo04} (see also Section \ref{sec:spin1/2} below).

When $p=2$, even if there do not exist $(\SL\rtimes V)$-covariant quadrature systems, one can still find properly contained subgroups $G_0\subset\SL$ admitting $G_0$-invariant Weyl multipliers, so that the set $\qq_{G_0\rtimes V}(\Omega,V)$ is nonempty. A particularly important class of these subgroups is the subject of the next section.

\begin{remark}
Theorem \ref{teo:nonex_p2} has a counterpart for $(\SL\rtimes V)$-covariant Wigner functions (see \cite[Theorem 7]{Gr06} for the definition of Wigner functions that are covariant with respect to the group of the affine symplectic phase-space transformations). Indeed, the Wigner function analogue of the uniqueness statement in Theorem \ref{teo:nonex_p2} was established by Gross in the case $\F=\Z_p$ with $p$ odd \cite[Theorem 23]{Gr06}. Zhu recently extended Gross' uniqueness result to all finite fields with odd characteristic, and he also proved that there do not exist $(\SL\rtimes V)$-covariant Wigner functions in even characteristic \cite[Theorem 3]{Zh15}.
\end{remark}

\begin{remark}
When there exists a quadrature system $\Po\in\qq_{G_0\rtimes V}(\Omega,V)$ for some subgroup $G_0\subseteq\SL$, any Weyl system $W$ associated with $\Po$ can be enlarged to a projective representation of the whole semidirect product $G_0\rtimes V$ which is still associated with $\Po$. Indeed, this is done by defining the extension $\tilde{W}(A,\vv) = U(A)W(\vv)$ for all $(A,\vv)\in G_0\rtimes V$, where $U$ is any projective representation of $G_0$ associated with $\Po$. In particular, if $G_0 = \SL$ and $W$ is the Weyl system centered at the point $o\in\Omega$ such that ${\rm GL}(V) \cdot o = \{o\}$, then \eqref{eq:def_metap} implies that the representation $U$ is the {\em Weil} \cite{We64,Ho73,Ge77} or {\em metaplectic} \cite{Ne02} representation of the symplectic group $\SL$. In this case, the representation $\tilde{W}$ of the full semidirect product $\SL\rtimes V$ is known with the name of {\em Clifford group} in the physics literature \cite{DeDM03,Appleby05JMP} (see also \cite{Ap09} and the references therein; for an exhaustive mathematical description of the Clifford group, we refer to \cite{BoRoWa61I,BoRoWa61II}). Proposition \ref{prop:inv_mult} then reflects the well known difficulties which arise when one tries to define the Weil representation in characteristic $p=2$ \cite{BoRoWa61II,Bl93,GuHa12}.
\end{remark}

\begin{remark}\label{rem:phase}
In \cite[Appendix B]{GiHoWo04}, for every irreducible Weyl system $W$, in any characteristic $p$ and for all symplectic maps $A\in\SL$, the authors provide an explicit construction of a unitary operator $U_A$ satisfying the relation
\begin{equation}\label{eq:GiHoWo}
U_A W(\vv) U_A^* = a(A,\vv) W(A\vv)\qquad \forall \vv\in V \,,
\end{equation}
where $a:\SL\times V\to\T$ is a nontrivial phase function. In order to explain the origin of the operators $\{U_A\mid A\in\SL\}$ of \eqref{eq:GiHoWo}, observe that by Proposition \ref{prop:SvNmult} the multiplier $m_A$ of $W_A$ is equivalent to the multiplier $m$ of $W$, hence, if $a_A : V\to\T$ is any function intertwining $m_A$ with $m$, the irreducible Weyl systems $W$ and $W'_A = a_A W_A$ have the same multiplier. Setting $a(A,\vv) = a_A(\vv)$ for all $\vv\in V$, the existence of the unitary operator $U_A$ satisfiyng \eqref{eq:GiHoWo} then follows from Stone-von Neumann theorem (Proposition \ref{prop:SvNrep}).\\
However, when $W$ arises as a Weyl system associated with some quadrature system $\Po\in\qq^m_V(\Omega,V,S)$, we remark that in general such an operator $U_A$ does not intertwine the quadrature system $\Po$ with the transformed one $\Po_A$. Indeed, if $W$ is centered at some point of $\Omega$, the Weyl system $W'_A$, which is associated with $\Po_A$, needs not be centered at any point. Corollary \ref{cor:equiv} then does not apply to $\Po$ and $\Po_A$. Thus, the operator $U_A$ may not satisfy \eqref{eq:defU}, hence in general it is unrelated to the covariance properties of the quadrature system $\Po$. In particular, when $p=2$ the existence of the $U_A$'s constructed in \cite{GiHoWo04} is not in contradiction with Theorem \ref{teo:nonex_p2} above.\\
Nevertheless, the unitary operator $U_A$ still yields the range conjugacy relation $\ran\Po_A = U_A(\ran\Po)U_A^*$; that is, $U_A$ maps the maximal set of MUBs corresponding to the quadrature systems $\Po$ onto the one corresponding to $\Po_A$, if MUBs are regarded {\em as sets of unordered bases}. The reason of this fact is similar to the proof of Proposition \ref{prop:permutation}. Indeed, for all $D\in\ss$ the restriction $\left.a_A\right|_D$ is a character of $D$ by Proposition \ref{prop:SvNmult}, hence $\left.a_A\right|_D = \overline{\dual{\cdot}{\vv_{A,D}}}$ for some $\vv_{A,D}+D\in V/D$. Let $\Po'_A\in\qq_V(\Omega,V)$ be the quadrature system
$$
\Po'_A (x+D) = \Po_A(x+\vv_{A,D}+D) \qquad \forall x+D\in\Aff{\Omega} \,.
$$
If the Weyl system $W$ associated with $\Po$ is centered at the point $o\in\Omega$ such that ${\rm GL}(V)\cdot o = \{o\}$, then $W'_A$ is the Weyl system associated with $\Po'_A$ and still centered at $o$. Since $U_A$ intertwines $W$ with $W'_A$, it also intertwines $\Po$ with $\Po'_A$ by Corollary \ref{cor:equiv}. Therefore, $\ran\Po_A = \ran\Po'_A = U_A(\ran\Po)U_A^*$.\\
It is worth stressing that, unlike the Weil representation, the map $A\mapsto U_A$ is not guaranteed to be a projective representation of $\SL$. Actually, it is a projective representation if and only $U_{AB}^* U_A U_B$ is a complex scalar for all $A,B\in\SL$, which is equivalent to the condition
$$
U_A U_B W(\vv) U_B^* U_A^* = U_{AB} W(\vv) U_{AB}^* \qquad \forall \vv\in V,\, A,B\in\SL
$$
by irreducibility of $W$. Inserting \eqref{eq:GiHoWo} into this equation, we find that the function $a$ must satisfy the cocycle identity
$$
a(AB,\vv) = a(A,B\vv)a(B,\vv) \qquad \forall \vv\in V,\, A,B\in\SL\,.
$$
We have seen that this happens in characteristic $p\neq 2$ by choosing as $W$ any Weyl system whose multiplier $m$ is $\SL$-invariant and letting $U_A \equiv U(A)$ be the metaplectic representation. This is still true when $\F=\Z_2$ by making an appropriate choice of the opertators $\{U_A\mid A\in\SL\}$ (see Section \ref{sec:spin1/2} below). However, when $|\F|=2^r$ with $r\geq 2$, the existence of a cocycle $a:\SL\times V\to\T$ and a projective representation $A\mapsto U_A$ satisfying \eqref{eq:GiHoWo} is an open problem to our knowledge. In fact, \cite[Theorem 7]{BoRoWa61II} seems to give a strong indication against this possibility.
\end{remark}

We conclude this section with the following improvement of Proposition \ref{prop:permutation}.

\begin{theorem}\label{teo:permutation2}
Let $\Po_1$ and $\Po_2$ be any two $V$-covariant quadrature systems, with $\Po_i$ acting on the Hilbert space $\hh_i$. Then there exist a unitary operator $U:\hh_1\to\hh_2$, a map $A\in{\rm GL}(V)$ and, for all $D\in\ss$, a vector $\vu_D\in V$, such that
\begin{equation*}
\Po_2(\lf) = U\Po_1(A\lf+\vu_D)U^* \qquad \forall \lf\in\Af{D}{\Omega},\,D\in\ss\,.
\end{equation*}
\end{theorem}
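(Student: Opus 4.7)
The plan is to reduce Theorem~\ref{teo:permutation2} to Proposition~\ref{prop:permutation} by first using a linear map in ${\rm GL}(V)$ to align the symplectic forms induced by the two quadrature systems. Write $\Po_i\in\qq_V(\Omega,V,S_i)$ for $i=1,2$. As observed in Section~\ref{sec:Weylsyst}, any two symplectic forms on the $2$-dimensional space $V$ are proportional, so there exists $\lam\in\F_*$ with $S_2=\lam S_1$.

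Next I would pick any $A\in{\rm GL}(V)$ with $\det A=\lam$ (which exists since $\det:{\rm GL}(V)\to\F_*$ is surjective) and invoke Proposition~\ref{prop:azGL}. That proposition guarantees that the transformed quadrature system $(\Po_1)_A$ is again $V$-covariant, still acts on $\hh_1$, and induces the symplectic form $S_{1,A}(\cdot,\cdot)=S_1(A\cdot,A\cdot)=\det(A)\,S_1=\lam S_1=S_2$. Hence $(\Po_1)_A$ and $\Po_2$ both belong to $\qq_V(\Omega,V,S_2)$, i.e.~they share the common symplectic setting required by Proposition~\ref{prop:permutation}.

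Applying Proposition~\ref{prop:permutation} to the pair $((\Po_1)_A,\Po_2)$ then produces a unitary $U:\hh_1\to\hh_2$ together with vectors $\vv_D\in V$, one for each $D\in\ss$, such that $\Po_2(\lf)=U\,(\Po_1)_A(\lf+\vv_D)\,U^*$ for every $\lf\in\Af{D}{\Omega}$. Unwinding the definition $(\Po_1)_A(\mf)=\Po_1(A\cdot\mf)$, together with the elementary identities $A\cdot(x+\vv_D)=A\cdot x+A\vv_D$ and $A\cdot(x+D)=A\cdot x+AD$ recalled in Section~\ref{sec:covquadsyst}, gives $A\cdot(\lf+\vv_D)=A\lf+A\vv_D$. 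Setting $\vu_D=A\vv_D$ then yields the desired identity $\Po_2(\lf)=U\,\Po_1(A\lf+\vu_D)\,U^*$.

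I do not foresee any serious obstacle: the substantive work is already packaged into Propositions~\ref{prop:azGL} and~\ref{prop:permutation}, and the only extra ingredient is the surjectivity of $\det:{\rm GL}(V)\to\F_*$, which allows us to realize the scalar $\lam$ relating $S_1$ and $S_2$ as the determinant of an actual linear map. In essence, the theorem records that passing from the fixed-phase-space statement of Proposition~\ref{prop:permutation} to the general one costs exactly one element of ${\rm GL}(V)$ on top of the direction-dependent translations already delivered there.
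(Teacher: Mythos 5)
Your proposal is correct and follows essentially the same route as the paper: align the two induced symplectic forms via Proposition \ref{prop:azGL} using a map of the appropriate determinant, then apply Proposition \ref{prop:permutation}. The only cosmetic difference is that you transform $\Po_1$ by $A$ while the paper transforms $\Po_2$ by $A^{-1}$, which merely changes the bookkeeping of the translation vectors ($\vu_D = A\vv_D$ versus $\vu_D = \vv_{AD}$).
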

\begin{proof}
Let $S_i$ be the symplectic form induced by $\Po_i$ on $(\Omega,V)$. Pick $A\in{\rm GL}(V)$ such that $S_2 = \det(A) S_1$, and let $\Po_2' = (\Po_2)_{A^{-1}}$. By Proposition \ref{prop:azGL}, we have $\Po_2'\in\qq_V(\Omega,V,S_1)$, hence there exist a unitary $U:\hh_1\to\hh_2$ and vectors $\vv_D\in V$ such that
\begin{equation*}
\Po_2'(\lf) = U\Po_1(\lf+\vv_D)U^* \qquad \forall \lf\in\Af{D}{\Omega},\,D\in\ss
\end{equation*}
by Proposition \ref{prop:permutation}. Going back to the quadrature system $\Po_2$ and substituting $\vu_D = \vv_{AD}$, we obtain the claim.
\end{proof}

The above result implies that the ranges of any two $V$-covariant quadrature systems are unitarily conjugated, regardless of the symplectic forms they induce on $(\Omega,V)$ (cf.~Remark \ref{rem:phase}, where the symplectic form $S$ was fixed). It should be stressed that this is a distinguished property of $V$-covariant quadratures, which does not extend to the noncovariant ones (see \cite{Kan12} for more details on quadrature systems whose ranges are not unitarily conjugated).

\section{Maximal nonsplit toruses and systems of rotated quadratures}\label{sec:nonsplit}

We now define the finite field analogues of the rotation group of the Euclidean plane $\R^2$, which have been first introduced in the context of MUBs by \cite{Chau05,SuWo07} and further studied in \cite{Su07,Ap09} (see also \cite{GuHaSo08,GuHaSo08bis} for applications in signal analysis). As it will be proved below, there exist quadrature systems in $\qq_V(\Omega,V)$ that are covariant with respect to such groups in all characteristics $p$ (even or odd).

\begin{definition}
An element $A\in\SL$ is {\em nonsplit} if $AD\neq D$ for all $D\in\ss$. A {\em nonsplit torus} is a cyclic subgroup of $\SL$ generated by a nonsplit element.
\end{definition}

An element $A\in\SL$ is nonsplit if and only if its characteristic polynomial
\begin{equation}\label{eq:charpol}
p_A(X) = \det(A-XI) = X^2 - {\rm tr}(A) X + 1
\end{equation}
has no solution in the field $\F$ (in the above expression, $I$ is the identity of $V$, and ${\rm tr}(A)$ is the trace of $A$).

In order to describe a nonsplit element $A\in\SL$ and the stucture of the torus it generates, it is useful to fix a symplectic basis of $V$ and represent $A$ as a unit determinant $2\times 2$ matrix with entries in $\F$. If $z$ and $\overline{z}$ are the two conjugate roots of $p_A$ in the quadratic extension $\tilde{\F}$ of $\F$, and $(\alpha_1,\alpha_2)^T$ and $(\overline{\alpha_1},\overline{\alpha_2})^T$ are the two eigenvectors of $A$ corresponding to the eigenvalues $z$ and $\overline{z}$, we have
$$
A = U \left(\begin{array}{cc} z & 0\\ 0 & \overline{z} \end{array}\right) U^{-1} \quad\text{with}\quad U = \left(\begin{array}{cc} \alpha_1 & \overline{\alpha_1} \\ \alpha_2 & \overline{\alpha_2} \end{array}\right) \,.
$$
In particular, the nonsplit torus generated by $A$ is the subgroup
$$
T_A = \left\{U \left(\begin{array}{cc} z^k & 0\\ 0 & \overline{z}^k \end{array}\right) U^{-1} \mid k\in\Z\right\}\,.
$$
Note that the commutant of $T_A$ in $\SL$ is the subgroup
$$
T_A' = \left\{U \left(\begin{array}{cc} z' & 0\\ 0 & \overline{z}' \end{array}\right) U^{-1} \mid z'\in\tilde{\F} \text{ and } z'\overline{z}' = 1\right\}\,.
$$
Since the set
\begin{equation}\label{eq:defM}
M = \{z'\in\tilde{\F} \mid z'\overline{z}' = 1\}
\end{equation}
is a cyclic subgroup of the multiplicative group $\tilde{F}_*=\tilde{F}\setminus\{0\}$ \cite[Theorem IV.1.9]{LanAlg}, the group $T_A'$ is cyclic: it is the {\em maximal} nonsplit torus containing $T_A$ (see \cite[Section 16.2]{Hump75} for the definition of toruses in general algebraic groups).

Maximal nonsplit toruses in $\SL$ are all the subgroups of the form
\begin{equation}\label{eq:maxnsgen}
T =
\left\{\frac{1}{\alpha_1\overline{\alpha_2} - \overline{\alpha_1}\alpha_2}\left(\begin{array}{ccc}
z_0^k \alpha_1\overline{\alpha_2} - \overline{z_0^k \alpha_1}\alpha_2 & \ & \overline{z_0^k \alpha_1}\alpha_1 - z_0^k \alpha_1\overline{\alpha_1} \\
z_0^k \alpha_2\overline{\alpha_2} - \overline{z_0^k \alpha_2}\alpha_2 & \ & \overline{z_0^k \alpha_2}\alpha_1 - z_0^k \alpha_2\overline{\alpha_1}
\end{array}\right) \mid k\in\Z\right\}
\end{equation}
where $\alpha_1,\alpha_2\in\tilde{\F}$ with $\alpha_1\overline{\alpha_2}\notin\F$ and $z_0$ is any generator of the cyclic group $M$ defined in \eqref{eq:defM}. A concrete example of a maximal nonsplit torus can be constructed in the following way: the symplectic matrix
\begin{equation}\label{eq:A_triang_nonsplit}
A = \left(\begin{array}{cc} z_0 + \overline{z_0} & 1\\ -1 & 0 \end{array}\right)
\end{equation}
has eigenvalues $z_0$ and $\overline{z_0}$, hence $T_A=T_A'$, that is, $T_A$ is a maximal nonsplit torus. This group $T_A$ corresponds to the choice $\alpha_1 = z_0$ and $\alpha_2 = -1$ in \eqref{eq:maxnsgen}, and it is the prototype of a maximal nonsplit torus, as every other maximal nonsplit torus is in the conjugacy class of $T_A$ in $\SL$ by \cite[Corollary A of Section 21.3]{Hump75}. In other words, by suitably choosing the symplecic basis of $V$, any maximal nonsplit torus can be put in the form $T=\{A^k\mid k\in\Z\}$ with $A$ given by \eqref{eq:A_triang_nonsplit}.

We now evaluate the order of a maximal nonsplit torus $T$. As $|T|=|M|$, this amounts to finding the order of $M$, that is, the kernel of the homomorphism $\phi:\tilde{\F}_* \to \F_*$ given by $\phi(z)=z\overline{z}$. In order to do it, observe first of all that $\phi(\F_*) = \F_*^2$, the group of the squares of $\F_*$. If $p=2$, then $\F_*^2 = \F_*$. Thus, $\phi$ is surjective, hence $|\tilde{\F}_*|/|M| = |\F_*|$, that is, $|M| = |\tilde{\F}_*|/|\F| = |\F|+1$. If $p\neq 2$, pick any element $\gamma\in\F_*\setminus\F_*^2$, and let $j,-j\in\tilde{\F}$ be its square roots. We have $\phi(\alpha+j) = \alpha^2-\gamma$ for all $\alpha\in\F$, hence $|\phi(\F+j)| = |\F^2| = |\F_*^2|+1$, which implies that $\F_*^2$ is a proper subgroup of $\phi(\tilde{\F}_*)$. Since $\F_*^2$ has index $2$ in $\F_*$, it follows that $\phi$ is surjective also in this case, hence $|M|=|\F|+1$ again.

Finally, we look at the action of a maximal nonsplit torus $T$ on the set of directions $\ss$. Since the intersection $M\cap\F_* = \{1,-1\}$, we see that $T$ contains exactly two split elements in characteristic $p\neq 2$, that is, $I$ and $-I$, while if $p=2$ it does not contain any nontrivial split element. Therefore, the stabilizer subgroup for the action of $T$ on $\ss$ is $\{I,-I\}$ if $p\neq 2$, and it is trivial if $p=2$. As $|\ss|=|T|$, this implies that when $p\neq 2$ the torus $T$ has two orbits in $\ss$ with $(|\F|+1)/2$ elements in each orbit, while it acts freely and transitively on $\ss$ when $p=2$.

We summarize the main points of the above discussion in the following proposition (cf.~\cite{SuWo07} in the case $p$ even, and \cite[Theorems 7 and 8]{Ap09} for $p$ odd).

\begin{proposition}\label{prop:maxnsplit}
There exist nonsplit toruses in $\SL$ for every characteristic $p$. Each nonsplit torus $T$ is contained in a uniquely determined maximal nonsplit torus, that is, its commutant $T'$ in $\SL$. A maximal nonsplit torus has order $|\F|+1$, and all maximal nonsplit toruses are conjugated in $\SL$. If $T$ is a maximal nonsplit torus, then its action on the set of directions $\ss$
\begin{itemize}
\item[-] is free and transitive if $p=2$;
\item[-] has two orbits with $(|\F|+1)/2$ elements in each orbit if $p\neq 2$.
\end{itemize}
\end{proposition}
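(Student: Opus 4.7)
The plan is to collect the arguments from the discussion preceding the statement into a self-contained proof in four short steps. First, for existence of a nonsplit torus in any characteristic, I would exhibit the specific element $A$ of \eqref{eq:A_triang_nonsplit} with $z_0$ a generator of the cyclic group $M=\{z\in\tilde{\F}_*:z\overline{z}=1\}$. Its characteristic polynomial factors as $(X-z_0)(X-\overline{z_0})$ over $\tilde{\F}$, and $z_0\neq \overline{z_0}$ because $z_0\in\F$ would force $z_0^2=1$ and contradict the fact that $z_0$ generates a group of order $|\F|+1\geq 3$. Hence $p_A$ has no $\F$-root and $A$ is nonsplit.

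Second, given a nonsplit torus $T=\langle A\rangle$, I would diagonalize $A$ over $\tilde{\F}$ and observe that $B\in\SL$ commutes with $A$ iff $B$ is diagonal in the same eigenbasis; thus the commutant $T'$ is exactly $\{\mathrm{diag}(z',\overline{z'}):z'\in M\}$ in that basis, which is cyclic of order $|M|$. Any larger nonsplit torus containing $T$ must commute with $A$, hence lie inside $T'$, so $T'$ is the unique maximal nonsplit torus containing $T$. To compute $|M|$ I would show surjectivity of the norm map $\phi:\tilde{\F}_*\to\F_*$, $\phi(z)=z\overline{z}$, yielding $|M|=|\tilde{\F}_*|/|\F_*|=|\F|+1$. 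In characteristic $2$, surjectivity is immediate since squaring bijects $\F_*$; in odd characteristic I would fix a non-square $\gamma\in\F_*$ with $\gamma=j^2$ for some $j\in\tilde{\F}\setminus\F$ and note that $\phi(\F+j)=\{\alpha^2-\gamma:\alpha\in\F\}$ has $|\F|$ elements, strictly more than $|\F_*^2|=(|\F|-1)/2$, which forces $\phi$ past the index-$2$ subgroup $\F_*^2$. For the conjugacy statement, I would invoke the cited Humphreys reference, which says that all maximal toruses of a given type are $\SL$-conjugate.

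Finally, to describe the action of a maximal nonsplit torus $T$ on $\ss$, I would argue that $B\in T$ fixes a direction $D\in\ss$ iff $B$ has an eigenvector inside the $\F$-line $D$. Because $B$ is diagonalizable over $\tilde{\F}$ with eigenvalues in $M$, this forces both eigenvalues to lie in $M\cap\F=\{z\in\F:z^2=1\}$ and to coincide, so $B$ is either the identity, the scalar $-I$, or a unipotent; the last possibility is ruled out by semisimplicity of elements of $T$. In characteristic $2$ the set $M\cap\F$ reduces to $\{1\}$, so only $B=I$ stabilizes a direction and $T$ acts freely on $\ss$; since $|T|=|\F|+1=|\ss|$, freeness upgrades to transitivity. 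In odd characteristic the stabilizer is $\{I,-I\}$ of order $2$, producing orbits of size $(|\F|+1)/2$, and the equality $|\ss|=|\F|+1$ then gives exactly two orbits.

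The main obstacle I foresee is the $\SL$-conjugacy of maximal nonsplit toruses in characteristic $2$: the standard odd-characteristic scaling trick that adjusts the determinant of a change-of-eigenbasis matrix by a sign collapses, so a direct elementary argument would require handing the determinant inside $M$ rather than in $\{\pm 1\}$. For that reason the cleanest path is to cite the algebraic-group reference as already done in the text, rather than reprove the conjugacy by hand.
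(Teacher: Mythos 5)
Your proposal is correct and follows essentially the same route as the paper's own discussion preceding the statement: existence via the explicit generator \eqref{eq:A_triang_nonsplit}, identification of the maximal torus with the commutant after diagonalizing over $\tilde{\F}$, the order count via surjectivity of the norm map $\phi(z)=z\overline{z}$, conjugacy via the Humphreys citation, and the orbit analysis through $M\cap\F_*$. The only slip is the assertion that $\phi(\F+j)=\{\alpha^2-\gamma \mid \alpha\in\F\}$ has $|\F|$ elements --- it has $(|\F|+1)/2$, since $\alpha$ and $-\alpha$ give the same value --- but this is still strictly larger than $|\F_*^2|=(|\F|-1)/2$, so the surjectivity argument goes through unchanged.
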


If $T\subset\SL$ is a maximal nonsplit torus, the semidirect product $T\rtimes V$ is the finite analogue of the Euclidean group of the plane $\R^2$. According to this analogy, we say that any $\Po\in\qq_{T\rtimes V}(\Omega,V)$ is a {\em system of rotated quadratures}. To provide a further explanation of this terminology, let $A$ be a generator of $T$, fix a direction $D_{\oo}$ on each orbit $\oo$ of $T$ in $\ss$, and pick a vector $\vu\in V$ not belonging to any subspace $D_{\oo}$. Moreover, as usual denote by $o$ the point of $\Omega$ fixed by ${\rm GL}(V)$. Then, every affine line $\lf\in\Aff{\Omega}$ can be written as
$$
\lf = A^{k(\lf)}(o+\alpha(\lf)\vu+D_{\oo(\lf)}) \,,
$$
where the orbit $\oo(\lf)$ is uniquely determined by $\lf$, while the couple $(\alpha(\lf),k(\lf))\in\F\times\Z_{|\F|+1}$ is unambiguously defined if $p=2$, and it is unique up to the substitution $(\alpha(\lf),k(\lf)) \mapsto (-\alpha(\lf),k(\lf)+(|\F|+1)/2)$ if $p\neq 2$. Therefore, we have
\begin{equation}\label{eq:quad_rot}
\Po(\lf) = U(A)^{k(\lf)} W(\alpha(\lf)\vu) \Po(o+D_{\oo(\lf)}) W(\alpha(\lf)\vu)^* U(A)^{k(\lf)\,\ast} \,,
\end{equation}
where $U$ and $W$ are any projective representation of $T$ and any Weyl system associated with $\Po$, respectively (see Theorem \ref{teo:metap} below for the explicit form of $U$). The last formula shows that, when $p=2$ [respectively, when $p\neq 2$] every projection $\Po(\lf)$ can be obtained by unitary conjugation of one fixed projection $\Po(\lf_0)$ [resp., two fixed projections $\Po(\lf_1)$ and $\Po(\lf_2)$] by means of the representations $U$ and $W$, and it thus justifies the name of rotated quadratures for $\Po$.

The next easy result is the key fact for proving the existence of rotated quadratures in all field characteristics.

\begin{proposition}\label{prop:exnsplit}
If $T$ is a maximal nonsplit torus, then, for all $S\in\Sym$, there exists a $T$-invariant multiplier $m\in\mm(V,S)$.
\end{proposition}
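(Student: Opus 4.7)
The plan is to construct a $T$-invariant Weyl multiplier from an arbitrary one by a cohomological averaging argument, exploiting that $|T|=|\F|+1$ is coprime to the characteristic $p$ of $\F$. First I would fix any $m_0\in\mm(V,S)$, whose existence is guaranteed by Proposition~\ref{prop:SvNmult}, and introduce the pulled-back multiplier $m_{0,A}(\vu,\vv):=m_0(A\vu,A\vv)$ for each $A\in T$. This again lies in $\mm(V,S)$: condition (i) of Definition~\ref{def:WHmul} persists because $A$ maps the direction $D$ to the direction $AD$, while condition (ii) persists because $\det A=1$ gives $S(A\vu,A\vv)=S(\vu,\vv)$. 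The goal is then to produce $m\in\mm(V,S)$ with $m_A=m$ for every $A\in T$.

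Next, recalling from the proof of Proposition~\ref{prop:SvNmult} that $\mm(V,S)$ is a torsor under the finite abelian group $\ff/\hat V$, where
$$\ff=\{a:V\to\T\mid \left. a\right|_D\in\hat D\ \forall D\in\ss\}$$
and $\hat V\subset\ff$ is the image of the characters of $V$, I would write $m_{0,A}=\partial c_A\cdot m_0$ with $c_A\in\ff$, so that the class $[c_A]\in\ff/\hat V$ is well defined. Using $m_{0,AB}=(m_{0,A})_B$ (valid because $T$ is abelian) and the identity $(\partial a)_B=\partial(a\circ B)$, a direct computation gives the cocycle identity
$$[c_{AB}]=[c_A\circ B]\cdot [c_B]\qquad\forall A,B\in T,$$
where $T$ acts on $\ff/\hat V$ on the right via $c\mapsto c\circ A$.

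The crux is the observation that $\ff/\hat V$ has exponent $p$ (since each $\hat D\cong D$ does), while $\gcd(|T|,p)=\gcd(|\F|+1,|\F|)=1$; consequently multiplication by $|T|$ is invertible on $\ff/\hat V$. Choosing $k\in\Z$ with $k|T|\equiv 1\pmod{p}$, I would set $[b]:=\bigl[\prod_{A\in T}c_A\bigr]^k\in\ff/\hat V$, so that $[b]^{|T|}=\prod_{A\in T}[c_A]=:C$. Re-indexing by $A'=AB$, the cocycle identity yields $C\circ B=C\cdot [c_B]^{-|T|}$, whence $([b]\circ B)^{|T|}=([b]\cdot [c_B]^{-1})^{|T|}$; since no nontrivial element of $\ff/\hat V$ has $|T|$-th power equal to the identity, this forces
$$[c_B]=[b]\cdot [b\circ B]^{-1}\qquad\forall B\in T.$$
Picking a representative $b\in\ff$ and $\chi_B\in\hat V$ with $(b\circ B)\cdot c_B=b\cdot\chi_B$, one verifies that $m:=\partial b\cdot m_0$ lies in $\mm(V,S)$---since $b|_D$ is a character of $D$, the coboundary $\partial b$ equals $1$ on $D\times D$, so $m$ inherits condition (i) from $m_0$, and condition (ii) is unchanged as $\partial b$ is symmetric---and satisfies $m_B=\partial((b\circ B)\cdot c_B)\cdot m_0=\partial(b\chi_B)\cdot m_0=\partial b\cdot m_0=m$, the character $\chi_B$ dropping out because $\partial\chi_B=1$.

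The main obstacle---conceptually, the only obstacle---is the coprimality between $|T|=|\F|+1$ and the exponent $p$ of $\ff/\hat V$; after that, the averaging is formal and the bookkeeping routine. In characteristic $p\neq 2$ the statement is in fact immediate from Proposition~\ref{prop:inv_mult}, since the $\SL$-invariant multiplier $m_{\mathrm{inv}}$ is a fortiori $T$-invariant, but the argument sketched above treats $p=2$ and $p\neq 2$ uniformly, which covers the genuinely new case.
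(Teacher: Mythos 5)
Your proof is correct, but it takes a genuinely different route from the paper's. The paper disposes of $p\neq 2$ by citing $m_{\rm inv}$, and for $p=2$ it simply sets $m=\prod_{A\in T}(m_0)_A$: the product of the pulled-back multipliers is itself a Weyl multiplier because condition (i) of Definition \ref{def:WHmul} passes to products, while the commutator defect in condition (ii) becomes $\dual{\vu}{\vv}^{|T|}=\dual{\vu}{\vv}$ since $b_S$ is $\{\pm 1\}$-valued in characteristic $2$ and $|T|=|\F|+1$ is odd; $T$-invariance is then a one-line re-indexing. You instead keep $m_0$ fixed and perform the averaging one level down, in the group $\ff/\hat{V}$ acting simply transitively on $\mm(V,S)$: you verify the cocycle identity for the classes $[c_A]$, invert multiplication by $|T|$ using that $\ff/\hat{V}$ has exponent $p$ with $\gcd(|T|,p)=1$, and conclude that the cocycle is a coboundary, i.e.\ that the $T$-action on the torsor $\mm(V,S)$ has a fixed point. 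This is in effect the vanishing of $H^1(T,\ff/\hat{V})$, and all the individual steps check out (the action $c\mapsto c\circ B$ does preserve $\ff$ and $\hat{V}$, $\partial b$ is trivial on each $D\times D$ and symmetric, and $\partial\chi_B=1$). What your argument buys is uniformity in $p$ and independence from the special form of $b_S$; what it costs is length, since the paper's direct product already does the job in three lines -- and in fact the paper's trick is also characteristic-independent, because $b_S$ takes values in $p$-th roots of unity and $|T|=|\F|+1\equiv 1\pmod p$, so $b_S^{|T|}=b_S$ always. Note also that your parenthetical ``valid because $T$ is abelian'' is unnecessary: $(m_A)_B=m_{AB}$ holds for any $A,B\in{\rm GL}(V)$ by definition of the pull-back.
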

\begin{proof}
If $p\neq 2$, it is enough to choose $m = m_{\rm inv}$. Otherwise, if $p=2$,
pick any $m_0\in\mm(V,S)$, and let $m=\prod_{A\in T} (m_0)_A$. Then $m$ is a multiplier of $V$, which clearly satisfies item (i) of Definition \ref{def:WHmul}. Since $\overline{(m_0)_A (\vu,\vv)} (m_0)_A (\vv,\vu) = \dual{\vu}{\vv} = (-1)^{\Tr{\sym{\vu}{\vv}}}$ for every $A\in T$, we have
$$
\overline{m(\vu,\vv)} m(\vv,\vu) = (-1)^{|T|\Tr{\sym{\vu}{\vv}}} = (-1)^{\Tr{\sym{\vu}{\vv}}} = \dual{\vu}{\vv}
$$
because $|T|=|\F|+1$ is odd. Therefore, also item (ii) of Definition \ref{def:WHmul} is satisfied by $m$, hence $m\in\mm(V,S)$. For all $B\in T$,
$$
m_B = \prod_{A\in T} (m_0)_{AB} = \prod_{A\in T} (m_0)_A = m \,,
$$
which shows that $m$ is $T$-invariant.
\end{proof}

We remark that in general, contrary to the case of invariant multipliers, $T$-invariant multipliers are not unique when $T$ is a maximal nonsplit torus (see Section \ref{sec:spin1/2} below for an example).

\begin{theorem}
For any characteristic $p$ and $S\in {\rm Sym}(V)$, if $T\subset\SL$ is a nonsplit torus, then the set $\qq_{T\rtimes V} (\Omega,V,S)$ is nonempty.
\end{theorem}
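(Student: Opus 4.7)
The plan is to chain together three results that are already in place: Proposition~\ref{prop:maxnsplit} (structure of nonsplit toruses), Proposition~\ref{prop:exnsplit} (existence of a $T$-invariant Weyl multiplier when $T$ is maximal nonsplit), and the correspondence between $V$-covariant quadratures and Weyl multipliers developed in Theorem~\ref{teo:quad=mult} and Proposition~\ref{prop:Gquad-->Gmult}. The only real content beyond these results is to remove the ``maximal'' hypothesis from Proposition~\ref{prop:exnsplit}.

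First I would reduce to the maximal case. By Proposition~\ref{prop:maxnsplit}, any nonsplit torus $T\subset\SL$ is contained in its commutant $T'$, which is a maximal nonsplit torus. Any $T'$-invariant object is a fortiori $T$-invariant, so it suffices to produce a quadrature system in $\qq_{T'\rtimes V}(\Omega,V,S)$.

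Next, invoke Proposition~\ref{prop:exnsplit}: there exists a Weyl multiplier $m\in\mm(V,S)$ which is $T'$-invariant, i.e.\ $m_A=m$ for every $A\in T'$. By Theorem~\ref{teo:quad=mult}, the equivalence class $\qq_V^m(\Omega,V,S)$ is nonempty, so we may pick some $\Po\in\qq_V^m(\Omega,V,S)$. Since $T'\subseteq \SL$ and the associated multiplier of $\Po$ is $T'$-invariant, Proposition~\ref{prop:Gquad-->Gmult} guarantees that $\Po$ is $(T'\rtimes V)$-covariant, hence $(T\rtimes V)$-covariant. Therefore $\Po\in\qq_{T\rtimes V}(\Omega,V,S)$, proving the set is nonempty.

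Essentially no obstacle remains: all the technical work (constructing $T$-invariant multipliers, passing from multipliers to quadratures, and reading off covariance from multiplier invariance) has been done in the preceding sections. The only step that requires a small remark is the reduction from arbitrary nonsplit $T$ to maximal nonsplit $T'$, which is immediate from Proposition~\ref{prop:maxnsplit}.
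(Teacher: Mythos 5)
Your proposal is correct and follows the same route as the paper: reduce to the maximal case (the paper phrases this via the inclusion $\qq_{T_1\rtimes V}(\Omega,V,S)\subset\qq_{T_2\rtimes V}(\Omega,V,S)$ for $T_2\subset T_1$, you via $T\subseteq T'$ and inheritance of invariance, which is the same observation), then combine Proposition~\ref{prop:exnsplit}, Theorem~\ref{teo:quad=mult} and Proposition~\ref{prop:Gquad-->Gmult}. No issues.
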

\begin{proof}
Since $\qq_{T_1\rtimes V} (\Omega,V,S)\subset\qq_{T_2\rtimes V} (\Omega,V,S)$ whenever $T_2\subset T_1$, it is not restrictive to assume that $T$ is maximal. In this case, the claim follows from Theorem \ref{teo:quad=mult} and Propositions \ref{prop:Gquad-->Gmult} and \ref{prop:exnsplit}.
\end{proof}

For any nonsplit torus $T\subset\SL$ and quadrature system $\Po\in\qq_{T\rtimes V} (\Omega,V)$, we now explicitely exhibit the projective representation $U$ of $T$ associated with $\Po$. Such a representation is the finite analogue of the oscillator representation of quantum homodyne tomography, and its effect is to rotate $\Po$ in different directions according to the action of $T$ on the set $\ss$, as described in formula \eqref{eq:quad_rot}. We stress again that no restriction is made on the characteristics $p$ of the field.

\begin{theorem}\label{teo:metap}
Let $T$ be a nonsplit torus, and suppose $\Po\in\qq_{T\rtimes V} (\Omega,V)$. Let $W$ be the Weyl system associated with $\Po$ and centered at the point $o\in\Omega$ such that ${\rm GL}(V)\cdot o = \{o\}$, and let $m$ be its Weyl multiplier. Then, the projective representation $U$ of $T$ associated with $\Po$ is given by
\begin{equation}\label{eq:formulaU}
U(A) = \frac{c(A)}{|\F|} \sum_{\vu\in V} m(\vu,(A-I)^{-1}\vu) W(\vu) \qquad \forall A\in T\setminus\{I\}\,,
\end{equation}
where $c(A)\in\T$ is an arbitrary phase factor.
\end{theorem}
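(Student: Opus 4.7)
The plan is to compute the Hilbert--Schmidt Fourier coefficients of $U(A)$ in the orthonormal basis $\{|\F|^{-1/2}W(\vu)\mid\vu\in V\}$ of $\lh$, whose existence is guaranteed by irreducibility of $W$ (Corollary \ref{cor:irred}) and Proposition \ref{prop:SvNrep}. This immediately yields
\begin{equation*}
U(A) = \frac{1}{|\F|}\sum_{\vu\in V}\tr{W(\vu)^*U(A)}\,W(\vu) \, ,
\end{equation*}
so the claim reduces to the identity $\tr{W(\vu)^*U(A)} = c(A)\,m(\vu,(A-I)^{-1}\vu)$ for some $c(A)\in\T$ independent of $\vu$. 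First I would note that $A-I$ is invertible for every $A\in T\setminus\{I\}$: every non-identity element of a nonsplit torus is either itself nonsplit (hence has no eigenvalue in $\F$) or equals $-I$ in odd characteristic (in which case $A-I=-2I$ is invertible).

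The key trace manipulation goes as follows. By Proposition \ref{prop:Gquad-->Gmult} I have $U(A)W(\vv)U(A)^* = W(A\vv)$, equivalently $U(A)W(\vv)^* = W(A\vv)^*U(A)$, so cyclicity of the trace gives
\begin{equation*}
\tr{W(\vu)^*U(A)} = \tr{W(\vv)W(\vu)^*U(A)W(\vv)^*} = \tr{W(\vv)W(-\vu)W(-A\vv)\,U(A)}
\end{equation*}
for any $\vv\in V$. Expanding the three-fold Weyl product via the commutation rule \eqref{eq:commrel} produces $\overline{m(\vv,-\vu)}\,\overline{m(\vv-\vu,-A\vv)}\,W(-(A-I)\vv-\vu)$. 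Choosing $\vv = -(A-I)^{-1}\vu$ makes the argument vanish, so $W(\vnull)=\id$, and the right-hand side collapses to a scalar phase times $\tr{U(A)}$.

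The only remaining work is to identify this phase as $m(\vu,(A-I)^{-1}\vu)$. Writing $\vz=(A-I)^{-1}\vu$ so that $A\vz=\vu+\vz$, the surviving phase is $\overline{m(-\vz,-\vu)}\,\overline{m(-\vz-\vu,A\vz)}$. Applying the cocycle identity to the triple $(-\vz,-\vu,A\vz)$ and using $m(-\vz,\vz)=1$ collapses it to $\overline{m(-\vu,A\vz)}=\overline{m(-\vu,\vu+\vz)}$; a second application of the cocycle to $(-\vu,\vu,\vz)$, together with $m(-\vu,\vu)=1$ and $m(\vnull,\vz)=1$, collapses it further to $m(\vu,\vz)$. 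This gives the stated identity with $c(A)=\tr{U(A)}$. Finally, $|c(A)|=1$ follows by computing $\no{U(A)}_{HS}^2$ two ways: as $\tr{\id}=|\F|$ by unitarity, and, via Hilbert--Schmidt orthogonality of the $W(\vu)$'s, as $|c(A)|^2|\F|$. I expect the main obstacle to be the careful bookkeeping of multiplier phases through the two cocycle reductions, where any sign or conjugation slip would corrupt the identification; the steps themselves are routine algebra once the right substitution $\vv=-(A-I)^{-1}\vu$ is made.
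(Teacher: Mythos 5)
Your proof is correct, and it reaches the formula by a genuinely different route from the paper's. The paper also expands $U(A)=\sum_{\vu}\lam(\vu)W(\vu)$ in the Weyl basis, but it determines the coefficients indirectly: comparing the two expansions of $U(A)W(\vv)=W(A\vv)U(A)$ yields the functional equation $\lam(\vx)/\lam(\vy)=m(\vx,(A-I)^{-1}(\vx-\vy))/m(A(A-I)^{-1}(\vx-\vy),\vy)$, which is then solved through a chain of multiplier manipulations that explicitly invoke the $T$-invariance of $m$ and the fact that $-I$ belongs to the (maximal) torus. You instead compute each Fourier coefficient $\tr{W(\vu)^*U(A)}$ in closed form by conjugating inside the trace with $W(-(A-I)^{-1}\vu)$, which collapses the Weyl product to a scalar times $\tr{U(A)}$ and reduces the identification of that scalar to two applications of the cocycle identity together with $m(\vu,-\vu)=m(\vnull,\vu)=1$; I checked both cocycle reductions and they are right. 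What this buys you is that $T$-invariance of $m$ is never used --- only the exact intertwining relation \eqref{eq:def_metap}, the invertibility of $A-I$ (which you justify correctly: every non-identity element of a nonsplit torus is either itself nonsplit or equal to $-I$ in odd characteristic), and Stone--von Neumann orthogonality --- so your argument in fact establishes the formula for any $A\in\SL$ with $A-I$ invertible that admits an exactly intertwining unitary. Your normalization $|c(A)|=|\tr{U(A)}|=1$ is the same unitarity/Hilbert--Schmidt computation as the paper's. One cosmetic slip: the three-fold product is expanded using the multiplier relation $W(\vx)W(\vy)=\overline{m(\vx,\vy)}\,W(\vx+\vy)$, not the commutation rule \eqref{eq:commrel} that you cite; the phases you write down are nevertheless the correct ones.
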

\begin{proof}
By Proposition \ref{prop:SvNrep}, we can expand the operator $U(A)$ with respect to the basis $\{W(\vu)\mid\vu\in V\}$, that is,
$$
U(A) = \sum_{\vu\in V} \lam (\vu) W(\vu)
$$
for suitable coefficients $\lam (\vu)\in\C$. Equation \eqref{eq:def_metap} requires $U(A)W(\vv) = W(A\vv) U(A)$, which yields
$$
\sum_{\vu\in V} \lam (\vu) \overline{m(\vu,\vv)} W(\vu+\vv) = \sum_{\vu\in V} \lam (\vu) \overline{m(A\vv,\vu)} W(A\vv+\vu) \,.
$$
Comparing these two expansions, we have
$$
\lam (\vu-\vv) \overline{m(\vu-\vv,\vv)} = \lam (\vu-A\vv) \overline{m(A\vv,\vu-A\vv)} \,.
$$
Since $A-I$ is invertible, we can make the substitutions $\vx=\vu-\vv$ and $\vy=\vu-A\vv$. As at least one of the $\lam(\vy)$ is nonzero, in this way we obtain
\begin{align*}
\frac{\lam(\vx)}{\lam(\vy)} & = \frac{m(\vx,(A-I)^{-1}(\vx-\vy))}{m(A(A-I)^{-1}(\vx-\vy),\vy)} \,.
\end{align*}
By Proposition \ref{prop:Gquad-->Gmult}, the multiplier $m$ is $T$-invariant, hence 
\begin{align*}
\frac{\lam(\vx)}{\lam(\vy)} & = \frac{m(\vx,(A-I)^{-1}(\vx-\vy))}{m((A-I)^{-1}(\vx-\vy),A^{-1}\vy)} \\
& = \frac{m(\vx,(A-I)^{-1}(\vx-\vy)) m((A-I)^{-1}\vx,-(A-I)^{-1}\vy)}{m((A-I)^{-1}(\vx-\vy),A^{-1}\vy) m((A-I)^{-1}\vx,-(A-I)^{-1}\vy)} \\
& = \frac{m(A(A-I)^{-1}\vx,-(A-I)^{-1}\vy) m(\vx,(A-I)^{-1}\vx)}{m((A-I)^{-1}\vx,-A^{-1}(A-I)^{-1}\vy) m(-(A-I)^{-1}\vy,A^{-1}\vy)} \\
& = \frac{m(-A(A-I)^{-1}\vx,(A-I)^{-1}\vy) m(\vx,(A-I)^{-1}\vx)}{m(-A(A-I)^{-1}\vx,(A-I)^{-1}\vy) m(-(A-I)^{-1}\vy,A^{-1}\vy)} \\
& = \frac{m(\vx,(A-I)^{-1}\vx)}{m(-(A-I)^{-1}\vy,A^{-1}\vy)}
\end{align*}
where in the first and fourth equalities we used $T$-invariance of $m$ and the fact that $-I\in T$, and in the third one we employed the multiplier property of $m$. Then, being valid for all $\vx,\vy\in V$, this equation implies that, for all $\vu\in V$,
\begin{equation*}
m(\vu,(A-I)^{-1}\vu) = m(-(A-I)^{-1}\vu,A^{-1}\vu)
\end{equation*}
and
\begin{equation*}
\lam(\vu) = d(A) m(\vu,(A-I)^{-1}\vu) \,,
\end{equation*}
where $d(A)\in\C$ is a constant independent of $\vu$. From the unitarity condition $U(A)U(A)^* = \id$ it follows that
\begin{align*}
|\F| & = \tr{U(A)U(A)^*}\\
& = |d(A)|^2 \sum_{\vu,\vv\in V} m(\vu,(A-I)^{-1}\vu) \overline{m(\vv,(A-I)^{-1}\vv)} \tr{W(\vu)W(\vv)^*}\\
& = |d(A)|^2 |\F| \sum_{\vu\in V} |m(\vu,(A-I)^{-1}\vu)|^2 \\
& = |d(A)|^2|\F|^3
\end{align*}
hence $d(A) = c(A)/|\F|$, where $c(A)\in\T$ is a phase factor.
\end{proof}

Since a nonsplit torus is a cyclic group, the phase function $c:T\to\T$ appearing in \eqref{eq:formulaU} can always be chosen in such a way as to make $U$ an ordinary representation \cite[Proposition 2.1.1]{Karp}.

When $p\neq 2$ and $\Po\in\qq_V^{m_{\rm inv}} (\Omega,V,S)$, the previous theorem yields the expression
$$
U(A) = \frac{c(A)}{|\F|} \sum_{\vu\in V} \dual{2^{-1}\vu}{(A-I)^{-1}\vu} W(\vu)
$$
for the restriction of the metaplectic representation $U$ to the torus $T$. This formula should be compared with the analogous result first stated in \cite[Propositon 4]{BaIt86} for the particular case $\F=\Z_p$ with $p\in 4\Z-1$. See also \cite[Lemma 2]{Appleby05JMP} for the case $\F=\Z_p$, and \cite[Eqs.~(45), (47), (50), (52)]{Vourdas04}, \cite[item (1) in Proposition 4]{Vourdas08JFAA} for an arbitrary $\F$. In the latter case, an alternative construction is also provided in \cite{GuHa07}. (Note: references \cite{Vourdas04,Vourdas08JFAA,Appleby05JMP} allow to compute the expression of $U(A)$ when the generator $A$ has the form \eqref{eq:A_triang_nonsplit}).

If $p=2$, up to our knowledge the only analogues of \eqref{eq:formulaU} that can be found in the literature are \cite[Equation (13)]{Chau05} and the constructions described in \cite[Appendix B]{GiHoWo04} and \cite[Section 3.2.1]{Su07} (cf.~also the expression of a generic Clifford unitary given in \cite[Theorem 6]{DeDM03}). However, we remark that all these references provide an operator $U_A$ satisfying the weaker covariance condition \eqref{eq:GiHoWo} in place of \eqref{eq:def_metap}, hence not satisfying the covariance condition \eqref{eq:defU} in general (see the explanation in Remark \ref{rem:phase}).

\section{An example: the qubit case}\label{sec:spin1/2}

In this section, we apply the theory developed in the previous part to the simplest situation in which $\F=\Z_2$. Even in this elementary application, we will encounter all the special features of the case in characteristic $p=2$ that we described in the previous two sections. The next characterization of $\qq_V(\Omega,V)$ in the case $\F=\Z_2$ should be compared with the similar one obtained by different means in \cite[Section VI]{GiHoWo04}.

We use the explicit realization of the affine space $(\Omega,V)$ described in Remark \ref{rem:explicit1}, that is, $\Omega = V = \Z_2^2$. There exists a unique symplectic form $S$ on $V$, which is given by $\sym{\ve_1}{\ve_2}=\sym{\ve_2}{\ve_1}=1$, where $\{\ve_1=(1,0)^T,\,\ve_2=(0,1)^T\}$ is the standard basis of $\Z_2^2$. The $3$ directions of $\Omega$ are the subspaces
$$
\ss = \{\F\ve_1,\, \F\ve_2,\, \F(\ve_1+\ve_2)\} \,,
$$
and the corresponding sets of parallel lines in $\Omega$ are
\begin{align*}
\Af{\F\ve_1}{\Omega} & = \left\{\left\{(0,0)^T,(1,0)^T\right\} , \, \left\{(0,1)^T,(1,1)^T\right\}\right\}\\
\Af{\F\ve_2}{\Omega} & = \left\{\left\{(0,0)^T,(0,1)^T\right\} , \, \left\{(1,0)^T,(1,1)^T\right\}\right\}\\
\Af{\F(\ve_1+\ve_2)}{\Omega} & = \left\{\left\{(0,0)^T,(1,1)^T\right\} , \, \left\{(0,1)^T,(1,0)^T\right\}\right\} \,.
\end{align*}

The following projective representation $W$ of $V$ in the Hilbert space $\hh = \C^2$ is a Weyl system for the symplectic space $(V,S)$
$$
W(\ve_1) = \sigma_1\,,\qquad W(\ve_2) = \sigma_2\,,\qquad W(\ve_1+\ve_2) = \sigma_3 \,,
$$
where $\sigma_1$, $\sigma_2$, $\sigma_3$ are the three Pauli matrices, with
\begin{gather*}
\sigma_i^2 = \id\,, \qquad \sigma_i \sigma_j = - \sigma_j \sigma_i \mbox{ if $i\neq j$} \\
\sigma_1 \sigma_2 = i\sigma_3\,,\qquad \sigma_2 \sigma_3 = i\sigma_1\,,\qquad \sigma_3 \sigma_1 = i\sigma_2\,.
\end{gather*}
The multiplier of $W$ is
\begin{align*}
m(\ve_1,\ve_2) & = m(\ve_1+\ve_2,\ve_1) = m(\ve_2,\ve_1+\ve_2) = -i\\
m(\ve_2,\ve_1) & = m(\ve_1,\ve_1+\ve_2) = m(\ve_1+\ve_2,\ve_2) = i\\
m(\ve_1,\ve_1) & = m(\ve_2,\ve_2) = m(\ve_1+\ve_2,\ve_1+\ve_2) = 1 \,.
\end{align*}

By Theorem \ref{teo:quad=mult}, there exists an equivalence class $\qq^m_V(\Omega,V,S)$ of $V$-covariant quadrature systems having associated multiplier $m$, and such a class is unique. Choosing the origin $o=(0,0)^T\in\Omega$, formula \eqref{eq:defPbis} yields the following explicit expression of an element $\Po\in\qq^m_V(\Omega,V,S)$
\begin{align}\label{eq:quadZ2}
\begin{aligned}
\Po(\{(0,0)^T,(1,0)^T\}) & = \frac{1}{2}(\id+\sigma_1) \qquad \Po(\{(0,1)^T,(1,1)^T\}) = \frac{1}{2}(\id-\sigma_1)\\
\Po(\{(0,0)^T,(0,1)^T\}) & = \frac{1}{2}(\id+\sigma_2) \qquad \Po(\{(1,0)^T,(1,1)^T\}) = \frac{1}{2}(\id-\sigma_2)\\
\Po(\{(0,0)^T,(1,1)^T\}) & = \frac{1}{2}(\id+\sigma_3) \qquad \Po(\{(1,0)^T,(0,1)^T\}) = \frac{1}{2}(\id-\sigma_3)
\end{aligned}
\end{align}

The multiplier $m$ and its complex conjugate $\overline{m}$ are the only two elements in the set $\mm(V,S)$. Therefore, the two equivalence classes $\qq_V^m(\Omega,V,S)$ and $\qq_V^{\overline{m}}(\Omega,V,S)$ are the only two classes in $\qq_V(\Omega,V,S)\equiv\qq_V(\Omega,V)$. A quadrature system $\Po'\in\qq_V^{\overline{m}}(\Omega,V,S)$ can be obtained by interchanging the Pauli matrices $\sigma_1$ and $\sigma_2$ in the definition \eqref{eq:quadZ2} of $\Po$.

The symplectic group $\SL$ is the semidirect product of an order $3$ normal cyclic subgroup and an order $2$ group. More precisely, let $R,F\in\SL$ be defined by
\begin{align*}
R\ve_1 & = \ve_1+\ve_2\,,\quad R\ve_2 = \ve_1\,,\qquad F\ve_1 = \ve_2\,,\quad F\ve_2 = \ve_1\,,
\end{align*}
and let $T$ and $H$ be the cyclic subgroups generated by $R$ and $F$, respectively. Then, $|T|=3$, $|H|=2$, $T$ is normal in $\SL$ and $\SL$ is the semidirect product $H\rtimes T$, where the action of $H$ on $T$ is given by $FRF^{-1} = R^{-1}$. Moreover, $T$ is the unique maximal nonsplit torus in $\SL$.

It is easy to see that both multipliers $m$ and $\overline{m}$ are $T$-invariant. Therefore, $\Po,\Po'\in\qq_{T\rtimes V}(\Omega,V)$. However, according to Theorem \ref{teo:nonex_p2}, neither $\Po$ nor $\Po'$ is $(\SL\rtimes V)$-covariant. Indeed, one immediately checks that actually $\Po_F = \Po'$, that is, the quadrature systems $\Po$ and $\Po'$ are {\em similar} in the terminology of \cite{GiHoWo04}.

By Theorem \ref{teo:metap}, a projective representation $U$ of $T$ satisfying \eqref{eq:defU} is given by
\begin{align*}
U(R) & = \frac{c(R)}{2} \sum_{\vu\in V} m(\vu,R\vu) W(\vu) = \frac{c(R)}{2} [\id + i (\sigma_1 + \sigma_2 + \sigma_3) ]\\
& = c(R) \, \e^{i\frac{\pi}{3} \vec{n}\cdot\vec{\sigma}}
\end{align*}
where we used the fact that $(R-I)^{-1} = R$ and denoted
$$
\vec{n} = \frac{1}{\sqrt{3}} (1,1,1) \qquad \vec{\sigma} = (\sigma_1,\sigma_2,\sigma_3)\,.
$$
In order to determine the phase factor $c(R)$ which turns $U$ into an ordinary representation, we impose the condition $\id = U(R^3) = U(R)^3 = -c(R)^3\id$, which implies that $c(R)$ must be any cubic root of $-1$.

We conclude this section observing that the representation $U$ can be extended to a projective representation of the whole group $\SL$ in such a way that the covariance relation
$$
U(A) W(\vv) U(A)^* = a(A,\vv) W(A\vv)\qquad \forall \vv\in V,\, A\in\SL
$$
is satisfied for some choice of the cocycle $a : \SL\times V \to \T$ (see Remark \ref{rem:phase}). Indeed, this is done by defining the unitary operator
$$
U(F) = \e^{i\frac{\pi}{2} \vec{m}\cdot\vec{\sigma}} \quad \text{with} \quad \vec{m} = \frac{1}{\sqrt{2}} (1,-1,0) \,,
$$
which is such that
$$
U(F)^2 = -\id\,, \qquad U(F)U(R)U(F^{-1}) = -U(FRF^{-1})
$$
and
$$
U(F)W(\vv)U(F)^* = -W(F\vv) \qquad \forall \vv\in V\,.
$$
Note that, as expected, the operator $U(F)$ does not intertwine $\Po$ with $\Po_F$, since
\begin{align*}
U(F)\Po(\{(0,0)^T,(1,0)^T\})U(F)^* & = \Po(\{(1,0)^T,(1,1)^T\}) \\
& \neq \Po_F(\{(0,0)^T,(1,0)^T\}) \\
U(F)\Po(\{(0,1)^T,(1,1)^T\})U(F)^* & = \Po(\{(0,0)^T,(0,1)^T\}) \\
& \neq \Po_F(\{(0,1)^T,(1,1)^T\}) \\
U(F)\Po(\{(0,0)^T,(1,1)^T\})U(F)^* & = \Po(\{(0,1)^T,(1,0)^T\}) \\
& \neq \Po_F(\{(0,0)^T,(1,1)^T\}) \,.
\end{align*}

\section{Conclusions}

We have classified all the equivalence classes of unitarily conjugated $V$-covariant MUBs for an affine space $(\Omega,V)$ over a finite field $\F$. We have shown that such classes are in one-to-one correspondence with a special family of multipliers of $V$, which we called Weyl multipliers. By studying the invariance properties of Weyl multipliers with respect to different subgroups $G_0\subseteq {\rm GL}(V)$, we have been able to characterize $(G_0\rtimes V)$-covariant MUBs for all possible choices of $G_0$. In particular, we have found that $(\SL\rtimes V)$-covariant MUBs exist if and only if the field $\F$ has characteristic $p\neq 2$, and in this case their equivalence class is unique. In characteristic $p=2$, however, $(G_0\rtimes V)$-covariance can be still achieved if $G_0$ is a maximal nonsplit torus in $\SL$, and we used this fact to construct covariant MUBs that are the finite analogues of the rotated quadrature observables in quantum homodyne tomography.

Our classification employed the alternative description of MUBs by means of their associated families of spectral resolutions, which we called quadrature systems in the paper. As a remarkable fact, it turned out that the ranges of all $V$-covariant quadrature systems are unitarily conjugated. This peculiarity singles out $V$-covariant MUBs as very special objects in the whole set of maximal MUBs. Moreover, it also shows that their different symmetry properties are a mere effect of the choice of inequivalent labelings with phase-space lines. In other words, they are exclusively the result of different orderings of the same sets of bases.

\section*{Acknowledgements.}

The authors thank Paul Busch and Markus Grassl for bringing Zhu's recent paper \cite{Zh15} to their attention. JS and AT acknowledge financial support from the Italian Ministry of Education, University and Research (FIRB project RBFR10COAQ).

\appendix

\section{Projective representations}\label{app:projrep}

In this appendix, $G$ is a finite group with additive composition law. We recall that a (unitary) {\em multiplier} of $G$ is a map $m:G\times G\to \T$ such that
\begin{equation*}
m(g_1 + g_2 , g_3) m(g_1 , g_2) = m(g_1 , g_2 + g_3) m(g_2 , g_3) \qquad \forall g_1,g_2,g_3 \in G \, .
\end{equation*}
The set of multipliers of $G$ forms a group under pointwise multiplication and inverse. The multiplier $m$ is {\em exact} if there exists a function $a:G\to \T$ such that $m(g_1,g_2) = \overline{a(g_1) a(g_2)} a(g_1 + g_2)$ for all $g_1,g_2\in G$. Two multipliers $m_1$, $m_2$ of $G$ are {\em equivalent} if $\overline{m_1} m_2$ is exact. When $m_1$ and $m_2$ are equivalent, any function $a:G\to \T$ such that $m_2(g_1,g_2) = \overline{a(g_1) a(g_2)} a(g_1 + g_2) m_1(g_1,g_2)$ for all $g_1,g_2\in G$ is said to {\em intertwine} the multiplier $m_1$ with $m_2$. In this case, the function $a$ is uniquely determined up to multiplication by a homomorphism $\chi:G\to\T$; that is, if also the function $a':G\to\T$ intertwines $m_1$ with $m_2$, then $a'(g) = \chi(g)a(g)$ for all $g$, where $\chi(g_1+g_2) = \chi(g_1)\chi(g_2)$ for all $g_1,g_2\in G$.

For a fixed a multiplier $m$ of $G$, a (unitary) {\em projective representation} of $G$ in the Hilbert space $\hh$ with multiplier $m$ is a map $R:G\to \uelle{\hh}$ such that
$$
R(g_1 + g_2) = m(g_1 , g_2) R(g_1) R(g_2) \qquad \forall g_1,g_2\in G \, .
$$
Note that $R(0) = \overline{m(0,0)}\id$, since $R(0)^2 = \overline{m(0,0)} R(0)$ and hence $R(0)(R(0)-\overline{m(0,0)}\id) = 0$. This easily implies the relation $R(g)^* = m(g,-g)m(0,0)R(-g)$ for all $g\in G$.

Actually, the projective representation $R$ is an {\em ordinary representation} if $m=1$. As a consequence, there exists a function $a:G\to \T$ such that the projective representation $R_a = aR$ is an ordinary representation if and only if $m$ is exact.

Two projective representations $R_1$ and $R_2$ acting on the Hilbert spaces $\hh_1$ and $\hh_2$, respectively, are {\em equivalent} if there exists a unitary map $U:\hh_1\to\hh_2$ such that $R_2(g) = UR_1(g)U^*$ for all $g\in G$. We remark that equivalent projective representations must have the same multiplier.

We say that the projective representation $R$ is {\em irreducible} if it leaves invariant no nontrivial subspace of $\hh$. Clearly, irreducibility is a property of the whole equivalence class of $R$.
The next result follows from \cite[Theorem 7.5]{GQT85}. For completeness, we report a shorter proof adapted to the present simplified setting.

\begin{proposition}\label{prop:exrep}
Suppose $m$ is a multiplier of $G$. Then there exists an irreducible projective representation of $G$ with multiplier $m$.
\end{proposition}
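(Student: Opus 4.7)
The plan is to construct an explicit projective representation with multiplier $m$ on a finite-dimensional Hilbert space, and then pass to any irreducible subrepresentation, which will automatically inherit the multiplier.

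I would begin by defining the \emph{left regular projective representation} on $\ell^2(G)$ with canonical basis $\{\delta_h\}_{h\in G}$ by
$$
L(g)\delta_h = \overline{m(g,h)}\,\delta_{g+h} \qquad \forall g,h\in G \,.
$$
Each $L(g)$ permutes the basis up to phase factors and is therefore unitary. To verify that $L$ is a projective representation with multiplier $m$, I would apply the cocycle identity
$$
m(g_1+g_2,h)\,m(g_1,g_2) = m(g_1,g_2+h)\,m(g_2,h)
$$
to rewrite
$$
L(g_1)L(g_2)\delta_h = \overline{m(g_2,h)}\,\overline{m(g_1,g_2+h)}\,\delta_{g_1+g_2+h} = \overline{m(g_1,g_2)}\,\overline{m(g_1+g_2,h)}\,\delta_{g_1+g_2+h} \,,
$$
which is exactly $L(g_1)L(g_2) = \overline{m(g_1,g_2)}\,L(g_1+g_2)$, i.e.\ $L(g_1+g_2)=m(g_1,g_2)L(g_1)L(g_2)$.

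Since $G$ is finite, $\ell^2(G)$ is finite-dimensional. Each $L(g)$ is unitary, so for any $L$-invariant subspace $\kk\subseteq\ell^2(G)$ the orthogonal complement $\kk^\perp$ is also $L$-invariant. Iterating this observation, $L$ decomposes as a finite orthogonal direct sum of irreducible projective subrepresentations $L_1,\ldots,L_k$ acting on subspaces $\hh_1,\ldots,\hh_k$. For each $i$, the restriction $L_i(g) = L(g)\big|_{\hh_i}$ satisfies
$$
L_i(g_1+g_2) = L(g_1+g_2)\big|_{\hh_i} = m(g_1,g_2)\,L(g_1)L(g_2)\big|_{\hh_i} = m(g_1,g_2)\,L_i(g_1)L_i(g_2) \,,
$$
so $L_i$ is still a projective representation with multiplier $m$. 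Picking any of the $L_i$ yields the desired irreducible projective representation.

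The argument is essentially routine: the only genuine content is Step~1, where one must choose the correct convention for the phase defining $L(g)$ so that the cocycle identity for $m$ produces precisely the multiplier $m$ (rather than its conjugate or a cohomologous twist); the remaining steps are standard consequences of Maschke's theorem applied in the unitary setting and of the fact that restricting a projective representation to an invariant subspace preserves its multiplier.
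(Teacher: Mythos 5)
Your proposal is correct and follows essentially the same route as the paper: both construct a projective regular representation on $\ell^2(G)$ with multiplier $m$ (the paper uses $[R(g)\phi](x)=\overline{m(x,g)}\phi(x+g)$, you use the left-translation variant $L(g)\delta_h=\overline{m(g,h)}\delta_{g+h}$, and your cocycle computation verifying the multiplier is accurate) and then restrict to an irreducible invariant subspace. The only difference is that you spell out the complete-reducibility step that the paper leaves implicit.
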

\begin{proof}
As in the proof of \cite[Theorem 7.5]{GQT85}, for $g\in G$ we define the following linear map $R(g) : \ell^2 (G) \to \ell^2 (G)$, where $\ell^2 (G)$ is the usual Hilbert space of complex functions on $G$
$$
[R(g)\phi] (x) = \overline{m(x,g)} \phi(x+g) \qquad \forall \phi\in\ell^2 (G) \, .
$$
It is immediately checked that $R$ is a projective representation of $G$ in $\ell^2 (G)$ with multiplier $m$. Restricting it to some irreducible subspace of $\ell^2 (G)$ we get the claim.
\end{proof}

The next easy sufficient condition for a projective representation to have an exact multiplier turns out to be quite useful.

\begin{proposition}\label{prop:trivmult}
Suppose $R$ is a projective representation of $G$ in the Hilbert space $\hh$. If for some $1$-dimensional subspace $\hh_0\subseteq\hh$ one has $R(g)\hh_0 = \hh_0$ for all $g\in G$, then the multiplier of $R$ is exact.
\end{proposition}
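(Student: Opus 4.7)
The plan is to exploit the one-dimensional invariant subspace to extract a function $a:G\to\T$ that trivializes $m$. Since $\hh_0$ is $1$-dimensional and $R(g)$ is unitary with $R(g)\hh_0=\hh_0$, every $R(g)$ acts on $\hh_0$ by a phase. So I would first fix a unit vector $\phi_0\in\hh_0$ and define $a:G\to\T$ by the equation $R(g)\phi_0 = a(g)\phi_0$; this is well-defined because $\hh_0=\C\phi_0$ and $\|R(g)\phi_0\|=\|\phi_0\|=1$.

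Next, I would apply the defining projective identity $R(g_1+g_2) = m(g_1,g_2)R(g_1)R(g_2)$ to the vector $\phi_0$. On the left-hand side we get $a(g_1+g_2)\phi_0$; on the right-hand side, iterating the definition of $a$, we get $m(g_1,g_2)a(g_1)a(g_2)\phi_0$. Comparing the two scalars yields
\begin{equation*}
a(g_1+g_2) = m(g_1,g_2)\,a(g_1)a(g_2) \qquad \forall g_1,g_2\in G,
\end{equation*}
which is exactly the relation $m(g_1,g_2) = \overline{a(g_1)a(g_2)}\,a(g_1+g_2)$ required for $m$ to be exact. No subtlety arises at the identity: one just notes that $R(0)=\overline{m(0,0)}\,\id$ forces $a(0)=\overline{m(0,0)}$, which is consistent with setting $g_1=g_2=0$ in the displayed equation.

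There is essentially no hard step here; the only thing to watch is the convention that $m$ need not be normalized (i.e.\ $m(0,0)$ may differ from $1$), but this is automatically absorbed into the definition of $a(0)$. The whole argument is a single application of the projective multiplication law to a joint eigenvector of the family $\{R(g)\}_{g\in G}$.
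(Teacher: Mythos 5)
Your proof is correct and is essentially identical to the paper's: both pick a vector spanning $\hh_0$, read off the phases $a(g)$ from $R(g)\phi_0=a(g)\phi_0$, and apply the projective multiplication law to that vector to obtain $m(g_1,g_2)=\overline{a(g_1)a(g_2)}\,a(g_1+g_2)$. The extra remark about $a(0)$ and the non-normalized convention for $m$ is harmless and consistent with the paper's conventions.
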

\begin{proof}
If $\phi$ is a nonzero vector in $\hh_0$, then for all $g\in G$ there exists a scalar $a(g)\in\T$ such that $R(g)\phi = a(g)\phi$. Therefore,
\begin{align*}
a(g_1+g_2)\phi & = R(g_1+g_2)\phi = m(g_1,g_2)R(g_1)R(g_2)\phi \\
& = m(g_1,g_2)a(g_1)a(g_2)\phi \,,
\end{align*}
that is, $m(g_1,g_2)=\overline{a(g_1)a(g_2)}a(g_1+g_2)$.
\end{proof}

The following is \cite[Lemma 7.2]{Klep65}. Again, we add a simpler proof for the reader's convenience.

\begin{proposition}\label{prop:abelrep}
Suppose $G$ is an abelian group, and let $m$ be a multiplier of $G$. If $m(g_1,g_2) = m(g_2,g_1)$ for all $g_1,g_2\in G$, then $m$ is exact.
\end{proposition}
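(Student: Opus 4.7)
The plan is to leverage Proposition \ref{prop:exrep} to realize $m$ as the multiplier of some irreducible projective representation, and then exploit the symmetry hypothesis to force that representation to be one-dimensional, from which exactness is immediate.

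First, I invoke Proposition \ref{prop:exrep} to pick an irreducible projective representation $R$ of $G$ on some Hilbert space $\hh$ whose multiplier is $m$. Rewriting the multiplier relation as $R(g_1)R(g_2) = \overline{m(g_1,g_2)}\,R(g_1+g_2)$ and, similarly, $R(g_2)R(g_1) = \overline{m(g_2,g_1)}\,R(g_2+g_1)$, the commutativity of $G$ together with the assumed symmetry $m(g_1,g_2)=m(g_2,g_1)$ makes the two right-hand sides coincide. Hence $R(g_1)R(g_2) = R(g_2)R(g_1)$ for all $g_1,g_2\in G$, so $\{R(g)\mid g\in G\}$ is a commuting family of unitary operators.

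A commuting family of unitaries on a finite-dimensional Hilbert space admits a simultaneous eigenvector $\phi\neq 0$, and the one-dimensional subspace $\C\phi$ is invariant under every $R(g)$. Irreducibility of $R$ therefore forces $\hh = \C\phi$, i.e., $\dim\hh = 1$. Identifying $\hh$ with $\C$, each $R(g)$ acts as multiplication by a scalar $a(g)\in\T$, and the multiplier relation reduces to the numerical identity $a(g_1+g_2) = m(g_1,g_2)\,a(g_1)a(g_2)$, which is precisely $m(g_1,g_2) = \overline{a(g_1)a(g_2)}\,a(g_1+g_2)$. This exhibits $m$ as exact.

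There is essentially no hard step: the only thing to watch is the direction of the multiplier relation and the conjugation convention, so that the formula obtained matches the definition of exactness recalled at the start of Appendix \ref{app:projrep}. Once Proposition \ref{prop:exrep} is in hand, the proof is purely formal.
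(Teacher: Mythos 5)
Your proof is correct and follows essentially the same route as the paper's: both invoke Proposition \ref{prop:exrep} to realize $m$ as the multiplier of an irreducible projective representation $R$ and then use the symmetry of $m$ together with commutativity of $G$ to force $\dim\hh=1$, whence exactness. The only difference is cosmetic — the paper packages the commutativity observation as abelianness of the central extension $G_m=G\times\T$ and finishes by citing Proposition \ref{prop:trivmult}, whereas you check directly that the operators $R(g)$ commute, simultaneously diagonalize them, and read off the intertwining function $a$ explicitly.
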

\begin{proof}
Let $R$ be an irreducible projective representation of $G$ with multiplier $m$. It exists by Proposition \ref{prop:exrep}. Define a group law on the set $G_m:=G\times\T$ by
\[
(g_1,z_1)(g_2,z_2)=(g_1+g_2,z_1z_2\overline{m(g_1,g_2)})
\]
Since $m$ is symmetric, $G_m$ is abelian.
It is well known that $R$ lifts to an ordinary representation $R_m$ of $G_m$ as follows:
\[
R_m(g,z)=z\,R(g)\,.
\]
Clearly $R_m$ is irreducible (because it has the same commutant as $R$) hence it is $1$-dimensional. So also $R$ is $1$-dimensional, and $m$ is exact by Proposition \ref{prop:trivmult}.
\end{proof}

We conclude this appendix with the following alternative version of \cite[Lemma 7.1]{Klep65}. Before its statement, we recall that a {\em bicharacter} of an abelian group $G$ is a map $b:G\times G\to\T$ such that $b(g,\cdot)$ and $b(\cdot,g)$ are characters (i.e., $1$-dimensional homomorphisms) of $G$ for all $g\in G$. The bicharacter $b$ is {\em antisymmetric} if $b(g_1,g_2) = \overline{b(g_2,g_1)}$ for all $g_1,g_2\in G$.

\begin{proposition}\label{prop:bichar}
Suppose $G$ is abelian, and let $R$ be a projective representation of $G$. Then there exists a unique antisymmetric bicharacter $b$ of $G$ such that
\begin{equation}\label{eq:bichar}
R(g_1)R(g_2)R(g_1)^* = b(g_1,g_2) R(g_2) \qquad \forall g_1,g_2\in G \,.
\end{equation}
\end{proposition}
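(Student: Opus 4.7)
The plan is to construct $b$ explicitly from the multiplier of $R$, verify the required properties by direct manipulation, and deduce uniqueness from nonvanishing of the $R(g)$.

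First I would extract the formula for $b$. Let $m$ be the multiplier of $R$, so that $R(g_1+g_2) = m(g_1,g_2)\,R(g_1)R(g_2)$, or equivalently $R(g_1)R(g_2) = \overline{m(g_1,g_2)}\,R(g_1+g_2)$. Since $G$ is abelian, applying the same identity to the swapped product gives $R(g_2)R(g_1) = \overline{m(g_2,g_1)}\,R(g_1+g_2)$, so
\[
R(g_1)R(g_2) = \overline{m(g_1,g_2)}\,m(g_2,g_1)\,R(g_2)R(g_1).
\]
Multiplying on the right by $R(g_1)^* = R(g_1)^{-1}$ yields exactly $R(g_1)R(g_2)R(g_1)^* = \overline{m(g_1,g_2)}\,m(g_2,g_1)\,R(g_2)$, so I would define $b(g_1,g_2) := \overline{m(g_1,g_2)}\,m(g_2,g_1)$.

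Next I would verify that $b$ is an antisymmetric bicharacter. Antisymmetry $b(g_2,g_1) = \overline{b(g_1,g_2)}$ is immediate by swapping arguments in the definition. For additivity in the first variable I would avoid unfolding the cocycle identity for $m$ and instead exploit \eqref{eq:bichar} itself: the scalar $m(g_1,g_2)$ coming from $R(g_1+g_2)=m(g_1,g_2)R(g_1)R(g_2)$ and its conjugate from $R(g_1+g_2)^*$ cancel because $|m(g_1,g_2)|=1$, so
\[
R(g_1+g_2)R(g_3)R(g_1+g_2)^* = R(g_1)\bigl(R(g_2)R(g_3)R(g_2)^*\bigr)R(g_1)^*.
\]
Applying \eqref{eq:bichar} twice to the right-hand side and once to the left-hand side yields $b(g_1+g_2,g_3)\,R(g_3) = b(g_1,g_3)\,b(g_2,g_3)\,R(g_3)$, and additivity follows since $R(g_3)\neq 0$. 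Additivity in the second argument then follows from antisymmetry combined with the one just proved.

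Uniqueness is the cheapest part: if $b$ and $b'$ both satisfy \eqref{eq:bichar}, then $(b(g_1,g_2)-b'(g_1,g_2))\,R(g_2)=0$, and nonvanishing of the unitary $R(g_2)$ forces $b=b'$. I do not expect a real obstacle; the only point demanding care is bookkeeping of conjugates under the paper's multiplier convention (this is what fixes the asymmetric appearance of $m(g_1,g_2)$ versus $m(g_2,g_1)$ in the definition of $b$), together with the tacit but crucial observation that scalars in $\T$ slide freely through the operators $R(g)$.
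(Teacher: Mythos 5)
Your proof is correct and follows essentially the same route as the paper: both define $b$ from the multiplier $m$ and prove additivity in the first argument via the cancellation of the scalars $m(g_1,g_2)$ and $\overline{m(g_1,g_2)}$ in $R(g_1+g_2)R(g_3)R(g_1+g_2)^*$, with the second argument handled by antisymmetry. Your formula $b(g_1,g_2)=\overline{m(g_1,g_2)}\,m(g_2,g_1)$, obtained by comparing $R(g_1)R(g_2)$ with $R(g_2)R(g_1)$ directly, is a tidier expression than the paper's (which expands $R(g_1)^*$ as a scalar times $R(-g_1)$), and it makes antisymmetry immediate rather than requiring the separate operator identity the paper uses.
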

\begin{proof}
Since $R$ is a projective representation, \eqref{eq:bichar} holds for $b(g_1,g_2) = m(g_1,-g_1)m(0,0)\overline{m(g_1,g_2) m(g_1+g_2,-g_1)}$, where $m$ is the multiplier of $R$. Clearly, $b(g_1,g_2)\in\T$. Moreover, for fixed $g\in G$, the map $b(\cdot,g): G\to \T$ is a character, since
\begin{align*}
& b(g_1+g_2,g) R(g) = R(g_1+g_2)R(g)R(g_1+g_2)^* \\
&\qquad\quad = R(g_1)R(g_2)R(g)R(g_1)^*R(g_2)^* = b(g_1,g)b(g_2,g) R(g) \,.
\end{align*}
Equation \eqref{eq:bichar} also reads
$$
R(g_2)R(g_1)R(g_2)^* = \overline{b(g_1,g_2)} R(g_1) \qquad \forall g_1,g_2\in G \,,
$$
hence by comparison $b(g_2,g_1) = \overline{b(g_1,g_2)}$. As a consequence, for all $g\in G$ also the map $b(g,\cdot): G\to \T$ is a character, and the bicharacter $b$ is antisymmetric.
\end{proof}

\section{A Weyl multiplier in characteristic $p=2$}\label{app:mult}

The demonstration of the existence of Weyl multipliers provided in the proof of Proposition \ref{prop:SvNmult} is nonconstructive. Indeed, although it is easy to find an example of a Weyl multiplier in characteristic $p\neq 2$ (see Example \ref{ex:multodd}), the same task is more involved when $p=2$. In this appendix, we are going to fill this gap and explicitely exhibit a Weyl multiplier $m$ for the symplectic space $(V,S)$ when the characteristic of the scalar field $\F$ is even. Moreover, given a maximal nonsplit torus $T\subset\SL$, we will also make use of $m$ to construct a $T$-invariant multiplier in $\mm(V,S)$.

The construction is based on the fact that, in characteristic $p=2$, for all $\alpha\in\F_*$ there exists a linear basis $\{\eps^\alpha_1 , \eps^\alpha_2 ,\ldots , \eps^\alpha_n\}$ of $\F$ over $\Z_2$ such that $\Tr{(\alpha \eps^\alpha_i \eps^\alpha_j)} = \delta_{i,j}$ for all $i,j\in\{1,2,\ldots,n\}$. Indeed, by \cite[Theorem 4]{SeLe80} (see also \cite{Ima83,JuMeVa90}), there exists a linear basis $\{\omega_1 , \omega_2 , \ldots , \omega_n\}$ of $\F$ over $\Z_2$ such that $\Tr{(\omega_i \omega_j)} = \delta_{i,j}$ for all $i,j\in\{1,2,\ldots,n\}$. Since $p=2$, we have $\alpha = \gamma^2$ for some $\gamma\in\F_*$. Defining $\eps^\alpha_i = \gamma^{-1}\omega_i$, we then get a basis with the claimed property.

The square map $z\mapsto z^2$ is well defined from the field $\Z_2$ to the ring $\Z_4$. It follows that also the map $z\mapsto i^{z^2}$ is well defined from $\Z_2$ to $\T$. As $(z + t)^2 = z^2 + 2 zt + t^2$, we have $i^{(z + t)^2} = i^{z^2} i^{t^2} (-1)^{zt}$ for all $z,t\in\Z_2$.

For all $\alpha\in\F_*$, we use this fact to define the function $c_\alpha:\F\to\T$ with
$$
c_\alpha\left(\sum_{i=1}^n z_i \eps^\alpha_i\right) = \prod_{i=1}^n i^{z_i^2} \qquad \forall z_1,\ldots,z_n\in\Z_2 \,.
$$
Clearly, $c_\alpha(0) = 1$, and, by the previous paragraph,
$$
c_\alpha(\gamma+\delta) = c_\alpha(\gamma)c_\alpha(\delta)(-1)^{\Tr{(\alpha\gamma\delta)}} \qquad \forall\gamma,\delta\in\F \,.
$$

Now, choose a symplectic basis $\{\ve_1,\ve_2\}$ of $V$, and by means of it define the following multiplier $m_0$ of $V$
$$
m_0 (\alpha_1\ve_1 + \alpha_2\ve_2 \, , \, \beta_1\ve_1 + \beta_2\ve_2) = (-1)^{\Tr{(\beta_1 \alpha_2)}} \, .
$$
It is clear that $m_0$ satisfies item (i) of Definition \ref{def:WHmul}. We are going to find a multiplier $m$ equivalent to $m_0$ and fulfilling also condition (ii)  of Definition \ref{def:WHmul}. To this aim, for all $\alpha\in\F_*$ we fix a vector $\vv_\alpha = \ve_1 + \alpha\ve_2\in V$, and observe that, since $\ss = \{\F\ve_1,\F\ve_2\}\cup\{\F\vv_\alpha\mid\alpha\in\F_*\}$, a function $a:V\to\T$ can be defined as follows
$$
a(\vu) = \begin{cases}
c_\alpha(\gamma) & \mbox{ if } \vu = \gamma\vv_\alpha \mbox{ for some } \alpha \in\F_* \\
1 & \mbox{ if } \vu \in \F\ve_1 \cup \F\ve_2 
\end{cases} \, .
$$
We then claim that the multiplier $m$ of $V$ given by
$$
m(\vu,\vv) = \overline{a(\vu)a(\vv)} a(\vu+\vv) m_0 (\vu,\vv) \qquad \forall\vu,\vv\in V
$$
satisfies item (ii) of Definition \ref{def:WHmul}. Indeed, if $\vd_1,\vd_2\in\F\ve_1$ or $\vd_1,\vd_2\in\F\ve_2$, then $m(\vd_1,\vd_2) = 1$ by definitions. If instead $\vd_1,\vd_2\in\F\vv_\alpha$, with $\vd_i = \gamma_i\vv_\alpha$, then
\begin{align*}
m(\vd_1,\vd_2) & = \overline{c_\alpha(\gamma_1)c_\alpha(\gamma_2)}c_\alpha(\gamma_1 + \gamma_2)m_0(\gamma_1\ve_1 + \gamma_1\alpha\ve_2 \,,\, \gamma_2\ve_1 + \gamma_2\alpha\ve_2) \\
& = \overline{c_\alpha(\gamma_1)c_\alpha(\gamma_2)}c_\alpha(\gamma_1)c_\alpha(\gamma_2)(-1)^{\Tr{(\alpha\gamma_1\gamma_2)}} (-1)^{\Tr{(\alpha\gamma_1\gamma_2)}} = 1 \,.
\end{align*}
Therefore, $m\in\mm(V,S)$.

Finally, if $T\subset\SL$ is a maximal nonsplit torus, by the proof of Proposition \ref{prop:exnsplit} the multiplier $m'$ given by
$$
m'(\vu,\vv) = \prod_{A\in T} m(A\vu,A\vv) \qquad \forall\vu,\vv\in V
$$
is $T$-invariant element in $\mm(V,S)$.

\end{document}